\documentclass[11pt, reqno]{amsart}

\usepackage{amsmath}
\usepackage{amssymb}
\usepackage{amsthm}
\usepackage{cite}
\usepackage{doi}
\usepackage{mathtools}
\usepackage{microtype}

\newtheorem{theorem}{Theorem}
\newtheorem{definition}{Definition}
\newtheorem{remark}{Remark}
\newtheorem{example}{Example}

\newtheorem{corollary}{Corollary}
\newtheorem{proposition}{Proposition}
\newtheorem{lemma}{Lemma}

\newcommand{\gbinom}[3]{\Bigl[ \hspace*{-0.2cm} \begin{array}{c} {\hspace*{0.2cm} #1} \\ 
{\hspace*{-0.3cm} #2} \hspace*{-0.5cm} \end{array}\Bigr]_{#3}}  

\allowdisplaybreaks

\newcommand{\sdpinner}[2]{\left\langle #1,#2\right\rangle}

\begin{document}

\title[Shannon Capacity for Lexicographic Graph Products]
{Shannon Capacity and Related Graph Invariants for Lexicographic Products}

\author[I. Sason]{Igal Sason}

\address{\normalfont \newline 
Igal Sason is with the Viterbi Faculty of Electrical and Computer Engineering and the Faculty
of Mathematics, Technion--Israel Institute of Technology, Haifa 3200003, Israel.
E-mail: \texttt{eeigal@technion.ac.il.}}

\begin{abstract}
This paper studies the Shannon capacity of lexicographic products of finite
simple graphs, together with the Lov\'{a}sz theta function and the fractional
Haemers number. The Shannon capacity is proved to be supermultiplicative under
lexicographic products in either order, and these products are compared with
the strong product. We explicitly construct three countably infinite families
of lexicographic powers based on the Schl\"{a}fli graph, the McLaughlin graph,
and its second subconstituent; in each family, pairing each member with its
complement yields strict supermultiplicativity and arbitrarily large
multiplicative gaps. Bounds and exact-capacity criteria for lexicographic
products are derived, and the resulting upper bounds are shown to be
incomparable. The capacities of lexicographic products involving Kneser
graphs, their complements, and \texorpdfstring{\(q\)}{q}-analogues of Kneser
graphs are determined. It is also shown that a lexicographic product with 
a complete outer factor preserves the Shannon capacity of an arbitrary 
inner factor. The capacities of iterated lexicographic powers
are determined, including those of self-complementary graphs that are
vertex-transitive or strongly regular. Elementary, self-contained proofs are
also given for three known results: the multiplicativity of the Lov\'{a}sz
theta function and the fractional Haemers number under lexicographic products,
and the equality of the fractional and ordinary Lov\'{a}sz theta functions.
Finally, an open problem concerning the Shannon capacities of lexicographic
and strong products is posed.
\end{abstract}

\maketitle
\thispagestyle{empty}
\setcounter{page}{1}

\vspace*{-0.4cm}
\noindent {\bf Keywords.}
Lexicographic product; strong product; Shannon capacity of graphs; 
zero-error information theory; graph invariants; Lov\'{a}sz theta function; 
fractional Haemers number.

\smallskip 
\noindent {\bf 2020 MSC.} 05C35, 05C50, 05C69, 05C76, 94A15.

\section{Introduction}

The lexicographic product is a fundamental graph operation that arises
naturally by replacing each vertex of one graph with a copy of another.
For finite simple graphs \(G\) and \(H\), their \emph{lexicographic product},
denoted by \(G\circ H\) or \(G[H]\), has vertex set
\(V(G)\times V(H)\), where two distinct vertices \((g,h)\) and
\((g',h')\) are adjacent if and only if
\[
g\sim_G g'
\quad \vee \quad
\bigl(g=g' \; \wedge \; h\sim_H h'\bigr).
\]
Here, the relations \(\sim_G\) and \(\sim_H\) denote adjacency in \(G\)
and \(H\), respectively. Equivalently, each vertex \(g\in V(G)\) is
replaced by a copy of \(H\), and whenever \(g\sim_G g'\), every vertex
in the copy corresponding to \(g\) is joined to every vertex in the
copy corresponding to \(g'\).

The lexicographic product provides a natural setting for studying the
behavior of several important graph invariants, including the
Lov\'{a}sz theta function, the fractional Haemers number, and the
Shannon capacity. It also admits a natural interpretation in zero-error
information theory. To formulate this interpretation, we first recall
the graph-theoretic description of zero-error communication.

Given a discrete memoryless channel with input alphabet \(\mathcal{X}\),
its \emph{confusability graph} \(G\) has vertex set \(\mathcal{X}\), with
two distinct input symbols adjacent if they can produce a common channel
output and therefore cannot be distinguished with zero error probability 
\cite{Shannon56}. Thus, a collection of pairwise nonconfusable symbols
forms an independent set in \(G\). Consequently, the maximum number of
messages that can be transmitted without error in a single use of the
channel is \(\alpha(G)\), the \emph{independence number} of \(G\), defined
as the maximum cardinality of an independent set in \(G\).

For \(n\) channel uses, the resulting confusability graph is the
\emph{\(n\)-fold strong power} \(G^{\boxtimes n}\), whose vertex set is
\(V(G)^n\), with two distinct vertices adjacent if, in each coordinate,
their entries are either identical or adjacent in \(G\). Consequently,
the maximum number of messages that can be transmitted without error in
\(n\) channel uses is \(\alpha(G^{\boxtimes n})\). The \emph{Shannon capacity} 
of \(G\) is defined by
\begin{align}
\Theta(G) &\coloneqq
\sup_{n \geq 1} \sqrt[n]{\alpha\bigl(G^{\boxtimes n}\bigr)} \nonumber\\
&= \lim_{n \to \infty} \sqrt[n]{\alpha\bigl(G^{\boxtimes n}\bigr)},
\end{align}
and the limit exists and equals the supremum by Fekete's lemma, since
\begin{align}
\label{eq3: 06.08.26}
\alpha\bigl(G^{\boxtimes(m+n)}\bigr)
\geq \alpha\bigl(G^{\boxtimes m}\bigr) \; \alpha\bigl(G^{\boxtimes n}\bigr),
\qquad n,m \geq 1.
\end{align}
Thus, \(\Theta(G)\) is the asymptotic per-channel-use growth factor of
the maximum number of messages that can be transmitted with zero error
probability, whereas \(\log\Theta(G)\) is the corresponding asymptotic
exponential growth rate \cite{Shannon56}.

Despite the simplicity of its definition, \(\Theta(G)\) is notoriously
difficult to determine. Its exact value remains unknown even for some
small simple graphs, such as the cycle of length~\(7\), and the
independence numbers of successive strong powers can exhibit remarkably
complicated behavior. Alon and Lubetzky \cite{AlonL06} showed, in
particular, that the Shannon capacity cannot in general be approximated
within a subpolynomial factor in the number of vertices using any fixed,
but arbitrarily long, initial segment of the sequence
\[
\bigl\{\alpha(G^{\boxtimes n})\bigr\}_{n \geq 1} \, .
\]
The subject and its connections with graph powers, extremal
combinatorics, coding theory, and spectral methods are surveyed from
several perspectives in \cite{Alon02,Jurkiewicz14,LaviSason2026}.

A remarkable development was the introduction by Lov\'{a}sz of the theta
function \(\vartheta(G)\), which is computable to arbitrary prescribed
accuracy in polynomial time by semidefinite programming and satisfies
the sandwich inequalities
\begin{align}
\label{eq: sandwich}
\alpha(G) \leq \Theta(G) \leq \vartheta(G) \leq \chi(\overline{G}).
\end{align}
Thus, \(\vartheta(G)\) provides an upper bound on the Shannon capacity
of \(G\) and a lower bound on the chromatic number of its complement
\(\overline{G}\) \cite{Lovasz79}. In particular, by combining the upper
bound \(\vartheta(C_5)=\sqrt{5}\) with the standard construction showing
that \(\alpha(C_5^{\boxtimes 2})=5\), Lov\'{a}sz obtained the celebrated
identity
\[
\Theta(C_5)=\sqrt{5}.
\]

The Lov\'{a}sz theta function also lies at the heart of the sandwich
theorem and the interplay among orthonormal representations and
semidefinite optimization \cite{Alon19,Lovasz79,Lovasz19}, graph
parameters \cite{Alon19,Knuth94,Sason24}, and coding-theoretic bounds
\cite{Dalai_ISIT13a,Dalai13_ISIT13b,Dalai_IT13,DuanSW_IT13}.
Another important upper bound on the Shannon capacity is Haemers'
linear-algebraic minimum-rank bound \cite{Haemers79}, along with its
fractional variant \cite{BukhC19}. Hu, Tamo, and Shayevitz subsequently
developed a linear-programming bound that can improve upon both the
Haemers minimum-rank bound and the Lov\'{a}sz theta bound \cite{HuTS18};
see also \cite{LaviSason2026,Sason23,Sason24} for discussions of
theta-type, spectral, and related graph invariants.

We next describe the information-theoretic interpretation of the
lexicographic product. Let \(G\) and \(H\) be finite confusability
graphs, and suppose that the channel input alphabet is
\(V(G)\times V(H)\). Here, \(g\in V(G)\) serves as a
\emph{class label}, whereas \(h\in V(H)\) is a
\emph{within-class symbol}. For each \(g\in V(G)\), let
\[
H_g\coloneqq\bigl\{(g,h):h\in V(H)\bigr\}.
\]
Assume that the confusability graph induced on each class \(H_g\) is a
copy of \(H\), and that the confusability relations between distinct
classes depend only on their class labels. More precisely, for
\(g\neq g'\), every input in \(H_g\) is confusable with every input in
\(H_{g'}\) if \(g\sim_G g'\), whereas no input in \(H_g\) is confusable
with any input in \(H_{g'}\) if \(g\not\sim_G g'\). It follows that
\begin{align}
(g,h)\sim(g',h')
\quad\Longleftrightarrow\quad
(g\sim_G g')
\;\vee\;
\bigl(g=g'\,\wedge\,h\sim_H h'\bigr).
\end{align}
Thus, the confusability graph of the channel is exactly \(G\circ H\).

The uniformity of the confusability relations between distinct classes
is an explicit assumption of this model, rather than a consequence of
representing the channel inputs as ordered pairs. If within-class
symbols also affected confusability between distinct classes, the
resulting graph would not, in general, be a lexicographic product.
Accordingly, \(\Theta(G\circ H)\) is the zero-error capacity of this
class-structured channel.

\begin{example}
{\em
Let \(G=\mathrm{KG}(n,k)\) and \(H=\mathrm{KG}(m,\ell)\) be Kneser graphs, 
where \(n\geq 2k\) and \(m\geq 2\ell\); for background on these graphs, see
Section~\ref{subsection: Kneser graphs}. The channel inputs are pairs
\begin{align}
(A,B)\in
\binom{[n]}{k}\times\binom{[m]}{\ell}.
\end{align}
Two distinct inputs \((A,B)\) and \((A',B')\) are confusable if and only
if
\begin{align}
\bigl(A\cap A'=\varnothing\bigr)
\;\vee\;
\bigl(A=A'\,\wedge\,B\cap B'=\varnothing\bigr).
\end{align}
Hence, the confusability graph is the lexicographic product
\[
\mathrm{KG}(n,k)\circ\mathrm{KG}(m,\ell),
\]
whose Shannon capacity is determined in
Theorem~\ref{theorem: lexicographic products of Kneser-type graphs}.}
\end{example}

Lexicographic products also arise naturally in index coding and network
coding, where they provide a structured means of constructing larger
coding problems from smaller ones. Blasiak, Kleinberg, and Lubetzky
represented index-coding instances by directed hypergraphs and defined
a lexicographic product for such instances. By taking repeated
lexicographic products, they constructed families exhibiting
polynomially growing gaps between various combinatorial bounds and
achievable coding rates, including a polynomial separation between
linear and nonlinear network-coding rates
\cite{BlasiakKleinbergLubetzky2011}. Arbabjolfaei and Kim subsequently
proved that the broadcast rate is multiplicative under the lexicographic
product of side-information graphs \cite{ArbabjolfaeiKim2015}. They later
introduced a generalized lexicographic product for directed
side-information graphs and characterized the capacity region of the
composite index-coding problem in terms of the capacity regions of its
constituent subproblems \cite{ArbabjolfaeiKim2020}; see also the monograph
\cite{ArbabjolfaeiKim2018} for a comprehensive treatment of index coding.
Whereas these applications concern the construction and decomposition
of coding problems, the present work studies the Shannon capacity of
undirected lexicographic products, with the factor graphs interpreted as
confusability graphs.

The purpose of this paper is to study the Shannon capacity and related
graph invariants for lexicographic products. Our results fall into two
categories. First, we present alternative, elementary, and self-contained
proofs of several known results concerning the Lov\'{a}sz theta function
and the fractional Haemers number. Second, we derive new bounds and establish 
strict supermultiplicativity results for lexicographic products, determine 
the exact Shannon capacity for several families of such products, and pose 
an open problem concerning the relationship between the Shannon capacities 
of lexicographic and strong products.

We begin by recording structural properties of the lexicographic product
that will be used in the subsequent analysis, including the
multiplicativity of the independence and clique numbers. We then provide
alternative, elementary, and self-contained proofs of the
multiplicativity of the Lov\'{a}sz theta function and the fractional
Haemers number under lexicographic products, and of the equality between
the fractional and ordinary Lov\'{a}sz theta functions. Although these
three results are known, the proofs presented here are included for
their simplicity and because they provide a self-contained foundation
for the subsequent study of Shannon capacity.

The principal new results concern the Shannon capacity. We first prove that
it is supermultiplicative under the lexicographic product:
\begin{align}
\label{eq: supermultiplicative Theta}
\Theta(G\circ H) \geq \Theta(G) \; \Theta(H).
\end{align}
Because the lexicographic product is generally noncommutative, the
graphs \(G\circ H\) and \(H\circ G\) need not be isomorphic and must be
considered separately. After the natural identification of their vertex
sets, \(G\boxtimes H\) is a spanning subgraph of both \(G\circ H\) and
\(H\circ G\). We consequently obtain 
\begin{align}
\max \bigl\{ \Theta(G \circ H), \; \Theta(H\circ G) \bigr\} \leq \Theta(G\boxtimes H).
\end{align}

A central result of this paper shows that the supermultiplicativity
inequality for \(\Theta\) in
\eqref{eq: supermultiplicative Theta} can be strict. We explicitly
construct three countably infinite families of lexicographic powers
based on the Schl\"{a}fli graph, the McLaughlin graph, and its second
subconstituent. Pairing each member of these families with its complement
yields strict supermultiplicativity of Shannon capacity under the
lexicographic product, and arbitrarily large multiplicative gaps are
obtained within each family; see
Propositions~\ref{proposition: independence numbers of squared lexicographic products}
and~\ref{proposition: explicit graphs satisfying strict graph complement condition},
as well as
Theorem~\ref{theorem: strict supermultiplicativity of capacity}.
These results show, in particular, that the behavior of the Shannon capacity
for lexicographic products differs fundamentally from that of the
Lov\'{a}sz theta function and the fractional Haemers number, both of
which are multiplicative graph invariants.

The comparison with the strong product, together with the
multiplicativity of the Lov\'{a}sz theta function and the fractional
Haemers number under strong products, yields upper and lower bounds on
the capacities of both ordered lexicographic products, as well as
sufficient conditions for determining these capacities exactly. In
particular, if Shannon capacity is multiplicative under the strong
product of \(G\) and \(H\), then it is also multiplicative under both
\(G\circ H\) and \(H\circ G\). Exact formulas also follow when the
Shannon capacities of the individual factors equal their respective
Lov\'{a}sz theta numbers.

The Lov\'{a}sz theta function \cite{Lovasz79} and the fractional
Haemers number \cite{BukhC19} provide two potentially different upper
bounds on the Shannon capacity of a lexicographic product. We show that
these bounds are incomparable: neither dominates the other in general,
and either may be sharper, depending on the graphs under consideration.
We apply the resulting bounds and exactness criteria to lexicographic
products involving Kneser graphs, their complements, and \(q\)-analogues
of Kneser graphs.

We also prove that a complete outer factor preserves the Shannon
capacity of the inner factor; equivalently, the join of any finite
number of identical copies of a graph has the same Shannon capacity as
the graph itself. Finally, we determine the capacities of several
families of iterated lexicographic powers, including powers of
self-complementary graphs that are vertex-transitive or strongly
regular.

Recent developments further illustrate the range of parameters,
methods, and constructions in zero-error information theory. Examples
include the \(\rho\)-capacity motivated by zero-error broadcasting
\cite{HuS17}, spectral and rank-type bounds for distance powers of
graphs \cite{AbiadDF2026}, Shannon capacity and the Lov\'{a}sz theta
number under the Mycielski construction \cite{CsonkaS24},
quantum-mechanical and finite-automata methods for bounding graph
capacity \cite{Meiburg2025}, and zero-error capacities of channels with
memory \cite{CaoChenBai2025}.

Lexicographic products occur in a variety of graph-theoretic
constructions and provide a natural framework for studying the behavior
of graph invariants under substitution and composition. More broadly,
graph sums and products have been studied, among others, in
\cite{LaviSason2026,Schrijver23}, while the categorical product and
related multiplicativity phenomena have been studied in
\cite{Simonyi21}. Asymptotic graph parameters provide a systematic
language for analyzing graph operations and their associated
asymptotic rates \cite{Roberson2016,Vrana2021}.

The remainder of the paper is organized as follows.
Section~\ref{section: Preliminaries} reviews the necessary background on
lexicographic and strong graph products, the Lov\'{a}sz theta function,
the fractional Haemers number, the Shannon capacity, and relevant
properties of Kneser, generalized \(q\)-Kneser, and Paley graphs.
Section~\ref{section: Lovasz theta-function under a lexicographic product of graphs}
establishes the multiplicativity of the Lov\'{a}sz theta function under
lexicographic products and gives an alternative, elementary, and
self-contained proof that the fractional Lov\'{a}sz theta function
coincides with the ordinary one.
Section~\ref{section: Fractional Haemers number under a lexicographic product of graphs}
gives an alternative, elementary, and self-contained proof of the
corresponding multiplicativity result for the fractional Haemers number.
Section~\ref{section: Shannon capacity under a lexicographic product of graphs}
contains the new results on Shannon capacity, including the general
bounds, strict-supermultiplicativity constructions, exactness criteria,
and applications to several families of lexicographic graph products
and iterated lexicographic powers.
Finally, Section~\ref{section: an open problem} formulates an open
problem concerning the Shannon capacity under lexicographic and strong
products.

\section{Preliminaries}
\label{section: Preliminaries}

\subsection{Lexicographic and strong products of graphs}
\label{subsection: Lexicographic and strong products of graphs}

This subsection recalls the lexicographic and strong products of graphs
and compares some of their basic structural properties. We first review
their definitions and the relation between their edge sets, and then
examine the behavior of the independence and clique numbers under these
two products. These properties will be used throughout the paper.

\begin{definition}[Lexicographic product]
\label{definition: Lexicographic product}
{\em Let \(G\) and \(H\) be finite simple graphs. The \emph{lexicographic product}
of \(G\) and \(H\), denoted by \(G[H]\) or \(G\circ H\), is the graph with
vertex set
\begin{align}
V(G \circ H)=V(G)\times V(H),
\end{align}
in which two distinct vertices \((g,h)\) and \((g',h')\) are adjacent if and
only if
\begin{align}
\bigl(g \sim_G g' \bigr) \; \vee \; \bigl(g=g' \; \wedge \; h\sim_H h'\bigr).
\end{align}
For a graph \(G\) and an integer \(n \geq 1\), the \emph{\(n\)-fold lexicographic 
power} of \(G\), denoted by \(G^{\circ n}\), is the lexicographic product
of \(n\) copies of \(G\).}
\end{definition}

For completeness, we recall several elementary structural properties of
the lexicographic product. It is associative (see \cite[Proposition~5.11]{HammackIK11}):
\begin{align}
(F \circ G) \circ H \cong F \circ(G \circ H),
\end{align}
but it is not commutative in general; that is, \(G\circ H\) and
\(H\circ G\) need not be isomorphic (see \cite[Section~10.3]{HammackIK11}). 
It is right-distributive over disjoint unions:
\begin{align}
(F+G) \circ H \cong (F \circ H)+(G \circ H),
\end{align}
where \(+\) denotes disjoint union, but there is no corresponding 
left-distributive law. For example (see \cite[p.~57]{HammackIK11}), 
\[
K_2 \circ (K_1 + K_1) = C_4, \qquad (K_2 \circ K_1) + (K_2 \circ K_1) = K_2 + K_2,
\]
where \(C_n\) and \(K_n\) denote, respectively, the cycle graph on \(n \geq 3\) vertices 
and the complete graph on \(n \geq 1\) vertices. 

\smallskip 
\noindent 
It is easily verified that complementation commutes with the lexicographic product:
\begin{align}
\label{eq: complement of lexicographic product}
\overline{G \circ H} = \overline{G} \circ \overline{H}.
\end{align}
Consequently, the lexicographic product of two self-complementary graphs
is itself self-complementary. Lexicographic products and their structural 
properties are further studied in \cite[Chapter~10]{HammackIK11}.

\smallskip
The Shannon capacity of a graph, which is the main subject of this paper,
is defined in terms of strong graph powers.
\begin{definition}[Strong product]
\label{definition: Strong product}
{\em Let \(G\) and \(H\) be finite simple graphs. The \emph{strong product}
of \(G\) and \(H\), denoted by \(G\boxtimes H\), is the graph with
vertex set
\begin{align}
V(G\boxtimes H)=V(G) \times V(H),
\end{align}
in which two distinct vertices \((g,h)\) and \((g',h')\) are adjacent if
and only if
\begin{align}
\hspace*{-0.2cm} \bigl(g=g'\ \wedge \ h\sim_H h'\bigr)
\; \vee \; 
\bigl(g\sim_G g'\ \wedge \ h=h'\bigr)
\; \vee \; 
\bigl(g\sim_G g'\ \wedge \ h\sim_H h'\bigr).
\end{align}
Equivalently, if \((g,h)\neq(g',h')\), then 
\begin{align}
\hspace*{-0.2cm} (g,h) \sim_{G \boxtimes H}(g',h') \, \Longleftrightarrow \,
\bigl(g=g'\ \vee \ g\sim_G g'\bigr) \ \wedge \
\bigl(h=h'\ \vee \ h\sim_H h'\bigr).
\end{align}
For a graph \(G\) and an integer \(n \geq 1\), the \emph{\(n\)-fold strong
power} of \(G\), denoted by \(G^{\boxtimes n}\), is the strong product
of \(n\) copies of \(G\).}
\end{definition}

The two graph products in Definitions~\ref{definition: Lexicographic product} 
and~\ref{definition: Strong product} have the same vertex set, but their 
adjacency relations differ: if \(g\sim_G g'\), then the lexicographic product
places no restriction on \(h\) and \(h'\), whereas the strong product
requires that \(h=h'\) or \(h\sim_H h'\). This observation is formalized
in the following lemma, which is used later in this paper. 

\begin{lemma}
\label{lemma: spanning subgraph}
Let \(G\) and \(H\) be finite simple graphs. Then the strong product
\(G \boxtimes H\) is a spanning subgraph of the lexicographic product
\(G \circ H\).
\end{lemma}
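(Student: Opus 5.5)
Since both products are defined on the same vertex set \(V(G)\times V(H)\), the plan is to fix the identity map on this set and show that every edge of \(G\boxtimes H\) is an edge of \(G\circ H\). That is exactly what ``spanning subgraph'' requires. No earlier result is needed beyond Definitions~\ref{definition: Lexicographic product} and~\ref{definition: Strong product}.

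Take two distinct vertices \((g,h)\) and \((g',h')\) that are adjacent in \(G\boxtimes H\). I would split according to the three disjuncts in the definition of the strong product.
\begin{itemize}
\item If \(g=g'\) and \(h\sim_H h'\), then the second disjunct in the lexicographic adjacency rule holds verbatim, so the two vertices are adjacent in \(G\circ H\).
\item If \(g\sim_G g'\) and \(h=h'\), then \(g\sim_G g'\) holds, and the first disjunct of the lexicographic rule gives adjacency.
\item If \(g\sim_G g'\) and \(h\sim_H h'\), the same first disjunct applies.
\end{itemize}
A more compact way to say this is that strong adjacency means \((g=g'\vee g\sim_G g')\wedge(h=h'\vee h\sim_H h')\). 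If \(g\sim_G g'\), lexicographic adjacency is immediate. Otherwise \(g=g'\), and distinctness forces \(h\neq h'\), hence \(h\sim_H h'\), which is the second lexicographic disjunct.

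There is no real obstacle here. The only point needing care is the case \(g=g'\), where one must use that the vertices are distinct to rule out \(h=h'\). For completeness, I would also note that the inclusion is generally strict. Indeed, for \(g\sim_G g'\) with \(h,h'\) distinct and nonadjacent in \(H\), the vertices are adjacent in \(G\circ H\) but not in \(G\boxtimes H\). This explains the remark following the definitions.
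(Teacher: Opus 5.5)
Your proof is correct and is essentially the paper's argument. Both note that the vertex sets coincide and check edge inclusion through the same three-case split of the strong-product adjacency rule: the case \(g=g'\), \(h\sim_H h'\) uses the second lexicographic disjunct, and the other two cases use \(g\sim_G g'\). Your compact reformulation and the strictness remark are correct additions, but the lemma does not need them.
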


\begin{proof}
Both \(G \boxtimes H\) and \(G \circ H\) have the same vertex set
\(V(G)\times V(H)\). It therefore suffices to prove that
\begin{align}
\label{eq1: 18.08.26}
E(G\boxtimes H)\subseteq E(G\circ H).
\end{align}
Recall that two distinct vertices \((g,h)\) and \((g',h')\) are
adjacent in \(G\circ H\) if and only if
\[
\{g,g'\}\in E(G)
\quad \vee \quad
\bigl(g=g' \; \wedge \; \{h,h'\}\in E(H)\bigr).
\]
Let \((g,h)\) and \((g',h')\) be adjacent in \(G\boxtimes H\).
By the definition of the strong product, one of the following holds:
\[
\begin{aligned}
&g=g' && \wedge &&\{h,h'\}\in E(H),\\
&\{g,g'\}\in E(G) && \wedge &&h=h',\\
&\{g,g'\}\in E(G) && \wedge &&\{h,h'\}\in E(H).
\end{aligned}
\]
In the first case, \((g,h)\) and \((g',h')\) are adjacent in
\(G\circ H\) because \(g=g'\) and \(\{h,h'\} \in E(H)\). In the other two
cases, they are adjacent in \(G\circ H\) because \(\{g,g'\} \in E(G)\).
Consequently, \eqref{eq1: 18.08.26} holds. 
Since the two products have the same vertex set, \(G\boxtimes H\) is
a spanning subgraph of \(G\circ H\).
\end{proof}

\begin{proposition}
\label{proposition: independence and clique numbers of lexicographic product}
Let \(G\) and \(H\) be finite simple graphs. Then the independence and
clique numbers of their lexicographic product satisfy
\begin{align}
\alpha(G\circ H)
&=\alpha(G)\,\alpha(H),
\label{eq: independence number of lexicographic product}\\
\omega(G\circ H)
&=\omega(G)\,\omega(H).
\label{eq: clique number of lexicographic product}
\end{align}
\end{proposition}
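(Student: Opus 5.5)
I would prove the independence-number identity directly, by matching lower and upper bounds, and then deduce the clique-number identity by complementation via \eqref{eq: complement of lexicographic product}.

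\emph{Lower bound.} Let \(I\subseteq V(G)\) be independent in \(G\) with \(|I|=\alpha(G)\), and let \(J\subseteq V(H)\) be independent in \(H\) with \(|J|=\alpha(H)\). I claim \(I\times J\) is independent in \(G\circ H\). Take distinct \((g,h),(g',h')\in I\times J\). If \(g\neq g'\), then \(g\not\sim_G g'\) because \(I\) is independent, and the second alternative in the adjacency rule needs \(g=g'\). If \(g=g'\), then \(h\neq h'\) and \(h\not\sim_H h'\) because \(J\) is independent. In either case the two vertices are nonadjacent. Hence \(\alpha(G\circ H)\geq\alpha(G)\,\alpha(H)\).

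\emph{Upper bound.} Let \(S\) be an independent set in \(G\circ H\). Let \(P=\{g:(g,h)\in S \text{ for some } h\}\) be its projection onto \(V(G)\), and for each \(g\in P\) let \(S_g=\{h:(g,h)\in S\}\) be the fiber over \(g\).
\begin{itemize}
\item \(P\) is independent in \(G\): if \(g\sim_G g'\) with \(g,g'\in P\), then any \((g,h),(g',h')\in S\) would be adjacent in \(G\circ H\). Hence \(|P|\leq\alpha(G)\).
\item Each \(S_g\) is independent in \(H\): two adjacent elements \(h\sim_H h'\) of \(S_g\) would make \((g,h)\) and \((g,h')\) adjacent. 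Hence \(|S_g|\leq\alpha(H)\).
\end{itemize}
Therefore
\[
|S|=\sum_{g\in P}|S_g|\leq\alpha(G)\,\alpha(H).
\]
Note that only this bound, and not Lemma~\ref{lemma: spanning subgraph}, gives the upper bound. The lemma points the wrong way here, since it yields \(\alpha(G\circ H)\leq\alpha(G\boxtimes H)\).

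\emph{Clique number.} Using \(\omega(X)=\alpha(\overline{X})\) together with \eqref{eq: complement of lexicographic product}, we get
\[
\omega(G\circ H)=\alpha(\overline{G}\circ\overline{H})=\alpha(\overline{G})\,\alpha(\overline{H})=\omega(G)\,\omega(H).
\]

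There is no real obstacle. The only point needing care is the projection argument in the upper bound: adjacency of \(g\) and \(g'\) in the outer factor forces adjacency of all pairs in their fibers regardless of the inner coordinates, which is exactly why the projection \(P\) must be independent.
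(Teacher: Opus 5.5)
Your proof is correct and follows the paper's argument essentially verbatim. It uses the product of maximum independent sets for the lower bound, the projection/fiber decomposition (the paper's \(P\) and \(I_g\)) for the upper bound, and complementation via \(\overline{G\circ H}=\overline{G}\circ\overline{H}\) together with \(\omega(X)=\alpha(\overline{X})\) for the clique number.
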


For completeness, we prove 
Proposition~\ref{proposition: independence and clique numbers of lexicographic product}.
The independence-number identity
\eqref{eq: independence number of lexicographic product} was proved in
\cite{GellerS75} and is also stated in
\cite[Problem~27.1]{HammackIK11}, where the corresponding clique-number
identity \eqref{eq: clique number of lexicographic product} is not mentioned. 
We include a reformulation of the proof of
\eqref{eq: independence number of lexicographic product}, together with
a proof of \eqref{eq: clique number of lexicographic product}.
  
\begin{proof}
We first prove~\eqref{eq: independence number of lexicographic product}.
Let \(A\) and \(B\) be maximum independent sets in \(G\) and \(H\),
respectively. We claim that \(A\times B\) is an independent set in
\(G\circ H\). Indeed, consider two distinct vertices
\((g,h),(g',h')\in A\times B\). If \(g\neq g'\), then
\(g\not\sim_G g'\) because \(A\) is independent. If \(g=g'\), then
\(h\not\sim_H h'\) because \(B\) is independent. Thus, the two vertices
are not adjacent in \(G\circ H\), and consequently
\[
\alpha(G\circ H)
\geq |A\times B|
=\alpha(G)\,\alpha(H).
\]
For the reverse inequality, let \(I\) be an independent set in
\(G\circ H\), and define
\[
I_g \coloneqq \{h\in V(H):(g,h)\in I\}, \qquad g\in V(G).
\]
Each \(I_g\) is an independent set in \(H\), and therefore
\(
|I_g|\leq\alpha(H).
\)
Moreover, the set
\[
P \coloneqq \{g\in V(G):I_g\neq\varnothing\}
\]
is independent in \(G\). Indeed, if distinct \(g,g'\in P\) were adjacent
in \(G\), then every vertex of \(\{g\}\times I_g\) would be adjacent in
\(G\circ H\) to every vertex of \(\{g'\}\times I_{g'}\), contradicting
the independence of \(I\). Hence, \(|P|\leq\alpha(G)\), and
\[
|I|
=\sum_{g\in P}|I_g|
\leq |P|\,\alpha(H)
\leq\alpha(G)\,\alpha(H).
\]
Taking the maximum over all independent sets \(I\) proves the reverse inequality 
\[
\alpha(G\circ H) \leq \alpha(G)\,\alpha(H),
\]
which then establishes equality \eqref{eq: independence number of lexicographic product}.

We next prove~\eqref{eq: clique number of lexicographic product}.  
Combining identities \eqref{eq: complement of lexicographic product}
and \eqref{eq: independence number of lexicographic product} yields 
\begin{align}
\omega(G \circ H) &= \alpha( \overline{G \circ H}) \nonumber \\
&= \alpha( \overline{G} \circ \overline{H}) \nonumber \\
&= \alpha( \overline{G}) \, \alpha( \overline{H}) \nonumber \\
&= \omega(G)\,\omega(H),
\end{align}
where the first and last equalities follow from the relation
\( \alpha(F) = \omega(\overline{F}) \), valid for every graph \(F\). 
\end{proof}

\begin{remark}
{\em
In contrast to
Proposition~\ref{proposition: independence and clique numbers of lexicographic product},
the independence number is, in general, only supermultiplicative under the
strong product:
\[
\alpha(G\boxtimes H)\geq\alpha(G)\,\alpha(H).
\]
Indeed, the Cartesian product of an independent set in \(G\) and an
independent set in \(H\) is independent in \(G\boxtimes H\), whereas the
inequality can be strict. For example, label the vertices of \(C_5\) by
the elements of \(\mathbb{Z}_5\), with two vertices adjacent when their
difference is \(\pm1\) modulo \(5\). Then
\[
S \coloneqq \bigl\{(i,2i):i\in\mathbb{Z}_5\bigr\}
=\bigl\{(0,0),(1,2),(2,4),(3,1),(4,3)\bigr\}
\]
is an independent set in \(C_5\boxtimes C_5\). In fact,
\[
5=\alpha(C_5\boxtimes C_5)>\alpha(C_5)^2=4.
\]

The clique number, on the other hand, is multiplicative under the strong
product:
\[
\omega(G\boxtimes H)=\omega(G)\,\omega(H).
\]
To see this, let \(A\) and \(B\) be maximum cliques in \(G\) and \(H\),
respectively. For any two distinct vertices
\((g,h),(g',h')\in A\times B\), we have
\[
g=g' \ \text{or}\ g\sim_G g',
\qquad
h=h' \ \text{or}\ h\sim_H h'.
\]
Thus, \(A\times B\) is a clique in \(G\boxtimes H\), and hence
\[
\omega(G\boxtimes H)
\geq |A\times B|
=\omega(G)\,\omega(H).
\]
For the reverse inequality, let \(C\) be a clique in \(G\boxtimes H\), and let
\begin{align*}
& P_G \coloneqq \bigl\{g\in V(G):(g,h)\in C
\text{ for some }h\in V(H)\bigr\}, \\
& P_H \coloneqq \bigl\{h\in V(H):(g,h)\in C
\text{ for some }g\in V(G)\bigr\}
\end{align*}
be its coordinate projections. The set \(P_G\) is a clique in \(G\).
Indeed, if \(g,g'\in P_G\) are distinct, then there exist \(h,h'\in V(H)\)
such that \((g,h),(g',h')\in C\). Since these vertices are adjacent in
\(G\boxtimes H\) and \(g\neq g'\), it follows that \(g\sim_G g'\).
Similarly, \(P_H\) is a clique in \(H\). Therefore,
\[
|P_G|\leq\omega(G),
\qquad
|P_H|\leq\omega(H).
\]
Since \(C\subseteq P_G\times P_H\), we obtain
\[
|C|
\leq |P_G|\,|P_H|
\leq\omega(G)\,\omega(H).
\]
Maximizing over all cliques \(C\) gives the reverse inequality and proves
the claimed equality.

Thus, the distinction between the two products lies in the behavior of the
independence number: it is multiplicative under the lexicographic product,
but in general only supermultiplicative under the strong product.
}
\end{remark}

\begin{remark}
\label{remark: commutativity}
{\em Although it follows from 
Proposition~\ref{proposition: independence and clique numbers of lexicographic product} that 
\[
\alpha(G\circ H)=\alpha(H\circ G)
\quad\text{and}\quad
\omega(G\circ H)=\omega(H\circ G),
\]
the graphs \(G\circ H\) and \(H\circ G\) need not be isomorphic.
In contrast, the strong product is commutative up to graph isomorphism:
\[
    G\boxtimes H \cong H \boxtimes G.
\]
Indeed, the coordinate-switching map
\[
    V(G\boxtimes H)\longrightarrow V(H\boxtimes G),
    \qquad (g,h)\longmapsto (h,g),
\]
is a graph isomorphism.}
\end{remark}

\subsection{Lov\'{a}sz theta function}
\label{subsection: Lovasz theta-function}

The Lov\'{a}sz \(\vartheta\)-function is a graph invariant introduced in
\cite{Lovasz79} and defined in terms of orthogonal representations of the graph.
It can be computed efficiently and provides computable bounds on 
several graph invariants whose exact computation is NP-hard. It has been studied, 
e.g., in \cite{CsonkaS24, LaviSason2026, Lovasz79, Lovasz19, Sason23, Sason24}. The 
Lov\'{a}sz $\vartheta$-function of a finite simple graph $G$ can be expressed as a solution 
of a semidefinite programming (SDP) problem. To that end, let ${\bf{A}} = (A_{i,j})$ 
be the $n \times n$ adjacency matrix of $G$ with $n \triangleq |V(G)|$. The 
Lov\'{a}sz $\vartheta$-function $\vartheta(G)$ can be expressed by the following 
convex optimization problem:
\vspace*{0.1cm}
\begin{eqnarray}
\label{eq: SDP problem - Lovasz theta-function}
\mbox{\fbox{$
\begin{array}{l}
\text{maximize} \; \; \mathrm{Tr}({\bf{B}} \, {\bf{J}}_n)  \\
\text{subject to} \\
\begin{cases}
{\bf{B}} \succeq 0, \\
\mathrm{Tr}({\bf{B}}) = 1, \\
A_{i,j} = 1  \; \Rightarrow \;  B_{i,j} = 0, \quad i,j \in [n].
\end{cases}
\end{array}$}}
\end{eqnarray}
The SDP formulation in \eqref{eq: SDP problem - Lovasz theta-function} yields
the existence of an algorithm that computes $\vartheta(G)$, for every graph $G$, with a
precision of $r$ decimal digits, and a computational complexity that is polynomial in $n$ and $r$.
Thus, the Lov\'{a}sz $\vartheta$-function can be computed in polynomial time in $n$, and by \cite{Lovasz79},
it is an upper bound on the Shannon capacity. By contrast, determining the Shannon capacity involves an
infinite sequence of independence numbers, each of which is NP-hard to compute.

\subsection{Fractional Haemers number}
\label{subsection: Fractional Haemers number}

The fractional Haemers number is a graph invariant studied in 
\cite{Blasiak13, BukhC19, Csonka26, Fritz21, Roberson2016}.
We provide here some essential background. 

\begin{definition}[Fractional Haemers number]
\label{definition: Fractional Haemers number}
{\em Let \(G=(V,E)\) be a finite simple graph, let \(\mathbb{F}\) be a field,
and write \([d] \coloneqq \{1,\ldots,d\}\) for an integer \(d \geq 1\). A
matrix
\[
M\in\mathbb{F}^{(V\times[d])\times(V\times[d])}
\]
is called a \emph{\(d\)-representation of \(G\) over \(\mathbb{F}\)} if
its \(d\times d\) blocks \(M_{u,v}\), indexed by \(u,v\in V\), satisfy
\[
M_{v,v}=I_d  \qquad \text{for every }v\in V,
\]
and
\[
M_{u,v}=M_{v,u}=O_d
\qquad\text{whenever }u\neq v
\text{ and }\{u,v\}\notin E,
\]
where \(I_d\) and \(O_d\) denote the \(d\times d\) identity and zero
matrices, respectively.
For an integer \(d \geq 1\), let \(\mathcal{M}_d(G; \mathbb{F}) \) denote 
the set of all \(d\)-representations of \(G\) over \( \mathbb{F} \). 
\begin{enumerate}
\item \hspace*{-0.4cm} The \emph{fractional Haemers number} of \(G\) over \(\mathbb{F}\) is
defined by
\begin{align}
\label{eq: fractional Haemers number}
\mathcal{H}_{f}(G;\mathbb{F}) \coloneqq
\inf_{d \geq 1} \, \min_{M \in \mathcal{M}_d(G; \mathbb{F})} 
\frac{\operatorname{rank}_{\mathbb{F}}(M)}{d}.
\end{align}
\item \hspace*{-0.4cm} Restricting the optimization to \(d=1\) gives the
\emph{Haemers number} \cite{Haemers79}, defined by 
\begin{align}
\label{eq: Haemers number}
\mathcal{H}(G;\mathbb{F}) \coloneqq
\min_{M \in \mathcal{M}_1(G; \mathbb{F})}  
\operatorname{rank}_{\mathbb{F}}(M).
\end{align}
\end{enumerate}}
\end{definition}

\begin{definition}[Fractional Lov\'{a}sz theta function]
\label{definition: Fractional Lovasz theta function}
{\em Let \(G\) be a finite simple graph. The \emph{fractional Lov\'{a}sz theta
function} of \(G\) is defined by
\begin{align}
\label{eq: Fractional Lovasz theta function}
\vartheta_f(G) \coloneqq \inf_{d \geq 1}
\frac{\vartheta\bigl(G\circ\overline{K_d}\bigr)}{d},
\end{align}
where \(\overline{K_d}\) is the edgeless graph on \(d\) vertices.}
\end{definition}

\begin{proposition}
\label{proposition: fractional Haemers number}
Let \(G\) be a finite simple graph, and let \(\mathbb{F}\) be a field.
Then
\begin{align}
\label{eq1: 06.08.26}
\mathcal{H}_f(G;\mathbb{F})
= \inf_{d \geq 1} \frac{\mathcal{H}\bigl(G \circ \overline{K_d}; \mathbb{F}\bigr)}{d},
\end{align}
where \(\overline{K_d}\) denotes the edgeless graph on \(d\) vertices.
\end{proposition}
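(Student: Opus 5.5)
The plan is to show that, for each fixed \(d\geq 1\), the set \(\mathcal{M}_d(G;\mathbb{F})\) of \(d\)-representations of \(G\) coincides with the set \(\mathcal{M}_1(G\circ\overline{K_d};\mathbb{F})\) of \(1\)-representations of \(G\circ\overline{K_d}\), after identifying both index sets with \(V\times[d]\). Once this is shown,
\[
\min_{M\in\mathcal{M}_d(G;\mathbb{F})}\operatorname{rank}_{\mathbb{F}}(M)
=\mathcal{H}\bigl(G\circ\overline{K_d};\mathbb{F}\bigr).
\]
Dividing by \(d\) and taking the infimum over \(d\geq 1\) then gives \eqref{eq1: 06.08.26} directly from \eqref{eq: fractional Haemers number} and \eqref{eq: Haemers number}.

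First I describe the adjacency in \(G\circ\overline{K_d}\) using Definition~\ref{definition: Lexicographic product}. Since \(\overline{K_d}\) has no edges, two distinct vertices \((u,i)\) and \((v,j)\) are adjacent exactly when \(u\sim_G v\). In particular, the vertices \((v,i)\) and \((v,j)\) with \(i\neq j\) are never adjacent, and \((u,i)\), \((v,j)\) with \(u\neq v\) are adjacent if and only if \(\{u,v\}\in E\).

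Next I translate the conditions for a \(1\)-representation \(M\) of \(G\circ\overline{K_d}\). These conditions are that every diagonal entry equals \(1\) and every entry indexed by a distinct nonadjacent pair equals \(0\). In block form, these become three requirements.
\begin{itemize}
\item[(i)] Inside each diagonal block \(M_{v,v}\), the diagonal entries equal \(1\). The off-diagonal entries correspond to the nonadjacent pairs \((v,i),(v,j)\), so they vanish. Hence \(M_{v,v}=I_d\).
\item[(ii)] For \(u\neq v\) with \(\{u,v\}\notin E\), every entry of \(M_{u,v}\) and of \(M_{v,u}\) vanishes, so these blocks equal \(O_d\).
\item[(iii)] For \(\{u,v\}\in E\), the blocks are unconstrained.
\end{itemize}
These are exactly the defining conditions of a \(d\)-representation of \(G\) in Definition~\ref{definition: Fractional Haemers number}. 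The argument reverses verbatim, so the two sets are equal and the ranks agree entrywise.

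The only point needing care is requirement (i). The identity diagonal blocks \(M_{v,v}=I_d\) are forced precisely because the inner factor is edgeless. If the inner factor were \(K_d\), the off-diagonal entries of these blocks would be free, and the correspondence would fail. So the main check is confirming the no-edge property inside each fiber \(\{v\}\times[d]\). Apart from that, the proof is bookkeeping of indices. The minima exist because the ranks are nonnegative integers and each set is nonempty, since the identity matrix belongs to it.
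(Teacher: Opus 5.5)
Your proposal is correct and follows essentially the same route as the paper. Both arguments identify the matrices fitting \(G\circ\overline{K_d}\), indexed by \(V(G)\times[d]\), with the \(d\)-representations of \(G\), using that distinct \((u,i)\) and \((v,j)\) are nonadjacent exactly when \(u=v\) or \(\{u,v\}\notin E(G)\); both then divide by \(d\) and take the infimum. Your block-by-block check, including the remark that the edgeless inner factor is what forces \(M_{v,v}=I_d\), is simply a more detailed version of the paper's argument.
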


\begin{proof}
For every integer \(d \geq 1\), the vertex set of
\(G \circ \overline{K_d}\) is \(V(G) \times [d]\). Two distinct vertices
\((u,i)\) and \((v,j)\) are nonadjacent in
\(G \circ \overline{K_d}\) precisely when either \(u=v\), or
\(u \neq v\) and \(\{u,v\} \notin E(G)\).
Consequently, a matrix fitting \(G \circ \overline{K_d}\) over
\(\mathbb{F}\), viewed as a block matrix whose \(d \times d\) blocks are
indexed by \(u,v \in V(G)\), has diagonal blocks equal to \(I_d\) and
zero blocks whenever \(u \neq v\) and \(\{u,v\} \notin E(G)\). Thus, the
matrices fitting \(G \circ \overline{K_d}\) are precisely the
\(d\)-representations of \(G\) over \(\mathbb{F}\). It follows that
\begin{align}
\label{eq2: 06.08.26}
\mathcal{H}\bigl(G \circ \overline{K_d};\mathbb{F}\bigr)
= \min_{M \in \mathcal{M}_d(G; \mathbb{F})} \operatorname{rank}_{\mathbb{F}}(M).
\end{align}
Dividing both sides of \eqref{eq2: 06.08.26} by \(d\) and taking the infimum 
over all integers \(d \geq 1\), we obtain \eqref{eq1: 06.08.26} by \eqref{eq: fractional Haemers number}.
\end{proof}

\subsection{Shannon capacity of graphs}
\label{subsection: Shannon capacity of graphs}

\begin{definition}[Shannon capacity of a graph]
\label{definition: Shannon capacity of a graph}
{\em The \emph{Shannon capacity} of a finite simple graph \(G\) is defined as  
\begin{align}
\label{eq: Shannon capacity}
\Theta(G) \coloneqq \sup_{n \geq 1}
\sqrt[n]{\alpha\!\left(G^{\boxtimes n}\right)}
= \lim_{n \to \infty} \sqrt[n]{\alpha\!\left(G^{\boxtimes n}\right)},
\end{align}
where the existence of the limit and the last equality follow from Fekete's lemma and the   
supermultiplicativity of the independence number under strong products.}
\end{definition}

The following result by Lov\'{a}sz provides an easy-to-compute upper bound on the Shannon capacity of graphs
(see \cite[Theorem~1]{Lovasz79}).
\begin{theorem}
Let \(G\) be a finite simple graph. Then its Shannon capacity is upper bounded by the Lov\'{a}sz theta function:
\begin{align}
\Theta(G) \leq \vartheta(G).
\end{align}
\end{theorem}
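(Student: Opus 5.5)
The plan is to follow Lov\'{a}sz's original argument, which has three ingredients:
\begin{itemize}
\item an orthonormal-representation characterization of \(\vartheta(G)\);
\item the multiplicativity of that characterization under strong products;
\item the bound \(\alpha(G)\leq\vartheta(G)\).
\end{itemize}
Define an orthonormal representation of \(G\) as unit vectors \(\{u_v\}_{v\in V(G)}\) in some \(\mathbb{R}^N\) such that \(u_v\perp u_w\) whenever \(v\neq w\) and \(v\not\sim_G w\). For a unit handle vector \(c\), set
\[
\vartheta(G)=\min_{\{u_v\},\,c}\ \max_{v\in V(G)}\frac{1}{\langle c,u_v\rangle^2}.
\]
The first step is to connect this min--max expression with the SDP \eqref{eq: SDP problem - Lovasz theta-function}. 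By weak duality, any feasible \(\mathbf{B}\) gives an objective value at most \(\max_v 1/\langle c,u_v\rangle^2\) for any representation. Strong duality, via Slater's condition (\(\mathbf{B}=\mathbf{I}_n/n\) is strictly feasible), gives equality. This identification is the main technical obstacle. If one prefers to avoid it, the whole argument can be run with the quantity \(\vartheta'(G)\) defined by the min--max. Then only the inequality \(\text{SDP value}\leq\vartheta'(G)\) is needed to compare with \eqref{eq: SDP problem - Lovasz theta-function}, and that inequality is the easy weak-duality direction. Even so, to conclude \(\Theta\leq\vartheta\) one needs \(\vartheta'=\vartheta\), so I would cite \cite[Theorem~3]{Lovasz79} for this duality.

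The second step is the bound \(\alpha(G)\leq\vartheta'(G)\). Let \(I\) be independent. Then \(\{u_v\}_{v\in I}\) is an orthonormal system, so Bessel's inequality gives
\[
1=\|c\|^2\geq\sum_{v\in I}\langle c,u_v\rangle^2\geq |I|\Big/\max_v\frac{1}{\langle c,u_v\rangle^2}.
\]
Hence \(|I|\leq\vartheta'(G)\).

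The third step is submultiplicativity: \(\vartheta'(G\boxtimes H)\leq\vartheta'(G)\,\vartheta'(H)\). Take optimal representations \((\{u_g\},c)\) for \(G\) and \((\{w_h\},d)\) for \(H\), and use the tensor vectors \(u_g\otimes w_h\) with handle \(c\otimes d\). Suppose \((g,h)\neq(g',h')\) are nonadjacent in \(G\boxtimes H\). By Definition~\ref{definition: Strong product}, either \(g\neq g'\) with \(g\not\sim_G g'\), or \(h\neq h'\) with \(h\not\sim_H h'\). Since
\[
\langle u_g\otimes w_h,\,u_{g'}\otimes w_{h'}\rangle=\langle u_g,u_{g'}\rangle\,\langle w_h,w_{h'}\rangle,
\]
one factor vanishes and the tensor vectors are orthogonal. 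Moreover
\[
\langle c\otimes d,\,u_g\otimes w_h\rangle^2=\langle c,u_g\rangle^2\,\langle d,w_h\rangle^2,
\]
so the max of the reciprocal is at most the product of the two maxima. By induction, \(\vartheta'(G^{\boxtimes n})\leq\vartheta'(G)^n\).

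Finally, combine the steps:
\[
\alpha(G^{\boxtimes n})\leq\vartheta'(G^{\boxtimes n})\leq\vartheta'(G)^n.
\]
Taking \(n\)-th roots and the limit in \eqref{eq: Shannon capacity} gives \(\Theta(G)\leq\vartheta'(G)=\vartheta(G)\).
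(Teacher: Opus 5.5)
Your proposal is correct and takes essentially the paper's route. The paper does not reprove this inequality; it cites \cite[Theorem~1]{Lovasz79}, whose proof is exactly your argument: $\alpha(G)\le\vartheta(G)$ via Bessel's inequality, combined with $\vartheta(G\boxtimes H)\le\vartheta(G)\,\vartheta(H)$ via tensor products of orthonormal representations. One clarification: because the paper defines $\vartheta$ through the SDP, the weak-duality inequality plays no role, and your final step actually needs the nontrivial direction $\vartheta'(G)\le\vartheta(G)$, which is Lov\'{a}sz's SDP characterization (his Theorem~4, built on the eigenvalue form in Theorem~3) rather than Theorem~3 alone.
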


The next upper bound on the Shannon capacity was first proved by
Blasiak \cite{Blasiak13}; see also \cite[Theorem~1]{BukhC19}.
\begin{theorem}
\label{theorem: Shannon capacity is upper bounded by the fractional Haemers number}
Let \(G\) be a finite simple graph, and let \(\mathbb{F}\) be a field.
Then the Shannon capacity of \(G\) is upper bounded by its fractional
Haemers number over \(\mathbb{F}\); that is,
\begin{align}
\Theta(G)\leq \mathcal{H}_f(G;\mathbb{F}).
\end{align}
Moreover,
\begin{align}
\alpha(G) \leq \Theta(G) \leq \mathcal{H}_f(G;\mathbb{F})
\leq \mathcal{H}(G;\mathbb{F}).
\end{align}
\end{theorem}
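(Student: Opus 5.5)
The statement to prove is Theorem~\ref{theorem: Shannon capacity is upper bounded by the fractional Haemers number}: $\Theta(G)\le\mathcal{H}_f(G;\mathbb{F})$, inside the chain $\alpha(G)\le\Theta(G)\le\mathcal{H}_f(G;\mathbb{F})\le\mathcal{H}(G;\mathbb{F})$. The two outer inequalities are immediate. The first follows from Definition~\ref{definition: Shannon capacity of a graph} by taking $n=1$ in the supremum. The last follows from Definition~\ref{definition: Fractional Haemers number} by restricting the infimum to $d=1$.

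The substantive inequality is the middle one. I would prove it in three steps.

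\emph{Step 1: $\alpha(G)\le \operatorname{rank}(M)/d$ for every $d$-representation $M$.} Let $I$ be independent in $G$. The principal submatrix of $M$ on the rows and columns indexed by $I\times[d]$ is block diagonal: its diagonal blocks are $I_d$, and its off-diagonal blocks vanish by the nonadjacency condition. So this submatrix is the identity of size $d|I|$, and hence $\operatorname{rank}(M)\ge d|I|$.

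\emph{Step 2: submultiplicativity under strong products.} Let $M$ be a $d$-representation of $G$ and $M'$ a $d'$-representation of $H$. After reordering indices, $M\otimes M'$ is indexed by $(V(G)\times V(H))\times([d]\times[d'])$. Its $(u,u'),(v,v')$ block is $M_{u,v}\otimes M'_{u',v'}$. The diagonal blocks are $I_d\otimes I_{d'}=I_{dd'}$. If $(u,u')\neq(v,v')$ are nonadjacent in $G\boxtimes H$, then some coordinate is distinct and nonadjacent, say $u\ne v$ with $u\not\sim v$; then $M_{u,v}=O_d$ and the block vanishes. So $M\otimes M'$ is a $dd'$-representation of $G\boxtimes H$, and $\operatorname{rank}(M\otimes M')=\operatorname{rank}(M)\operatorname{rank}(M')$. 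Iterating this gives
\[
\mathcal{H}_f(G^{\boxtimes n};\mathbb{F})\le\mathcal{H}_f(G;\mathbb{F})^n.
\]

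\emph{Step 3: combine.} Apply Step 1 to $G^{\boxtimes n}$ and take the infimum over $d$ and $M$ to get $\alpha(G^{\boxtimes n})\le\mathcal{H}_f(G^{\boxtimes n};\mathbb{F})$. By Step 2 this is at most $\mathcal{H}_f(G;\mathbb{F})^n$. Taking $n$-th roots and the supremum over $n$ yields $\Theta(G)\le\mathcal{H}_f(G;\mathbb{F})$.

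The only delicate point is Step 2. One must identify the Kronecker product's block structure correctly under the index permutation, which is a simultaneous row and column permutation and so preserves rank. One must also check the nonadjacency case carefully: if $u=v$ but $u'\not\sim u'$ fails, adjacency still holds, so every nonadjacent pair has a coordinate that is distinct and nonadjacent. Everything else is routine bookkeeping.
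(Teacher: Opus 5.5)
Your proof is correct and follows the paper's argument: you bound $\alpha(G^{\boxtimes n})$ by $\mathcal{H}_f(G^{\boxtimes n};\mathbb{F})$, use the behavior of $\mathcal{H}_f$ under strong products to get $\mathcal{H}_f(G;\mathbb{F})^n$, and take $n$-th roots, with the outer inequalities coming from $n=1$ and $d=1$. The only difference is that the paper cites full multiplicativity from \cite[Theorem~3]{BukhC19} and takes $\alpha \leq \mathcal{H}_f$ for granted, whereas you prove both facts directly; only the submultiplicative direction is actually needed, and your Kronecker-product argument for it is sound.
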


\begin{proof}
For every finite graph \(K\),
\[
\alpha(K)\leq \mathcal{H}_f(K;\mathbb{F}).
\]
Applying this inequality to the \(n\)-th strong power of \(G\), and
using the multiplicativity of the fractional Haemers number under the
strong product (see \cite[Theorem~3]{BukhC19}), gives 
\[
\alpha\bigl(G^{\boxtimes n}\bigr)
\leq
\mathcal{H}_f\bigl(G^{\boxtimes n};\mathbb{F}\bigr)
=
\mathcal{H}_f(G;\mathbb{F})^n.
\]
Taking \(n\)-th roots and then the supremum over \(n \geq 1\), we obtain
\[
\Theta(G)
= \sup_{n \geq 1} \sqrt[n]{\alpha\bigl(G^{\boxtimes n}\bigr)}
\leq \mathcal{H}_f(G;\mathbb{F}).
\]
Finally, restricting the definition of
\(\mathcal{H}_f(G;\mathbb{F})\) to \(d=1\) yields
\[
\mathcal{H}_f(G;\mathbb{F}) \leq \mathcal{H}(G;\mathbb{F}),
\]
while \(\alpha(G)\leq \Theta(G)\) follows by taking \(n=1\) in the 
middle term of \eqref{eq: Shannon capacity}.
\end{proof}

In general, the fractional Haemers number and the Lov\'{a}sz theta
function are incomparable. That is, depending on the graph and the
underlying field, either one may provide the stronger upper bound on the
Shannon capacity. In particular, Bukh and Cox established the following
strict separation.

\begin{theorem}[Bukh and Cox, {\cite[Theorem~2]{BukhC19}}]
Let \(\mathbb{F}\) be a field of nonzero characteristic. Then there
exists a finite graph \(G=G(\mathbb{F})\) such that, for every
field \(\mathbb{F}'\),
\begin{align}
\mathcal{H}_f(G;\mathbb{F})
< \min\bigl\{ \mathcal{H}(G;\mathbb{F}'),\vartheta(G) \bigr\}.
\end{align}
In particular,
\begin{align}
\mathcal{H}_f(G;\mathbb{F})<\vartheta(G).
\end{align}
\end{theorem}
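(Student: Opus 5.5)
The plan is to reduce the statement to a single ``seed'' graph on which the characteristic of \(\mathbb{F}\) can be exploited, and then to handle the comparison with \(\mathcal{H}(G;\mathbb{F}')\) for \emph{every} field \(\mathbb{F}'\) by an integrality argument rather than a field-by-field rank computation.

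\textbf{Integrality step.} Let \(p=\operatorname{char}\mathbb{F}>0\). For every field \(\mathbb{F}'\), Theorem~\ref{theorem: Shannon capacity is upper bounded by the fractional Haemers number} gives \(\mathcal{H}(G;\mathbb{F}')\geq \Theta(G)\). Moreover, \(\mathcal{H}(G;\mathbb{F}')\) is an integer. Hence \(\mathcal{H}(G;\mathbb{F}')\geq \lceil \Theta(G)\rceil\). It therefore suffices to find \(G\) with
\[
\mathcal{H}_f(G;\mathbb{F})<\lceil \Theta(G)\rceil
\qquad\text{and}\qquad
\mathcal{H}_f(G;\mathbb{F})<\vartheta(G).
\]
The first condition holds, for instance, if \(\Theta(G)>n-1\) while \(\mathcal{H}_f(G;\mathbb{F})<n\) for some integer \(n\). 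In practice I would certify \(\Theta(G)>n-1\) by exhibiting an explicit independent set in \(G\) or in \(G^{\boxtimes 2}\).

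\textbf{Seed construction.} Take \(G\) on the vertex set \(\binom{[m]}{k}\), with adjacency defined by intersection sizes that are nonzero modulo \(p\), in the spirit of the Frankl--Wilson graphs. Then build a \(d\)-representation over \(\mathbb{F}\) (equivalently, a fitting matrix of \(G\circ\overline{K_d}\), by Proposition~\ref{proposition: fractional Haemers number}) from inclusion matrices reduced modulo \(p\). The point is that intersection-size conditions that vanish modulo \(p\) force the forbidden off-diagonal blocks to be zero. The rank of such matrices modulo \(p\) is controlled by Wilson's rank formula for inclusion matrices, and it is much smaller than the corresponding real rank. This gives an explicit upper bound on \(\mathcal{H}_f(G;\mathbb{F})\).

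\textbf{Comparison with \(\vartheta\).} The graph is invariant under the Johnson scheme, so \(\vartheta(G)\) can be computed exactly via Delsarte's LP on the eigenvalues of the Johnson scheme. The aim is then to choose \(m\), \(k\), and the intersection pattern so that the modular rank bound is strictly below \(\vartheta(G)\), while \(\alpha(G)\) (a large family avoiding the adjacency pattern) is large enough to give the ceiling condition above. If one seed graph only yields \(\mathcal{H}_f<\vartheta\) with a small margin, I would pass to strong powers \(G^{\boxtimes t}\). Both \(\vartheta\) and \(\mathcal{H}_f\) are multiplicative under the strong product, while \(\Theta(G^{\boxtimes t})\geq\Theta(G)^t\). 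Consequently the gap grows geometrically, and for large \(t\) one can try to fit an integer strictly between \(\mathcal{H}_f(G^{\boxtimes t};\mathbb{F})\) and \(\Theta(G^{\boxtimes t})\). This requires a lower bound on \(\Theta\) that outpaces \(\mathcal{H}_f\), which is delicate, so I expect a careful choice of the seed instead.

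\textbf{Main obstacle.} The hard step is the simultaneous balancing in the seed. The modular-rank bound must beat \(\vartheta\), which is a global spectral quantity, while \(\mathcal{H}_f(G;\mathbb{F})\) must remain strictly below the integer forced above it by a certified lower bound on \(\Theta(G)\). If the parameters resist this, my fallback is a disjoint union \(G+K\) with a suitable graph \(K\). Here \(\mathcal{H}_f\) and \(\vartheta\) are additive, and \(\alpha\) is additive, so \(K\) can be used to shift the values so that the fractional part of \(\mathcal{H}_f(G+K;\mathbb{F})\) lands where the ceiling argument requires.
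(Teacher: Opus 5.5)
\textbf{There is a genuine gap: you never exhibit the graph.} Your integrality reduction is correct: $\mathcal{H}(G;\mathbb{F}')$ is an integer $\ge\Theta(G)$. But it is only a reduction. The whole content of the theorem is producing a graph with $\mathcal{H}_f(G;\mathbb{F})<\lceil\Theta(G)\rceil$ and $\mathcal{H}_f(G;\mathbb{F})<\vartheta(G)$, and your proposal never does so. (The paper gives no proof of its own; it only quotes \cite[Theorem~2]{BukhC19}.)

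Since $\Theta(G)\le\mathcal{H}_f(G;\mathbb{F})$ always, your first condition means $N-1<\Theta(G)\le\mathcal{H}_f(G;\mathbb{F})<N$ for some integer $N$. So you need a certified lower bound on $\Theta$ within less than $1$ of the fractional bound, and $\mathcal{H}_f$ must be non-integral. A single Frankl--Wilson seed with a Wilson-rank bound $R$ does not give this:
\begin{itemize}
\item If an explicit independent set of size $R$ exists, then $\Theta(G)=\mathcal{H}_f(G;\mathbb{F})=R$ is an integer, and your strict inequality fails with equality.
\item If not, you would need both $\Theta(G)>R-1$ and a genuinely fractional improvement $\mathcal{H}_f(G;\mathbb{F})<R$, and the plan supplies neither.
\end{itemize}

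\textbf{The strong-power step points the wrong way.} We have $\Theta(G^{\boxtimes t})=\Theta(G)^t$ and $\mathcal{H}_f(G^{\boxtimes t};\mathbb{F})=\mathcal{H}_f(G;\mathbb{F})^t$, so the difference $\mathcal{H}_f-\Theta$ grows without bound unless it is zero. No lower bound on $\Theta$ can outpace $\mathcal{H}_f$. Moreover, an integer strictly between $\Theta$ and $\mathcal{H}_f$ is exactly what your ceiling argument must exclude, not what it needs.

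\textbf{How the disjoint-union fallback can be made to work.} It succeeds if you split the two jobs between two components: an integral piece on which a modular rank bound is tight and beats $\vartheta$, and a non-integral piece that supplies the slack below the ceiling. For example, let $G_0$ have vertex set $\binom{[m]}{p+1}$ with $(p+2)\mid m$, where $A\sim B$ iff $|A\cap B|\not\equiv 0 \pmod p$.
\begin{itemize}
\item The Gram matrix $NN^{\mathsf T}$ of the point--set incidence matrix, read over $\mathbb{F}_p\subseteq\mathbb{F}$, fits $G_0$ and has rank at most $m$.
\item All $(p+1)$-subsets of the blocks of a partition of $[m]$ into $(p+2)$-sets form an independent set of size $m$. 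Hence $\Theta(G_0)=\mathcal{H}_f(G_0;\mathbb{F})=m$.
\item For $G=G_0+C_5$, use block-diagonal $2$-representations: $NN^{\mathsf T}\otimes I_2$, together with the rank-$5$ representation of $C_5$ obtained by covering each vertex with its two incident edges. This gives $\mathcal{H}_f(G;\mathbb{F})\le m+\tfrac52$.
\item For every field $\mathbb{F}'$, $\mathcal{H}(G;\mathbb{F}')\ge\lceil\Theta(G_0)+\Theta(C_5)\rceil=\lceil m+\sqrt5\rceil=m+3$.
\end{itemize}
What remains is $\vartheta(G_0)>m+\tfrac52-\sqrt5$, which is a Johnson-scheme computation. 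For $p=2$ and $m=12$, let $R_i$ be the adjacency matrix of the relation $|A\cap B|=i$. Then $\tfrac{1}{220}\bigl(I+\tfrac{3}{11}R_2+\tfrac{2}{11}R_0\bigr)$ is feasible for the semidefinite program defining $\vartheta$: its eigenvalues are proportional to $\tfrac{260}{11},0,\tfrac{40}{11},0$. This gives $\vartheta(G_0)\ge\tfrac{260}{11}$. Without an explicit construction of this kind and the accompanying $\vartheta$ estimate, your proposal does not establish the statement.
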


A further explicit separation was obtained by Hu, Tamo, and Shayevitz
(see \cite[Example~6 and Remark~3]{HuTS18}). Let \(S\) denote the
complement of the Schl\"afli graph, and let
\begin{align}
G \coloneqq S+7C_5,
\end{align}
where \(+\) denotes the disjoint union. They showed that
\[
\mathcal{H}_f(G;\mathbb{F}_{11})
\leq \tfrac{49}{2} < 9+7\sqrt{5} = \vartheta(G)
< 28 = \mathcal{H}(G),
\]
where
\begin{align}
\mathcal{H}(G) \coloneqq \min_{\mathbb{F}'} \mathcal{H}(G;\mathbb{F}')
\end{align}
is the field-independent Haemers number. Thus, this graph satisfies
\begin{align}
\mathcal{H}_f(G;\mathbb{F}_{11}) < \vartheta(G) < \mathcal{H}(G).
\end{align}

\subsection{Kneser graphs}
\label{subsection: Kneser graphs}

Kneser graphs form an important family of highly symmetric graphs in
algebraic and extremal graph theory. 
\begin{definition}
\label{definition: Kneser graphs}
{\em The Kneser graph \(\mathrm{KG}(n,k)\), defined for integers \(n \geq 2k\), has as its
vertices all \(k\)-element subsets of \([n] \coloneqq \{1,\ldots,n\}\), with
two vertices adjacent if and only if the corresponding subsets are disjoint.}
\end{definition}
By Definition~\ref{definition: Kneser graphs}, \(\mathrm{KG}(n,k)\) has \(\binom{n}{k}\) 
vertices and is \(\binom{n-k}{k}\)-regular. Independent sets in \(\mathrm{KG}(n,k)\) 
correspond precisely to intersecting families of \(k\)-element subsets of \([n]\).
Consequently, the Erd\H{o}s--Ko--Rado theorem gives
\begin{align}
\label{eq: alpha of Kneser graphs}
\alpha\bigl(\mathrm{KG}(n,k)\bigr) = \binom{n-1}{k-1}, \qquad n \geq 2k.
\end{align}
Moreover, the Shannon capacity of \(\mathrm{KG}(n,k)\) equals its independence number 
(see \cite[Theorem~13]{Lovasz79}), and hence
\begin{align}
\label{eq: capacity of Kneser graphs}
\Theta\bigl(\mathrm{KG}(n,k)\bigr) = \binom{n-1}{k-1}, \qquad n \geq 2k.
\end{align}

\begin{example}
\label{example: Kneser graph Petersen}
{\em The Kneser graph \(KG(5,2)\) is isomorphic to the well-known Petersen graph.
Its vertices are the ten 2-element subsets of \([5]\), with two
vertices adjacent if and only if the corresponding subsets are
disjoint. For example, the vertex \(\{1,2\}\) is adjacent precisely to
\(\{3,4\}\), \(\{3,5\}\), and \(\{4,5\}\). Thus, \(KG(5,2)\) is a
3-regular graph on 10 vertices. By \eqref{eq: alpha of Kneser graphs}, 
its independence number is \(4\); for example,
\(
\bigl\{\{1,2\},\{1,3\},\{1,4\},\{1,5\}\bigr\}
\)
is a maximum independent set. Moreover, by \eqref{eq: capacity of Kneser graphs}, 
its Shannon capacity is also equal to \(4\), that is,
\[
\Theta\bigl(KG(5,2)\bigr)=\alpha\bigl(KG(5,2)\bigr)=4.
\]}
\end{example}

\noindent 
We next define $q$-analogues of Kneser graphs and Gaussian binomial coefficients; the interested 
reader is referred to \cite[Chapter~9]{GodsilMeagher2016} for more information.
\begin{definition}\label{definition:q-gaussian_coefficient}
{\em Let \(n,k,m \geq 1\) be integers with \(k\leq n\), let \(p\) be a prime, and
let \(q \coloneq p^m\). Let \(\mathbb{F}_q\) denote the finite field of order \(q\). 
The \emph{Gaussian binomial coefficient}, denoted by $\gbinom{n}{k}{q}$, is defined as 
\begin{align}\label{eq:q-gaussian_coefficient}
\gbinom{n}{k}{q} \coloneqq \frac{(q^n - 1)(q^n-q)\cdots(q^n - q^{k-1})}{(q^k - 1)(q^k - q)\cdots (q^k - q^{k-1})}.
\end{align}}
\end{definition}
The Gaussian coefficient admits the following combinatorial interpretation.
Let $V$ be an $n$-dimensional vector space over $\mathbb{F}_q$. Then, the number of distinct $k$-dimensional subspaces of $V$
is equal to $\gbinom{n}{k}{q}$. 
Allowing \(q\) to vary continuously and taking \(q\to 1\) gives
\begin{align}
\lim_{q \to 1} \, \gbinom{n}{k}{q} = \frac{n}{k} \cdot \frac{n-1}{k-1} \cdots \frac{n-(k-1)}{k-(k-1)} 
= \binom{n}{k},
\end{align}
thus converging to the binomial coefficient.

\begin{definition}[$q$-Kneser graphs]
\label{definition:q_kneser_graph}
{\em Let $V(n,q)$ denote the $n$-dimensional vector space over the finite field $\mathbb{F}_q$, where $q$ is a prime power.
The $q$-Kneser graph $\mathrm{KG}_q(n,k)$ has as its vertices the $k$-dimensional subspaces of $V(n,q)$, and any two vertices
are adjacent if and only if their intersection is the zero subspace.}
\end{definition}

\noindent
By \cite[Theorem~6.6]{LaviSason2026},
\begin{align}
\Theta\bigl(\mathrm{KG}_q(n,k)\bigr)
=\alpha\bigl(\mathrm{KG}_q(n,k)\bigr)
=\gbinom{n-1}{k-1}{q}, \qquad n \geq 2k.
\end{align}

\subsection{Paley graphs}
\label{subsection: Paley graphs}

\begin{definition}
\label{definition: Paley graphs}
{\em Let \(n \geq 5\) be a prime power such that
\(
n\equiv 1 \pmod 4.
\)
The \emph{Paley graph} \(P(n)\) is the graph with vertex set
\(
V(P(n))=\mathbb{F}_n,
\)
in which two distinct vertices \(x,y\in\mathbb{F}_n\) are adjacent if
and only if \(x-y\) is a nonzero square in \(\mathbb{F}_n\).}
\end{definition}

The congruence condition \(n\equiv 1\pmod 4\) ensures that \(-1\) is a
square in \(\mathbb{F}_n\). Hence, \(x-y\) is a square if and only if
\(y-x\) is a square, so the adjacency relation is symmetric.
Paley graphs are self-complementary, vertex-transitive, and strongly
regular. More precisely, \(P(n)\) has parameters
\[
\operatorname{srg}\left(
n,\frac{n-1}{2},\frac{n-5}{4},\frac{n-1}{4}
\right).
\]
Thus, \(P(n)\) is a regular graph on \(n\) vertices of degree
\(\frac{n-1}{2}\); every pair of adjacent vertices has exactly
\(\lambda=\frac{n-5}{4}\) common neighbors, whereas every pair 
of nonadjacent vertices has exactly \( \mu=\frac{n-1}{4} \)
common neighbors. 

\begin{example}
\label{example: Paley graph P5}
\emph{The smallest Paley graph, \(P(5)\), is isomorphic to the  
cycle \(C_5\). Indeed, the nonzero squares in
\(\mathbb{F}_5=\{0,1,2,3,4\}\) are \(1\) and \(4=-1\). Hence, two
distinct vertices \(x,y\in\mathbb{F}_5\) are adjacent if and only if
\(
x-y\equiv \pm 1 \pmod 5,
\)
which gives the 5-length cycle
\(
0\mathbin{-}1\mathbin{-}2\mathbin{-}3\mathbin{-}4\mathbin{-}0.
\)
Accordingly, \(P(5)\) is self-complementary and vertex-transitive, and
it is strongly regular with parameters
\(
\operatorname{srg}(5,2,0,1).
\)
In particular, every pair of adjacent vertices has no common neighbor,
whereas every pair of nonadjacent vertices has exactly one common
neighbor. Moreover, \(P(5)\) is the unique triangle-free Paley graph:
for every admissible \(n>5\), each pair of adjacent vertices in \(P(n)\)
has
\(
\lambda=\frac{n-5}{4} \geq 1
\)
common neighbors and therefore lies in a triangle.}
\end{example}

\section{Lov\'{a}sz theta function for lexicographic graph products}
\label{section: Lovasz theta-function under a lexicographic product of graphs}

The next result was established in greater generality in \cite[Theorem~8.2]{Roberson2016}. 
We provide an alternative simplified proof for the specialized setting considered here.
\begin{theorem}
\label{theorem: Lovasz theta function of lexicographic product of graphs}
For all finite simple graphs \(G\) and \(H\),
\begin{align}
\label{eq: Lovasz theta function of lexicographic product of graphs}
\vartheta(G \circ H)=\vartheta(G) \, \vartheta(H).
\end{align}
\end{theorem}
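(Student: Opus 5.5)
We prove the two inequalities \(\vartheta(G\circ H)\geq\vartheta(G)\vartheta(H)\) and \(\vartheta(G\circ H)\leq\vartheta(G)\vartheta(H)\) separately. The lower bound uses the primal SDP \eqref{eq: SDP problem - Lovasz theta-function}; the upper bound uses its dual. Let \(n=|V(G)|\) and \(m=|V(H)|\), and index rows and columns by \(V(G)\times V(H)\).

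\emph{Lower bound.} Take optimal feasible matrices \(\mathbf{B}_G\) and \(\mathbf{B}_H\) for \eqref{eq: SDP problem - Lovasz theta-function} and form \(\mathbf{B}=\mathbf{B}_G\otimes\mathbf{B}_H\). This matrix is positive semidefinite and \(\mathrm{Tr}(\mathbf{B})=1\). Moreover,
\[
\mathrm{Tr}(\mathbf{B}\mathbf{J}_{nm})=\mathrm{Tr}(\mathbf{B}_G\mathbf{J}_n)\,\mathrm{Tr}(\mathbf{B}_H\mathbf{J}_m).
\]
For feasibility, suppose \((g,h)\sim(g',h')\) in \(G\circ H\). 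Either \(g\sim_G g'\), in which case \((\mathbf{B}_G)_{g,g'}=0\), or \(g=g'\) and \(h\sim_H h'\), in which case \((\mathbf{B}_H)_{h,h'}=0\). In both cases the corresponding entry of \(\mathbf{B}\) vanishes. Hence \(\vartheta(G\circ H)\geq\vartheta(G)\vartheta(H)\). The same argument gives \(\vartheta(G\boxtimes H)\geq\vartheta(G)\vartheta(H)\), which is consistent with Lemma~\ref{lemma: spanning subgraph}.

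\emph{Upper bound.} Use the dual characterization
\[
\vartheta(F)=\min\bigl\{\lambda_{\max}(\mathbf{Z}) : \mathbf{Z}=\mathbf{Z}^{\mathrm T},\ Z_{ij}=1 \text{ whenever } i=j \text{ or } i\not\sim_F j\bigr\}.
\]
Equivalently, \(\vartheta(F)=\min t\) subject to \(t\mathbf{I}-\mathbf{J}+\mathbf{Y}\succeq 0\) with \(\mathbf{Y}\) supported on the edges of \(F\). We regard this as standard SDP duality for \eqref{eq: SDP problem - Lovasz theta-function}.

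Take optimal \(\mathbf{Z}_G\) and \(\mathbf{Z}_H\), and first try the dual witness
\[
\mathbf{Z}=\mathbf{Z}_G\otimes\mathbf{J}_m+\mathbf{I}_n\otimes(\mathbf{Z}_H-\mathbf{J}_m).
\]
We check the entry constraints. If \(g\neq g'\) and \(g\not\sim_G g'\), then \(Z_{(g,h),(g',h')}=(\mathbf{Z}_G)_{g,g'}=1\). If \(g=g'\), the entry equals \((\mathbf{Z}_G)_{g,g}+(\mathbf{Z}_H)_{h,h'}-1=(\mathbf{Z}_H)_{h,h'}\), which is \(1\) when \(h=h'\) or \(h\not\sim_H h'\). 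Entries where \(g\sim_G g'\) are unconstrained. So \(\mathbf{Z}\) is feasible.

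The main obstacle is bounding \(\lambda_{\max}(\mathbf{Z})\) by \(\vartheta(G)\vartheta(H)\). The two summands do not commute, so their eigenvalues cannot simply be multiplied. A cleaner witness is
\[
\mathbf{Z}'=\mathbf{Z}_G\otimes\mathbf{Z}_H,
\]
whose largest eigenvalue is exactly \(\lambda_{\max}(\mathbf{Z}_G)\lambda_{\max}(\mathbf{Z}_H)=\vartheta(G)\vartheta(H)\), since \(\lambda_{\max}(\mathbf{Z}_G)\), \(\lambda_{\max}(\mathbf{Z}_H)\geq 1\) dominate the absolute values of the negative eigenvalues. I would prove that domination first; if it turns out false, I would instead work with the PSD form \(t\mathbf{I}-\mathbf{Z}\succeq 0\).

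Check \(\mathbf{Z}'\) entrywise. If \(g=g'\), the entry is \((\mathbf{Z}_H)_{h,h'}\), which is correct. If \(g\not\sim_G g'\) with \(g\neq g'\), the entry is \((\mathbf{Z}_H)_{h,h'}\), which need not equal \(1\) when \(h\sim_H h'\). This is the gap. I would repair it by replacing \(\mathbf{Z}_H\) off the diagonal blocks with \(\mathbf{J}_m\), i.e.\ return to a mixture like \(\mathbf{Z}\) above.

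To bound \(\lambda_{\max}(\mathbf{Z})\), I would use the orthonormal-representation form instead. Take a handle \(c\) and unit vectors \(u_g\) for \(\overline{G}\) (with \(u_g\perp u_{g'}\) whenever \(g\not\sim_G g'\)) achieving \(\sum_g (c^{\mathrm T}u_g)^{-2}\)-type optimality, and similarly \(d, v_h\) for \(\overline{H}\). Then the tensor vectors \(u_g\otimes v_h\) with handle \(c\otimes d\) give a representation of the complement of \(G\circ H\): nonadjacent pairs in \(G\circ H\) have either \(g\neq g'\) nonadjacent in \(G\), giving \(u_g\perp u_{g'}\), or \(g=g'\) and \(h\not\sim_H h'\), giving \(v_h\perp v_{h'}\). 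The value is \(\max_{g,h}\bigl((c^{\mathrm T}u_g)(d^{\mathrm T}v_h)\bigr)^{-2}=\vartheta(G)\vartheta(H)\), using Lovász's min–max formula \(\vartheta(F)=\min_{c,(u_i)}\max_i (c^{\mathrm T}u_i)^{-2}\) over orthonormal representations. This tensor construction yields the upper bound directly. I expect verifying the orthogonality convention — that the representation is of the correct graph, \(F\) versus \(\overline{F}\), for the min-formula — to be the only delicate point.
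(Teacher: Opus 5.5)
\textbf{Verdict: correct, with a slightly different upper-bound route.} Your lower bound is identical to the paper's: the tensor product of optimal primal matrices is feasible for \eqref{eq: SDP problem - Lovasz theta-function}.

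\textbf{Upper bound: how the two routes compare.} The paper argues in one line. By Lemma~\ref{lemma: spanning subgraph}, $G\boxtimes H$ is a spanning subgraph of $G\circ H$. Adding edges only adds constraints $B_{ij}=0$ to the SDP, so $\vartheta(G\circ H)\le\vartheta(G\boxtimes H)$. Then $\vartheta(G\boxtimes H)=\vartheta(G)\vartheta(H)$ by \cite[Theorem~7]{Lovasz79}.

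Your final argument instead builds the witness directly: the tensor representation $u_g\otimes v_h$ with handle $c\otimes d$. This is correct. The convention you flagged is also fine. Requiring $u_g\perp u_{g'}$ for distinct nonadjacent $g,g'$ is exactly what gives $\alpha(F)\le\max_i(c^{\mathrm T}u_i)^{-2}$. So the min--max value is the $\vartheta(F)$ of the paper's SDP, not that of the complement; calling it a representation ``for $\overline{G}$'' is only a naming choice.

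The two routes are the same argument packaged differently. Your construction is precisely Lov\'{a}sz's proof of $\vartheta(G\boxtimes H)\le\vartheta(G)\vartheta(H)$. It transfers verbatim to $G\circ H$ because every distinct nonadjacent pair in $G\circ H$ is also nonadjacent in $G\boxtimes H$, which is Lemma~\ref{lemma: spanning subgraph} in disguise.
\begin{itemize}
\item The paper's version is shorter and reuses a lemma it needs later anyway.
\item Yours exhibits an explicit feasible witness and avoids citing the strong-product theorem. It is not more elementary in prerequisites, though: it relies on the equally nontrivial equivalence between the SDP \eqref{eq: SDP problem - Lovasz theta-function} and Lov\'{a}sz's min--max formula over orthonormal representations.
\end{itemize}

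\textbf{Remove the dual-matrix detours from a final write-up.}
\begin{itemize}
\item The witness $\mathbf{Z}_G\otimes\mathbf{Z}_H$ is infeasible, as you noticed.
\item The claim that $\lambda_{\max}$ of an optimal dual matrix dominates the absolute values of its negative eigenvalues is false in general. For $F=K_3+K_3$ with adjacency matrix $\mathbf{A}_F$, the matrix $\mathbf{J}-2\mathbf{A}_F$ is feasible and optimal, since $\lambda_{\max}=2=\vartheta(F)$. Its spectrum is $\{2^{(5)},-4\}$, so even the eigenvalue step would give $\lambda_{\max}(\mathbf{Z}\otimes\mathbf{Z})=16>\vartheta(F)^2=4$.
\item The first mixed witness $\mathbf{Z}=\mathbf{Z}_G\otimes\mathbf{J}_m+\mathbf{I}_n\otimes(\mathbf{Z}_H-\mathbf{J}_m)$ is feasible but was left without an eigenvalue bound.
\end{itemize}
None of this is needed once the orthonormal-representation argument is in place.
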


\begin{proof}
We prove separately the two inequalities that give equality \eqref{eq: Lovasz theta function of lexicographic product of graphs}.

\medskip
\noindent
\textbf{Lower bound.}
We use the semidefinite-programming formulation \eqref{eq: SDP problem - Lovasz theta-function} of the Lov\'{a}sz $\vartheta$-function
\[
\vartheta(G) = \max\left\{ \sdpinner{J}{X}: X\succeq 0,\  \operatorname{Tr}(X)=1,\  X_{g,g'}=0\ \text{whenever } g \sim_G g' \right\},
\]
where \(J\) is the all-ones matrix, and \( \sdpinner{J}{X} = \operatorname{Tr}(JX) \).

Let \(X\) and \(Y\) be optimal solutions to the semidefinite programs defining \(\vartheta(G)\) and \(\vartheta(H)\), respectively, 
and let 
\[
Z \coloneqq X\otimes Y.
\]
Then
\[
Z \succeq 0, \qquad \operatorname{Tr}(Z) =\operatorname{Tr}(X) \, \operatorname{Tr}(Y)=1.
\]
If \((g,h)\sim_{G \circ H}(g',h')\), then either \(g\sim_G g'\), in which case
\(X_{g,g'}=0\), or \(g=g'\) and \(h \sim_H h'\), in which case \(Y_{h,h'}=0\).
Consequently, 
\[
Z_{(g,h),(g',h')}=X_{g,g'} \, Y_{h,h'}=0.
\]
Therefore, \(Z\) is feasible for the semidefinite program \eqref{eq: SDP problem - Lovasz theta-function} 
defining \(\vartheta(G \circ H)\): 
\begin{equation}
\label{eq1: 05.08.26}
\begin{aligned}
\vartheta(G\circ H)
=\max\bigl\{\, &
\sdpinner{J}{Z}: Z\succeq 0,\ 
\operatorname{Tr}(Z)=1, \\
& Z_{(g,h),(g',h')}=0 \ \text{whenever }\,
(g,h)\sim_{G\circ H}(g',h') \,\bigr\}.
\end{aligned}
\end{equation}
Moreover, since
\[J_{|V(G)| \, |V(H)|}=J_{|V(G)|}\otimes J_{|V(H)|},\] the standard identities
for Kronecker products give
\begin{align}
\sdpinner{J}{Z}
&= \operatorname{Tr}\bigl( (J_{|V(G)|}\otimes J_{|V(H)|})(X\otimes Y) \bigr) \nonumber \\
&= \operatorname{Tr}\bigl( (J_{|V(G)|} X) \otimes (J_{|V(H)|} Y) \bigr) \nonumber \\
&= \operatorname{Tr}(J_{|V(G)|}X) \, \operatorname{Tr}(J_{|V(H)|}Y) \nonumber \\
&= \sdpinner{J}{X}\sdpinner{J}{Y} \nonumber \\
\label{eq2: 05.08.26}
&= \vartheta(G) \, \vartheta(H),
\end{align}
which, by \eqref{eq1: 05.08.26} and \eqref{eq2: 05.08.26}, implies that
\begin{align}
\label{eq4: 06.08.26}
\vartheta(G \circ H) \geq \vartheta(G) \, \vartheta(H).
\end{align}

\medskip
\noindent
\textbf{Upper bound.}
By Lemma~\ref{lemma: spanning subgraph}, \(G\boxtimes H\) is a spanning
subgraph of \(G\circ H\). Since the Lov\'{a}sz theta function is
monotonically nonincreasing under the addition of edges, it follows that 
\[
\vartheta(G \circ H) \leq \vartheta(G \boxtimes H).
\]
Combining the last inequality with the identity $\vartheta(G \boxtimes H) = \vartheta(G) \, \vartheta(H)$ 
(see \cite[Theorem~7]{Lovasz79}) gives  
\begin{align}
\label{eq5: 06.08.26}
\vartheta(G \circ H) \leq \vartheta(G) \, \vartheta(H).
\end{align}
Combining inequalities \eqref{eq4: 06.08.26} and \eqref{eq5: 06.08.26}
proves equality \eqref{eq: Lovasz theta function of lexicographic product of graphs}.
\end{proof}

\begin{corollary}
For all integers \( m \geq 1\), the \(m\)-fold join of \(G\) with itself satisfies
\begin{align}
\vartheta\left(K_m \circ G \right) = \vartheta(G).
\end{align}
\end{corollary}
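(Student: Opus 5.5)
The plan is to apply Theorem~\ref{theorem: Lovasz theta function of lexicographic product of graphs} with \(K_m\) as the outer factor, which gives
\[
\vartheta(K_m\circ G)=\vartheta(K_m)\,\vartheta(G).
\]
It then remains only to show that \(\vartheta(K_m)=1\). First, I will note that \(K_m\circ G\) is precisely the \(m\)-fold join of \(G\) with itself. Indeed, by Definition~\ref{definition: Lexicographic product}, vertices in distinct copies \(\{i\}\times V(G)\) and \(\{j\}\times V(G)\), with \(i\neq j\), are always adjacent because \(i\sim_{K_m} j\). Within a single copy, adjacency is exactly that of \(G\). This justifies the phrasing of the statement.

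To evaluate \(\vartheta(K_m)\), I will use the sandwich inequalities \eqref{eq: sandwich}:
\[
1=\alpha(K_m)\leq\vartheta(K_m)\leq\chi(\overline{K_m})=1,
\]
since \(\overline{K_m}\) is edgeless. A direct check through the SDP \eqref{eq: SDP problem - Lovasz theta-function} also works. Every off-diagonal entry of a feasible \(\mathbf{B}\) must vanish, so \(\mathbf{B}\) is diagonal with trace \(1\), and hence \(\mathrm{Tr}(\mathbf{B}\mathbf{J}_m)=\mathrm{Tr}(\mathbf{B})=1\). Substituting \(\vartheta(K_m)=1\) into the multiplicativity identity completes the proof.

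There is no real obstacle here. The only point needing care is to use the correct order of factors, with \(K_m\) outer, so that the product is the join. The theorem itself holds for either order, but the identification of \(K_m\circ G\) with the join depends on this order.
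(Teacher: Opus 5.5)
Your proof is correct and follows the paper's argument: apply the multiplicativity of \(\vartheta\) under lexicographic products with \(K_m\) as the outer factor, then use \(\vartheta(K_m)=1\). Your verification of \(\vartheta(K_m)=1\) (by the sandwich inequalities or directly from the SDP) and your identification of \(K_m\circ G\) with the \(m\)-fold join fill in details that the paper leaves implicit.
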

\begin{proof}
This holds by Theorem~\ref{theorem: Lovasz theta function of lexicographic product of graphs}
and since \(\vartheta(K_m)=1\).
\end{proof}

\noindent 
Applying Theorem~\ref{theorem: Lovasz theta function of lexicographic product of graphs}
yields the following result from \cite[Theorem~20]{BukhC19}, asserting the equality of 
the fractional and ordinary Lov\'{a}sz theta functions, together with a simplified proof.
\begin{corollary}
\label{corollary: equality of the fractional and ordinary Lovasz theta functions}
Let \(G\) be a finite simple graph. Then
\begin{align}
\label{eq: equality of the fractional and ordinary Lovasz theta functions}
\vartheta_f(G)=\vartheta(G).
\end{align}
\end{corollary}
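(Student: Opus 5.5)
The plan is to apply Theorem~\ref{theorem: Lovasz theta function of lexicographic product of graphs} directly inside Definition~\ref{definition: Fractional Lovasz theta function}. For each integer \(d\geq 1\), the theorem with \(H=\overline{K_d}\) gives
\[
\vartheta\bigl(G\circ\overline{K_d}\bigr)=\vartheta(G)\,\vartheta\bigl(\overline{K_d}\bigr).
\]
Once this is in hand, the only remaining input is the value of \(\vartheta(\overline{K_d})\).

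The key step is to verify that \(\vartheta(\overline{K_d})=d\). I would do this via the sandwich inequalities \eqref{eq: sandwich}. Since \(\overline{K_d}\) is edgeless, its \(d\) vertices form an independent set, so \(\alpha(\overline{K_d})=d\). Its complement is \(K_d\), so \(\chi(K_d)=d\). Hence
\[
d\leq\vartheta(\overline{K_d})\leq d.
\]
Alternatively, the SDP \eqref{eq: SDP problem - Lovasz theta-function} has no zero constraints here, so the optimum is the largest eigenvalue of \(J_d\), namely \(d\). Either route is enough.

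Substituting, we get \(\vartheta(G\circ\overline{K_d})/d=\vartheta(G)\) for every \(d\geq 1\). The infimum over \(d\) of this constant sequence is therefore \(\vartheta(G)\), which proves \eqref{eq: equality of the fractional and ordinary Lovasz theta functions}.

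There is no real obstacle, because all the substantive work is already done in Theorem~\ref{theorem: Lovasz theta function of lexicographic product of graphs}. The one point to state carefully is that the order of the product matters: the theorem must be applied with \(G\) as the outer factor and \(\overline{K_d}\) as the inner factor, which matches the definition of \(\vartheta_f\).
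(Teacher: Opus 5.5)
Your proposal is correct and follows the paper's proof: apply Theorem~\ref{theorem: Lovasz theta function of lexicographic product of graphs} with \(H=\overline{K_d}\), use \(\vartheta(\overline{K_d})=d\), and take the infimum of the resulting constant sequence. Your sandwich (or SDP) justification of \(\vartheta(\overline{K_d})=d\) supplies the one identity the paper simply quotes.
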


\begin{proof}
By Definition~\ref{definition: Fractional Lovasz theta function},
Theorem~\ref{theorem: Lovasz theta function of lexicographic product of graphs},
and the identity 
\(
\vartheta(\overline{K_d})=d,
\)
where \(\overline{K_d}\) is the edgeless graph on \(d\) vertices, we have 
\begin{align*}
\vartheta_f(G)
=\inf_{d \geq 1} \frac{\vartheta\bigl(G\circ\overline{K_d}\bigr)}{d}  
=\inf_{d \geq 1} \frac{\vartheta(G) \, \vartheta(\overline{K_d})}{d}  
=\vartheta(G).
\end{align*}
This proves \eqref{eq: equality of the fractional and ordinary Lovasz theta functions}.
\end{proof}

\section{Fractional Haemers number for lexicographic graph products}
\label{section: Fractional Haemers number under a lexicographic product of graphs}

The following result establishes the multiplicativity of the fractional
Haemers number under the lexicographic product. It was stated by Fritz
in \cite[Proposition~4.13]{Fritz21}, where the lexicographic-product
argument is credited to Cox and described as an adaptation of the
argument of Bukh and Cox \cite{BukhC19}. The proof presented there is 
different and relies on previously established results. We next give an 
alternative, self-contained proof.
 
\begin{theorem}
\label{theorem: Multiplicativity of H_f under the lexicographic product}
Let \(G\) and \(H\) be finite simple graphs, and let \(\mathbb{F}\) be
a field. Then
\begin{align}
\label{eq3: 05.08.26}
\mathcal{H}_f(G\circ H;\mathbb{F})
= \mathcal{H}_f(G;\mathbb{F}) \, \mathcal{H}_f(H;\mathbb{F}).
\end{align}
\end{theorem}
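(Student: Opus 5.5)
The proof will establish the two inequalities separately. The upper bound follows from the strong product, in the same way as for Theorem~\ref{theorem: Lovasz theta function of lexicographic product of graphs}. The lower bound will come from a direct rank argument that turns any representation of \(G\circ H\) into a representation of \(G\).

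\textbf{Upper bound.} The first step is to note that \(\mathcal{H}_f(\cdot;\mathbb{F})\) is nonincreasing under adding edges. A \(d\)-representation of a spanning subgraph is also a \(d\)-representation of the larger graph, because the larger graph has fewer zero-block constraints. By Lemma~\ref{lemma: spanning subgraph}, \(G\boxtimes H\) is a spanning subgraph of \(G\circ H\). Combining this with the multiplicativity of \(\mathcal{H}_f\) under strong products \cite[Theorem~3]{BukhC19}, already used in the proof of Theorem~\ref{theorem: Shannon capacity is upper bounded by the fractional Haemers number}, gives
\[
\mathcal{H}_f(G\circ H;\mathbb{F})\le \mathcal{H}_f(G\boxtimes H;\mathbb{F})=\mathcal{H}_f(G;\mathbb{F})\,\mathcal{H}_f(H;\mathbb{F}).
\]

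\textbf{Lower bound.} Fix \(D\ge1\) and a \(D\)-representation \(M\) of \(G\circ H\), with rows and columns indexed by \(V(G)\times V(H)\times[D]\).

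For each \(g\in V(G)\), consider the principal submatrix \(M^{(g)}\) on the copy \(\{g\}\times V(H)\times[D]\). This submatrix is a \(D\)-representation of \(H\): its diagonal blocks are \(I_D\), and its blocks at nonadjacent pairs \(h\neq h'\) are zero, since \((g,h)\not\sim(g,h')\). Hence
\[
\operatorname{rank}M^{(g)}\ge D\,\mathcal{H}_f(H;\mathbb{F})\quad\text{for every } g,
\]
and we set \(k\coloneqq\lceil D\,\mathcal{H}_f(H;\mathbb{F})\rceil\), so that each \(M^{(g)}\) has rank at least \(k\).

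For each \(g\), choose row indices \(R_g\) and column indices \(C_g\), each of size \(k\) and inside copy \(g\), such that the submatrix \(A_g\coloneqq M[R_g,C_g]\) is invertible. Then define a \(k\)-representation \(B\) of \(G\) by the blocks
\[
B_{g,g'}\coloneqq A_g^{-1}\,M[R_g,C_{g'}].
\]
We check the three required properties.
\begin{enumerate}
\item The diagonal blocks satisfy \(B_{g,g}=I_k\).
\item If \(g\neq g'\) are nonadjacent in \(G\), then no vertex of copy \(g\) is adjacent in \(G\circ H\) to any vertex of copy \(g'\). So the whole block of \(M\) between these copies vanishes, and hence \(B_{g,g'}=0\) (and \(B_{g',g}=0\) likewise).
\item The matrix \(B\) equals \(\operatorname{diag}(A_g^{-1})\) times a submatrix of \(M\), so \(\operatorname{rank}B\le\operatorname{rank}M\).
\end{enumerate}
It follows that
\[
\operatorname{rank}M\ge k\,\mathcal{H}_f(G;\mathbb{F})\ge D\,\mathcal{H}_f(G;\mathbb{F})\,\mathcal{H}_f(H;\mathbb{F}).
\]
Dividing by \(D\) and taking the infimum over \(D\) and \(M\) yields the reverse inequality, which proves \eqref{eq3: 05.08.26}.

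The main obstacle is the lower bound, and specifically the step of extracting a genuine representation of \(G\) with identity diagonal blocks of a common size. The normalization \(A_g^{-1}\) handles the diagonal condition. What must be checked carefully is that rows and columns may be chosen from different index sets \(R_g\neq C_g\) without breaking the zero-block condition. This holds because the entire cross block of \(M\) between copies \(g\) and \(g'\) vanishes whenever \(g\not\sim_G g'\).
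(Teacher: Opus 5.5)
Your proof is correct and follows the paper's argument in the substantive (lower-bound) direction: your block $B_{g,g'}=A_g^{-1}M[R_g,C_{g'}]$ is exactly the paper's $Q_{u,v}=C_uD_v=T_u^{-1}R_u(A_uB_v)S_v$, with $\lceil D\,\mathcal{H}_f(H;\mathbb{F})\rceil$ in place of the paper's common dimension $r=\min_u r_u$, and your block-diagonal-times-submatrix rank bound in place of its rank factorization. The only real difference is the upper bound: the paper stays self-contained by checking directly that $P_{(u,x),(v,y)}=M_{u,v}\otimes N_{x,y}$ is a $de$-representation of $G\circ H$, whereas you route through the spanning-subgraph lemma and the cited strong-product multiplicativity of Bukh and Cox, which is equally valid.
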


\begin{proof}
We prove equality \eqref{eq3: 05.08.26} by establishing the two corresponding
inequalities separately.

\smallskip \noindent {\textbf{Upper bound.}}
We first prove an upper bound on $\mathcal{H}_f(G\circ H;\mathbb{F})$. 
Let $d, e \geq 1$ be arbitrary integers, let \(M\) be a \(d\)-representation of
\(G\) over \(\mathbb{F}\), and let \(N\) be an \(e\)-representation of
\(H\) over \(\mathbb{F}\). Define a block matrix \(P\), indexed by
\(V(G)\times V(H)\), by
\[
P_{(u,x),(v,y)}\coloneqq M_{u,v}\otimes N_{x,y}, \qquad \text{for all  } 
u,v \in V(G) \text {  and  } x,y \in V(H). 
\]
Each block of \(P\) has size \(de \times de\). Moreover, for every 
\( u \in V(G) \) and \( x \in V(H) \)
\[
P_{(u,x),(u,x)}
= M_{u,u}\otimes N_{x,x}
= I_d \otimes I_e
= I_{de}.
\]
Suppose that \((u,x)\) and \((v,y)\) are distinct and nonadjacent in
\(G\circ H\). Then either
\[
u\neq v \quad \text{and} \quad \{u,v\}\notin E(G),
\]
or
\[
u=v \quad \text{and} \quad \{x,y\}\notin E(H).
\]
In the first case, \(M_{u,v}=O_d\), and in the second case,
\(N_{x,y}=O_e\). Hence
\[
P_{(u,x),(v,y)}=O_{de}.
\]
Thus, \(P\) is a \(de\)-representation of \(G\circ H\). After suitable
permutations of its scalar rows and columns, \(P\) is equal to the
Kronecker product \(M\otimes N\). Since such permutations preserve
rank,
\begin{align*}
\operatorname{rank}_{\mathbb{F}}(P)
&= \operatorname{rank}_{\mathbb{F}}(M\otimes N) \nonumber \\
&= \operatorname{rank}_{\mathbb{F}}(M)\, \operatorname{rank}_{\mathbb{F}}(N),
\end{align*}
and consequently, since \(P\) is a \(de\)-representation of \(G\circ H\), the definition
of the fractional Haemers number gives
\[
\mathcal{H}_f(G\circ H;\mathbb{F})
\leq \frac{\operatorname{rank}_{\mathbb{F}}(P)}{de} 
= \frac{\operatorname{rank}_{\mathbb{F}}(M)}{d}\, \frac{\operatorname{rank}_{\mathbb{F}}(N)}{e}.
\]
This inequality holds for every \(d\)-representation \(M\) of \(G\) and
every \(e\)-representation \(N\) of \(H\) over \(\mathbb{F}\). Since
these choices are independent, taking the infimum of the rightmost term
over all integers \(d,e \geq 1\) and all corresponding representations
\(M\) and \(N\), we obtain
\begin{align}
\label{eq7: 06.08.26}
\mathcal{H}_f(G\circ H;\mathbb{F})
\leq \mathcal{H}_f(G;\mathbb{F}) \, \mathcal{H}_f(H;\mathbb{F}).
\end{align}

\smallskip \noindent {\textbf{Reverse inequality.}}
We next prove the reverse inequality. 
The argument consists of four steps. Starting from an arbitrary \(k\)-representation \(P\) 
of \(G\circ H\), we first restrict \(P\) to the copies of \(H\). We then compress these 
restrictions to a common dimension \(r\) and use them to construct an \(r\)-representation 
of \(G\). Finally, we combine the resulting bounds for \(G\) and \(H\).

\smallskip \noindent {\textbf{Step 1: Restriction to the copies of \( H \).}}
Let \(k\) be an arbitrary positive 
integer and let \(P\) be a \(k\)-representation of \(G\circ H\) over
\(\mathbb{F}\). Set
\begin{align}
\label{eq8: 06.08.26}
n\coloneqq\operatorname{rank}_{\mathbb{F}}(P),
\end{align}
and choose a rank factorization
\begin{align}
\label{eq9: 06.08.26}
P=AB,
\end{align}
where \(A\) has \(n\) columns and \(B\) has \(n\) rows.

\noindent 
The matrix \(P\) is a block matrix indexed by \(V(G)\times V(H)\),
with each block having size \(k\times k\). Thus, its scalar rows and
columns may be indexed by
\[
V(G)\times V(H) \times [k],
\qquad
[k] \coloneqq \{1,\ldots,k\}.
\]
The rows of \(A\) and the columns of \(B\) inherit the corresponding
scalar indices from the rank factorization \(P=AB\).

\noindent 
For each \(u\in V(G)\), let \(A_u\) be the submatrix of \(A\) whose
rows are indexed by
\[
\{u\}\times V(H)\times[k],
\]
and let \(B_u\) be the submatrix of \(B\) whose columns are indexed by
the same set. Then
\begin{align}
\label{eq10: 06.08.26}
P_u \coloneqq A_u B_u
\end{align}
is the principal submatrix of \(P\) corresponding to
\(\{u\}\times V(H)\). In particular, \(P_u\) is a
\(k\)-representation of \(H\). Define
\begin{align}
\label{eq11: 06.08.26}
r_u\coloneqq\operatorname{rank}_{\mathbb{F}}(P_u),
\qquad
r\coloneqq\min_{u\in V(G)}r_u.
\end{align}
Since \(P_u\) contains an identity matrix \(I_k\) as a diagonal
block, we have \(r_u\geq k\) for every \(u\in V(G)\). In particular,
by \eqref{eq11: 06.08.26},
\[
r \geq k > 0.
\]
Since \(P_u\) is a \(k\)-representation of \(H\) over $\mathbb{F}$, 
from \eqref{eq: fractional Haemers number} and \eqref{eq11: 06.08.26},
for every \(u\in V(G)\) we have
\[
\frac{r_u}{k} \geq \mathcal{H}_f(H;\mathbb{F}).
\]
Taking the minimum over \(u\in V(G)\), we obtain
\begin{align}
\label{eq12: 06.08.26}
\frac{r}{k} \geq \mathcal{H}_f(H;\mathbb{F}).
\end{align}

\smallskip \noindent {\textbf{Step 2: Compression to the common dimension \(r\).}}
We next prepare the construction of an \(r\)-representation of \(G\). For each
\(u\in V(G)\), we have by \eqref{eq10: 06.08.26} and \eqref{eq11: 06.08.26}
\begin{align}
\label{eq13: 06.08.26}
\operatorname{rank}_{\mathbb{F}}(A_u B_u) = r_u \geq r.
\end{align}
Consequently, \(A_u B_u\) contains a nonsingular \(r \times r\) submatrix. 
Hence there exist binary row- and column-selection matrices
\[
R_u \in \mathbb{F}^{\, r\times k|V(H)|}, \qquad S_u \in \mathbb{F}^{\, k|V(H)|\times r},
\]
respectively, such that
\begin{align}
\label{eq14: 06.08.26}
T_u \coloneqq R_u A_u B_u S_u
\end{align}
is nonsingular. Define
\begin{align}
\label{eq15: 06.08.26}
X_u \coloneqq T_u^{-1} R_u, \qquad Y_u \coloneqq S_u.
\end{align}
Then, by \eqref{eq14: 06.08.26} and \eqref{eq15: 06.08.26},
\begin{align}
\label{eq16: 06.08.26}
X_u A_u B_u Y_u
= T_u^{-1} R_u A_u B_u S_u
= T_u^{-1} T_u
= I_r.
\end{align}
Upon setting
\begin{align}
\label{eq17: 06.08.26}
C_u \coloneqq X_u A_u \in \mathbb{F}^{r\times n}, \qquad
D_u \coloneqq B_u Y_u \in\mathbb{F}^{n\times r},
\end{align}
it follows from \eqref{eq16: 06.08.26} and \eqref{eq17: 06.08.26} that
\begin{align}
\label{eq18: 06.08.26}
C_u D_u = X_u A_u B_u Y_u = I_r.
\end{align}

\smallskip \noindent {\textbf{Step 3: Construction of an \(r\)-representation of \(G\).}}
Define a block matrix \(Q\), indexed by \(V(G)\), by
\begin{align}
\label{eq19: 06.08.26}
Q_{u,v} \coloneqq C_u D_v.
\end{align}
By \eqref{eq18: 06.08.26}, its diagonal blocks satisfy
\begin{align}
\label{eq20: 06.08.26}
Q_{u,u} = C_u D_u = I_r.
\end{align}
Now suppose that \(u\neq v\) and \(\{u,v\}\notin E(G)\). Every vertex
in the copy of \(H\) corresponding to \(u\) is then nonadjacent in
\(G\circ H\) to every vertex in the copy corresponding to \(v\).
Hence the corresponding submatrix of \(P\) is zero, that is,
\begin{align}
\label{eq21: 06.08.26}
A_u B_v = O_{k \, |V(H)|},
\end{align}
and since \(C_u = X_u A_u\) and \(D_v = B_v Y_v\) (see \eqref{eq17: 06.08.26}), we obtain
\begin{align}
\label{eq22: 06.08.26}
C_u D_v = X_u A_u B_v Y_v = O_r.
\end{align}
Thus, \(Q\) is an \(r\)-representation of \(G\) over \( \mathbb{F} \).

It remains to relate the rank of \(Q\) to that of \(P\).
Fix an ordering \(V(G)=\{u_1,\ldots,u_m\}\), and define
\begin{align}
\label{eq23: 06.08.26}
\widehat C\coloneqq
\begin{pmatrix}
C_{u_1}\\
\vdots\\
C_{u_m}
\end{pmatrix},
\qquad
\widehat D\coloneqq
\begin{pmatrix}
D_{u_1} & \cdots & D_{u_m}
\end{pmatrix}.
\end{align}
By the definition of the blocks of \(Q\),
\begin{align}
\label{eq24: 06.08.26}
Q=\widehat C\widehat D.
\end{align}
Since \(\widehat C\) has \(n\) columns and \(\widehat D\) has \(n\)
rows, it follows that
\begin{align}
\label{eq25: 06.08.26}
\operatorname{rank}_{\mathbb{F}}(Q)\leq n,
\end{align}
and therefore
\begin{align}
\mathcal{H}_f(G;\mathbb{F})
& \leq \frac{\operatorname{rank}_{\mathbb{F}}(Q)}{r} \nonumber \\
\label{eq26: 06.08.26}
& \leq \frac{n}{r}.
\end{align}

\smallskip \noindent {\textbf{Step 4: Completion of the proof.}}
Combining \eqref{eq8: 06.08.26}, \eqref{eq12: 06.08.26}, and \eqref{eq26: 06.08.26} yields 
\begin{align}
\mathcal{H}_f(G;\mathbb{F}) \, \mathcal{H}_f(H;\mathbb{F})
&\leq \frac{n}{r}\,\frac{r}{k} \nonumber \\
\label{eq27: 06.08.26}
&= \frac{\operatorname{rank}_{\mathbb{F}}(P)}{k}.
\end{align}
Since this holds for all integers \(k \geq 1\) and all \(k\)-representations 
\(P\) of \(G\circ H\), taking the infimum of the rightmost term in 
\eqref{eq27: 06.08.26} over all such \(k\) and \(P\) yields
\begin{align}
\label{eq28: 06.08.26}
\mathcal{H}_f(G;\mathbb{F})\, \mathcal{H}_f(H;\mathbb{F})
\leq \mathcal{H}_f(G\circ H;\mathbb{F}).
\end{align}
Combining inequalities \eqref{eq7: 06.08.26} and \eqref{eq28: 06.08.26} proves 
equality \eqref{eq3: 05.08.26}.
\end{proof}

\section{Shannon capacity for lexicographic graph products}
\label{section: Shannon capacity under a lexicographic product of graphs}

Our first result establishes a general comparison among the Shannon
capacities of the two ordered lexicographic products, the strong product, and
the individual factors. Since the lexicographic product is not commutative in
general, both \(G\circ H\) and \(H\circ G\) must be considered. The strong
product \(G\boxtimes H\) is a spanning subgraph of \(G\circ H\), and, up to
the natural interchange of coordinates, also of \(H\circ G\). Hence, the
monotonicity of Shannon capacity with respect to the addition of edges gives
an upper bound on the capacities of both lexicographic products. Together
with a product construction for independent sets and the multiplicativity of
the Lov\'{a}sz theta function under the strong product, this yields the
following chain of bounds.

\begin{theorem}[Bounds for the Shannon capacity]
\label{theorem: capacity bounds of a lexicographic product}
For all finite simple graphs \(G\) and \(H\),
\begin{align}
\Theta(G) \, \Theta(H) &\leq \min\{ \Theta(G \circ H), \Theta(H \circ G) \} \nonumber \\
& \leq \max\{ \Theta(G \circ H), \Theta(H \circ G) \} \nonumber \\
\label{eq: capacity bounds of a lexicographic product}
& \leq \Theta(G \boxtimes H) \leq \vartheta(G) \, \vartheta(H).
\end{align}
In particular,
\begin{align}
\label{eq39: 06.08.26}
\hspace*{-0.2cm} \Theta(G \boxtimes H)=\Theta(G)\,\Theta(H)
&\;\Longrightarrow\;
\Theta(G \circ H)=\Theta(G)\,\Theta(H), \\
\hspace*{-0.2cm} \bigl(\Theta(G)=\vartheta(G),\;
\Theta(H)=\vartheta(H)\bigr)
&\;\Longrightarrow\;
\Theta(G \circ H)=\Theta(H \circ G) \notag\\
&\hphantom{\;\Longrightarrow\;\Theta(G \circ H)}
\kern0.1cm
=\vartheta(G)\,\vartheta(H).
\label{eq40: 06.08.26}
\end{align}
\end{theorem}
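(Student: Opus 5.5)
The plan is to prove the chain in \eqref{eq: capacity bounds of a lexicographic product} link by link, and then read off the two implications \eqref{eq39: 06.08.26} and \eqref{eq40: 06.08.26} as squeeze consequences.

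\textbf{First link: supermultiplicativity.} For the inequality \(\Theta(G)\,\Theta(H)\le\Theta(G\circ H)\), I would show that \((G\circ H)^{\boxtimes n}\) contains, after a coordinate permutation, a spanning subgraph isomorphic to \(G^{\boxtimes n}\circ H^{\boxtimes n}\), or rather that it is a subgraph of it. Concretely, map a vertex \(((g_1,h_1),\dots,(g_n,h_n))\) to \(((g_1,\dots,g_n),(h_1,\dots,h_n))\) and check adjacencies. Take two distinct tuples adjacent in \((G\circ H)^{\boxtimes n}\). Then each coordinate pair is equal or adjacent in \(G\circ H\). If some \(g_i\sim_G g_i'\), then the \(G\)-tuples are distinct, and every \(g\)-coordinate is equal or adjacent in \(G\), because \((g_i,h_i)\simeq(g_i',h_i')\) in \(G\circ H\) forces \(g_i=g_i'\) or \(g_i\sim g_i'\). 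Hence the \(G\)-tuples are adjacent in \(G^{\boxtimes n}\). Otherwise all \(g_i=g_i'\) and each \(h_i\) is equal or adjacent to \(h_i'\), so the tuples are adjacent in the second clause of the lexicographic product. Thus \(E\big((G\circ H)^{\boxtimes n}\big)\subseteq E\big(G^{\boxtimes n}\circ H^{\boxtimes n}\big)\). Since removing edges cannot decrease \(\alpha\), Proposition~\ref{proposition: independence and clique numbers of lexicographic product} gives
\[\alpha\big((G\circ H)^{\boxtimes n}\big)\ge\alpha\big(G^{\boxtimes n}\circ H^{\boxtimes n}\big)=\alpha\big(G^{\boxtimes n}\big)\,\alpha\big(H^{\boxtimes n}\big).\]
Taking \(n\)-th roots and letting \(n\to\infty\) yields the bound. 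By symmetry the same holds for \(H\circ G\), which gives the first inequality. This inclusion check is the only step needing real care, and I expect it to be the main obstacle; the rest is routine.

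\textbf{Remaining links.} The middle inequality (min \(\le\) max) is trivial. For \(\max\le\Theta(G\boxtimes H)\), use Lemma~\ref{lemma: spanning subgraph}: \(G\boxtimes H\) is a spanning subgraph of \(G\circ H\), and by Remark~\ref{remark: commutativity} also of \(H\circ G\) up to coordinate swap. Adding edges to a graph adds edges to its strong powers, so \(\alpha\) of each power can only drop, hence \(\Theta\) is monotone nonincreasing under adding edges. The last link is \(\Theta(G\boxtimes H)\le\vartheta(G\boxtimes H)=\vartheta(G)\,\vartheta(H)\), by Lovász's bound and \cite[Theorem~7]{Lovasz79}.

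\textbf{Consequences.} For \eqref{eq39: 06.08.26}, the first and fourth terms of the chain coincide by hypothesis, so everything in between equals \(\Theta(G)\,\Theta(H)\). For \eqref{eq40: 06.08.26}, the outer terms \(\Theta(G)\,\Theta(H)\) and \(\vartheta(G)\,\vartheta(H)\) coincide by hypothesis, so all terms in the chain are equal.
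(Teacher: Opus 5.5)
Your proposal is correct and follows essentially the same route as the paper. For the supermultiplicativity link, the paper directly checks that \(\Phi_n(\mathcal{A}\times\mathcal{B})\) is independent in \((G\circ H)^{\boxtimes n}\), where \(\mathcal{A},\mathcal{B}\) are maximum independent sets of the strong powers; this is the same fact as your edge inclusion \(E\bigl((G\circ H)^{\boxtimes n}\bigr)\subseteq E\bigl(G^{\boxtimes n}\circ H^{\boxtimes n}\bigr)\) combined with the easy direction of Proposition~\ref{proposition: independence and clique numbers of lexicographic product}. The remaining links (Lemma~\ref{lemma: spanning subgraph}, monotonicity under adding edges, Lov\'{a}sz's bound with multiplicativity of \(\vartheta\) under strong products) and the squeeze arguments for the two implications also match the paper.
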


\begin{proof}
Since the strong product is commutative, whereas the lexicographic
product is generally noncommutative (see
Remark~\ref{remark: commutativity}), it suffices to prove that
\begin{align}
\Theta(G)\,\Theta(H)
\leq \Theta(G\circ H)
\leq \Theta(G\boxtimes H)
\leq \vartheta(G)\,\vartheta(H).
\end{align}

\noindent 
Fix \(n \geq 1\), and let
\begin{align}
\label{eq29: 06.08.26}
\mathcal{A} \subseteq V\!\left(G^{\boxtimes n}\right), 
\quad \mathcal{B} \subseteq V\!\left(H^{\boxtimes n}\right)
\end{align}
be maximum independent sets. Define the map
\begin{align}
\label{eq30: 06.08.26}
\Phi_n\colon
V\!\left(G^{\boxtimes n}\right)
\times
V\!\left(H^{\boxtimes n}\right)
\longrightarrow
V\!\left((G\circ H)^{\boxtimes n}\right)
\end{align}
by
\begin{align}
\label{eq31: 06.08.26}
\Phi_n(\mathbf g,\mathbf h) \coloneqq \bigl((g_1,h_1),\ldots,(g_n,h_n)\bigr),
\end{align}
where 
\[
\mathbf g=(g_1,\ldots,g_n), \qquad \mathbf h=(h_1,\ldots,h_n).
\]
We show that
\begin{align}
\label{eq33: 06.08.26}
\mathcal{C} \coloneqq \Phi_n(\mathcal{A} \times \mathcal{B})
\end{align}
is an independent set in \((G\circ H)^{\boxtimes n}\).
Indeed, consider two distinct elements
\(\Phi_n(\mathbf g,\mathbf h)\) and
\(\Phi_n(\mathbf g',\mathbf h')\) of \(\mathcal C\).
If \(\mathbf g\neq\mathbf g'\), then the independence of
\(\mathcal A\) implies that there exists a coordinate \(j\) such that
\(g_j\neq g'_j \) and \(g_j\not\sim_G g'_j\).
It follows from the definition of the lexicographic product that
\((g_j,h_j)\) and \((g'_j,h'_j)\) are nonadjacent in \(G\circ H\).
Otherwise, if \(\mathbf g=\mathbf g'\), then \(\mathbf h\neq\mathbf h'\).
The independence of \(\mathcal B\) therefore implies that there exists
a coordinate \(j\) such that \(h_j\neq h'_j\) and \(h_j\not\sim_H h'_j\).
Since \(g_j=g'_j\), the vertices \((g_j,h_j)\) and
\((g'_j,h'_j)\) are again nonadjacent in \(G\circ H\).
Thus, \(\mathcal C\) is independent.
Furthermore, since \(\Phi_n\) is injective,
\begin{align}
\label{eq34: 06.08.26}
|\mathcal{C}| = |\mathcal{A} \times \mathcal{B}| = |\mathcal{A}|\,|\mathcal{B}|.
\end{align}
Since \(\mathcal{A}\) and \(\mathcal{B}\) are, by assumption, maximum
independent sets in \(G^{\boxtimes n}\) and \(H^{\boxtimes n}\),
respectively, and the set \(\mathcal{C}\) defined in \eqref{eq33: 06.08.26} 
is independent in \((G\circ H)^{\boxtimes n}\) with cardinality given by
\eqref{eq34: 06.08.26}, it follows that
\begin{align}
\label{eq35: 06.08.26}
\alpha\!\left((G\circ H)^{\boxtimes n}\right)
\geq \alpha\!\left(G^{\boxtimes n}\right) \,
\alpha\!\left(H^{\boxtimes n}\right).
\end{align}
Taking \(n\)-th roots of both sides of \eqref{eq35: 06.08.26} and letting 
\(n\to\infty\) yields
\begin{align}
\label{eq36: 06.08.26}
\Theta(G\circ H)\geq\Theta(G)\,\Theta(H).
\end{align}

\smallskip 
\noindent 
By Lemma~\ref{lemma: spanning subgraph}, \(G\boxtimes H\) is a
spanning subgraph of \(G\circ H\). Since Shannon capacity is
nonincreasing under the addition of edges, it follows that
\begin{align}
\label{eq37: 06.08.26}
\Theta(G\circ H)\leq\Theta(G\boxtimes H).
\end{align}
Finally, by \cite[Theorems~1 and~7]{Lovasz79},
\begin{align}
\label{eq38: 06.08.26}
\Theta(G\boxtimes H)
\leq \vartheta(G\boxtimes H)
= \vartheta(G)\,\vartheta(H).
\end{align}
Combining \eqref{eq36: 06.08.26}, \eqref{eq37: 06.08.26}, and 
\eqref{eq38: 06.08.26} completes the proof of 
\eqref{eq: capacity bounds of a lexicographic product}. 
Finally, assertions \eqref{eq39: 06.08.26} and \eqref{eq40: 06.08.26}
follow directly from \eqref{eq: capacity bounds of a lexicographic product}. 
\end{proof}

\begin{corollary}[Shannon capacity of the lexicographic power of a graph]
\label{corollary: Shannon capacity of the lexicographic power of G}
Let \(G\) be a finite, undirected, and simple graph, and let \( m \geq 1 \) be an integer. Then, 
\begin{align}
\label{eq41: 06.08.26}
\Theta(G^{\, \circ m})=\Theta(G)^m,
\end{align}
where \( G^{\, \circ m} \) denotes the $m$-fold lexicographic power of \( G \).  
\end{corollary}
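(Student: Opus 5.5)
The plan is to sandwich \(\Theta(G^{\circ m})\) between \(\Theta(G)^m\) from below and from above. The lower bound comes from iterating the supermultiplicativity in Theorem~\ref{theorem: capacity bounds of a lexicographic product}. The upper bound comes from comparing with the strong power \(G^{\boxtimes m}\), whose capacity equals \(\Theta(G)^m\) directly from the definition of \(\Theta\). The key observation is that the strong power, unlike a general strong product, poses no multiplicativity problem: \(\Theta(G^{\boxtimes m}) = \Theta(G)^m\) always holds. This is why the implication \eqref{eq39: 06.08.26} applies with no extra hypothesis.

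\textbf{Lower bound.} I argue by induction on \(m\), the case \(m=1\) being trivial. By associativity, \(G^{\circ m} \cong G^{\circ (m-1)} \circ G\). Theorem~\ref{theorem: capacity bounds of a lexicographic product} then gives
\[
\Theta(G^{\circ m}) \geq \Theta(G^{\circ(m-1)})\,\Theta(G) \geq \Theta(G)^{m-1}\,\Theta(G) = \Theta(G)^m .
\]

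\textbf{Upper bound.} First I show, again by induction using Lemma~\ref{lemma: spanning subgraph}, that \(G^{\boxtimes m}\) is a spanning subgraph of \(G^{\circ m}\) under the natural identification of vertex sets \(V(G)^m\). The inductive step needs two facts:
\begin{itemize}
\item If \(F_1\) is a spanning subgraph of \(F_2\), then \(F_1 \boxtimes G\) is a spanning subgraph of \(F_2 \boxtimes G\). This is immediate, since the strong product is monotone in edges.
\item \(F_2 \boxtimes G\) is a spanning subgraph of \(F_2 \circ G\), by Lemma~\ref{lemma: spanning subgraph}.
\end{itemize}
Together these give \(G^{\boxtimes m} = G^{\boxtimes(m-1)} \boxtimes G \subseteq G^{\circ(m-1)} \boxtimes G \subseteq G^{\circ(m-1)} \circ G = G^{\circ m}\).

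Since Shannon capacity does not increase when edges are added, \(\Theta(G^{\circ m}) \leq \Theta(G^{\boxtimes m})\). It remains to show \(\Theta(G^{\boxtimes m}) = \Theta(G)^m\). Because \((G^{\boxtimes m})^{\boxtimes n} = G^{\boxtimes mn}\),
\[
\Theta(G^{\boxtimes m}) = \lim_{n\to\infty} \alpha(G^{\boxtimes mn})^{1/n} = \Bigl(\lim_{n\to\infty} \alpha(G^{\boxtimes mn})^{1/(mn)}\Bigr)^m = \Theta(G)^m .
\]
The middle equality holds because the limit in Definition~\ref{definition: Shannon capacity of a graph} exists, so it may be taken along the subsequence \(mn\).

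\textbf{Main obstacle.} The only delicate step is the spanning-subgraph comparison between iterated powers. The coordinate bookkeeping must be consistent with the associativity isomorphism of the lexicographic product, and the monotonicity of \(\boxtimes\) in its first factor must be stated explicitly. Both checks are routine.
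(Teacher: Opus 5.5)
Your proposal is correct and uses the same sandwich as the paper, \(\Theta(G)^m \le \Theta(G^{\circ m}) \le \Theta(G^{\boxtimes m}) = \Theta(G)^m\), obtained from Theorem~\ref{theorem: capacity bounds of a lexicographic product} and Lemma~\ref{lemma: spanning subgraph}. Your inductions (for the lower bound and for the spanning-subgraph chain \(G^{\boxtimes m} \subseteq G^{\circ m}\)) and your subsequence argument for \(\Theta(G^{\boxtimes m}) = \Theta(G)^m\) only make explicit steps that the paper's one-line proof leaves implicit.
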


\begin{proof}
By Theorem~\ref{theorem: capacity bounds of a lexicographic product}, it follows that 
for every integer \( m \geq 1 \) 
\begin{align*}
\Theta(G)^m & \leq \Theta(G^{\, \circ m}) \\
& \leq \Theta(G^{\, \boxtimes m})  \\
&=\Theta(G)^m,
\end{align*}
thus proving \eqref{eq41: 06.08.26}.
\end{proof}

\noindent
Theorem~\ref{theorem: capacity bounds of a lexicographic product} states that
\[
\min\{\Theta(G\circ H), \, \Theta(H\circ G)\} \geq \Theta(G) \, \Theta(H).
\]
The next result, Proposition~\ref{proposition: independence numbers of squared lexicographic products}, 
gives a general graph--complement construction that is subsequently used in 
Proposition~\ref{proposition: explicit graphs satisfying strict graph complement condition} to establish 
strict inequalities for several strongly regular graphs.
Three countably infinite families of graph--complement pairs are
subsequently obtained in
Theorem~\ref{theorem: strict supermultiplicativity of capacity}
by taking lexicographic powers of these graphs; the strict inequality
holds throughout these families, and the multiplicative gap is
unbounded within each one.

\begin{proposition}
\label{proposition: independence numbers of squared lexicographic products}
Let \(G\) be a finite simple graph on \(n\) vertices. Then
\begin{align}
\label{eq: general lower bound G circ complement}
\alpha\bigl((G \circ \overline{G})^{\boxtimes \, 2}\bigr)
&\geq n\,\alpha(G)\,\omega(G),\\
\label{eq: general lower bound complement circ G}
\alpha\bigl((\overline{G} \circ G)^{\boxtimes \, 2}\bigr)
&\geq n\,\alpha(G)\,\omega(G).
\end{align}
Consequently,
\begin{align}
\min\bigl\{\Theta(G \circ \overline{G}), \, \Theta(\overline{G} \circ G)\bigr\}
\geq \sqrt{n\,\alpha(G)\,\omega(G)}.
\label{eq: general capacity lower bound for graph complement products}
\end{align}
In particular, if
\begin{align}
\label{eq: sufficient condition for strict supermultiplicativity}
\sqrt{n\,\alpha(G)\,\omega(G)}
>\Theta(G)\,\Theta(\overline{G}),
\end{align}
then
\begin{align}
\label{eq: strict supermultiplicativity from graph complement construction}
\min\bigl\{\Theta(G \circ \overline{G}), \, \Theta(\overline{G} \circ G)\bigr\}
>\Theta(G)\,\Theta(\overline{G}).
\end{align}
\end{proposition}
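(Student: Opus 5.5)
We build an explicit independent set in \((G\circ\overline{G})^{\boxtimes 2}\) of size \(n\,\alpha(G)\,\omega(G)\). The vertices of \(G\circ\overline{G}\) are pairs \((g,h)\in V(G)\times V(G)\). Adjacency is \(g\sim_G g'\), or \(g=g'\) together with \(h\sim_{\overline G} h'\). A vertex of the strong square is a pair \(\bigl((g_1,h_1),(g_2,h_2)\bigr)\), and two distinct such vertices are nonadjacent if in some coordinate the entries are distinct and nonadjacent in \(G\circ\overline G\).

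\textbf{Construction.} Fix a maximum independent set \(I\) of \(G\) and a maximum clique \(K\) of \(G\). Consider the set
\[
\mathcal S\coloneqq\bigl\{\bigl((a,v),(v,c)\bigr): v\in V(G),\ a\in I,\ c\in K\bigr\}.
\]
The map \((v,a,c)\mapsto\bigl((a,v),(v,c)\bigr)\) is injective, so \(|\mathcal S|=n\,\alpha(G)\,\omega(G)\).

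\textbf{Independence.} Take two distinct elements indexed by \((v,a,c)\) and \((v',a',c')\). We split into cases.
\begin{enumerate}
\item If \(v\ne v'\) and \(v\not\sim_G v'\), the second coordinates \((v,c),(v',c')\) are distinct, and their first components are nonadjacent in \(G\). Hence they are nonadjacent in \(G\circ\overline G\).
\item If \(v\ne v'\) and \(v\sim_G v'\), look at the first coordinates \((a,v),(a',v')\). If \(a\ne a'\), then \(a\not\sim_G a'\) since \(I\) is independent, so these coordinates are nonadjacent. If \(a=a'\), they are adjacent only if \(v\sim_{\overline G}v'\), which is false. Either way they are distinct and nonadjacent.
\item If \(v=v'\), then \((a,c)\ne(a',c')\). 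If \(a\ne a'\), the first coordinates are distinct and nonadjacent because \(I\) is independent. If \(a=a'\), then \(c\ne c'\). The second coordinates \((v,c),(v,c')\) share their first component, so they are adjacent only if \(c\sim_{\overline G}c'\), which fails because \(K\) is a clique in \(G\).
\end{enumerate}
This proves \eqref{eq: general lower bound G circ complement}. The only delicate point is choosing the twisted form \((a,v),(v,c)\) so that each case is covered by one coordinate; the check above confirms it works.

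\textbf{The product \(\overline G\circ G\).} Apply the same construction with \(G\) replaced by \(\overline G\). This gives \(\alpha\bigl((\overline G\circ G)^{\boxtimes2}\bigr)\ge n\,\alpha(\overline G)\,\omega(\overline G)=n\,\omega(G)\,\alpha(G)\), which is \eqref{eq: general lower bound complement circ G}.

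\textbf{Capacity consequences.} Since \(\Theta(F)\ge\sqrt{\alpha(F^{\boxtimes2})}\) by the definition of \(\Theta\) as a supremum, both lower bounds give \eqref{eq: general capacity lower bound for graph complement products}. The strict conclusion \eqref{eq: strict supermultiplicativity from graph complement construction} then follows immediately under \eqref{eq: sufficient condition for strict supermultiplicativity}.
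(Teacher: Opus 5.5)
Your proposal is correct and is essentially the paper's proof. You use the same set \(\{((a,v),(v,b))\}\), with \(a\) ranging over a maximum independent set of \(G\) and \(b\) over a maximum clique of \(G\), and check it with a case analysis (you split first on \(v\), the paper first on \(a\)); as in the paper, you handle the reversed order by interchanging \(G\) and \(\overline{G}\) and derive the capacity bounds from \(\Theta(F)\geq\sqrt{\alpha(F^{\boxtimes 2})}\).
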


\begin{proof}
Let \(V\) be the common vertex set of \(G\) and \(\overline G\), with
\(\lvert V\rvert=n\). Choose maximum independent sets
\[
A\subseteq V(G),\qquad B\subseteq V(\overline G).
\]
Then \(\lvert A\rvert=\alpha(G)\) and
\(\lvert B\rvert=\alpha(\overline G)=\omega(G)\).

\smallskip 
\noindent 
Consider the vertex subset of \( (G \circ \overline{G})^{\boxtimes \, 2} \) given by 
\begin{align}
\label{eq: general independent set I}
\mathcal{I} \coloneqq \left\{ \bigl((a,v),(v,b)\bigr): a\in A,\ v\in V,\ b\in B \right\},
\end{align}
whose cardinality is 
\[
\lvert \mathcal{I} \rvert = n\,\alpha(G)\,\omega(G).
\]
We claim that \(\mathcal{I}\) is an independent set in
\((G \circ \overline{G})^{\boxtimes \, 2}\). Let
\[
x=\bigl((a,v),(v,b)\bigr), \qquad x'=\bigl((a',v'),(v',b')\bigr)
\]
be distinct elements of \(\mathcal I\). If \(a\neq a'\), then 
\(a \not\sim_G a'\), since \(A\) is an independent set in \( G \).
Thus, \((a,v)\) and \((a',v')\) are distinct and nonadjacent in
\(G \circ \overline{G}\). If \(a=a'\) and \(v\neq v'\), then
\begin{align}
(a,v) \sim_{G \circ \overline{G}} (a,v')
&\quad \Longleftrightarrow \quad v \sim_{\overline{G}} v',\\
(v,b) \sim_{G \circ \overline{G}} (v',b')
& \quad \Longleftrightarrow \quad v \sim_G v'.
\end{align}
The two relations on the right cannot hold simultaneously. Finally, if
\(a=a'\) and \(v=v'\), then \(b \neq b'\), and the independence of \(B\)
in \(\overline{G}\) implies that \((v,b)\) and \((v,b')\) are distinct and
nonadjacent in \(G \circ \overline{G}\). Thus, in every case, \(x\) and \(x'\)
are nonadjacent in the strong product. This proves
\eqref{eq: general lower bound G circ complement}.

\smallskip
\noindent 
For the reversed order, consider the vertex subset of
\((\overline{G} \circ G)^{\boxtimes 2}\) given by
\begin{align}
\label{eq: general independent set J}
\mathcal{J} \coloneqq \left\{ \bigl((b,v),(v,a)\bigr): b\in B,\ v\in V,\ a\in A \right\},
\end{align}
whose cardinality is
\[
\lvert\mathcal{J}\rvert=n\,\alpha(G)\,\omega(G).
\]
The same argument, with \(G\) and \(\overline{G}\) interchanged, shows that
\(\mathcal{J}\) is an independent set in \((\overline{G} \circ G)^{\boxtimes \, 2}\).
This proves \eqref{eq: general lower bound complement circ G}.

\smallskip
\noindent 
By the definition of the Shannon capacity,
\[
\Theta(F)\geq\sqrt{\alpha(F^{\boxtimes \, 2})}
\]
for every finite graph \(F\). Applying this inequality to
\(G \circ \overline{G}\) and \(\overline{G} \circ G\) proves
\eqref{eq: general capacity lower bound for graph complement products}.

\smallskip
\noindent 
Combining \eqref{eq: general capacity lower bound for graph complement products}
with condition~\eqref{eq: sufficient condition for strict supermultiplicativity}
immediately yields
\eqref{eq: strict supermultiplicativity from graph complement construction}.
\end{proof}

\begin{proposition}
\label{proposition: explicit graphs satisfying strict graph complement condition}
Condition~\eqref{eq: sufficient condition for strict supermultiplicativity}
holds for each of the following graphs and their complements:
\begin{enumerate}
\item The Schl\"{a}fli graph \(S=\operatorname{srg}(27,16,10,8)\);
\item The McLaughlin graph \(M=\operatorname{srg}(275,112,30,56)\);
\item The second subconstituent of the McLaughlin graph 
\[
L=\operatorname{srg}(162,56,10,24).
\]
\end{enumerate}
\end{proposition}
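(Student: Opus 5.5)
The plan is to compare the two sides of \eqref{eq: sufficient condition for strict supermultiplicativity} directly: compute the left side $\sqrt{n\,\alpha(G)\,\omega(G)}$ exactly from known structural data, and bound the right side $\Theta(G)\,\Theta(\overline{G})$ above using the tools of Section~\ref{section: Preliminaries}. The condition is symmetric under $G\leftrightarrow\overline G$, since $\alpha(\overline G)=\omega(G)$ and $\omega(\overline G)=\alpha(G)$, so each graph needs to be checked only once. For the left side I will quote the standard values of $\alpha$ and $\omega$. For the Schl\"afli graph these are $\alpha(S)=3$ and $\omega(S)=6$, giving $\sqrt{27\cdot 18}=\sqrt{486}\approx 22.05$. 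For the McLaughlin graph, $\alpha(M)=22$ and $\omega(M)=5$, giving $\sqrt{275\cdot 110}\approx 173.9$. The corresponding values for $L$ must be looked up or recomputed in the same way.

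Theta alone cannot give the right side. All three graphs are rank-$3$, hence vertex-transitive, so Lov\'asz's identity $\vartheta(G)\vartheta(\overline G)=n$ holds. The obvious bound $\Theta(G)\Theta(\overline G)\le n$ is therefore too weak whenever $\alpha\omega<n$, which is the case here ($18<27$, $110<275$). For a strongly regular graph with eigenvalues $k>s>t$, $\vartheta$ is given by the Hoffman-type formula $\vartheta(G)=-nt/(k-t)$. This yields:
\begin{itemize}
\item $\vartheta(S)=3=\alpha(S)$;
\item $\vartheta(\overline M)=5=\omega(M)$;
\item $\vartheta(L)=36$ and $\vartheta(\overline L)=9/2$.
\end{itemize}
So on one side of each pair, $\Theta=\alpha=\vartheta$ is determined exactly, namely $\Theta(S)=3$ and $\Theta(\overline M)=5$. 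The whole difficulty is concentrated in the other factor.

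For that other factor I will use the Haemers bound $\Theta(G)\le\mathcal H(G;\mathbb F)$ from Theorem~\ref{theorem: Shannon capacity is upper bounded by the fractional Haemers number}. The representation will be $I+cA(G)$ for a suitable scalar $c$, reduced modulo a prime $p$, chosen so that an eigenvalue of large multiplicity vanishes modulo $p$ and the $p$-rank is small. The targets are:
\begin{itemize}
\item Schl\"afli: $\Theta(\overline S)\le 7$, which gives $\Theta(S)\Theta(\overline S)\le 21<\sqrt{486}$. The known $p$-rank-$7$ representation should provide this.
\item McLaughlin: $\Theta(M)<173.9/5\approx 34.8$.
\item $L$: an analogous rank bound making the product fall below $\sqrt{162\,\alpha(L)\omega(L)}$.
\end{itemize}

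The main obstacle is this step: finding an explicit field and matrix whose rank is small enough, and certifying that rank. My first attempt will use the Smith normal form and $p$-ranks of $A$, $A+I$ and $A-sI$ for primes dividing the eigenvalue differences. For $M$ the natural first candidate is $p=2$ or $p=3$, applied to $A(M)+I$ or to the matrix built from $A(M)-2I$, whose rank over $\mathbb Q$ is controlled by the multiplicity of the eigenvalue $-28$. I will read off or cite the known $p$-ranks of these strongly regular graphs. If the plain Haemers bound falls short, the fallback is a $d$-representation, that is, the fractional Haemers number $\mathcal H_f$.
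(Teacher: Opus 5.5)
Your strategy matches the paper's: bound one factor with the Lov\'asz ratio bound for strongly regular graphs, bound the other with Haemers' rank bound, and compare with $\sqrt{n\,\alpha(G)\,\omega(G)}$. As written, however, the proposal does not prove the statement. The decisive step is left open as ``the main obstacle'': exhibiting, for $\overline{S}$, $M$ and $L$, a fitting matrix of small enough rank. You offer only tentative candidates ($p$-ranks, Smith normal forms, a fallback to $\mathcal{H}_f$). For $L$ you do not fix $\alpha(L)$, $\omega(L)$ or the numerical target at all.

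The missing observation is elementary and needs no modular reduction. Let $s\neq 0$ be an eigenvalue of the adjacency matrix $A$ with multiplicity $m_s$. Then $I-\tfrac{1}{s}A$ is a $1$-representation over $\mathbb{R}$: it has unit diagonal and zeros at nonadjacent pairs. Since $A$ is real symmetric, its rank is $\operatorname{rank}_{\mathbb{R}}(A-sI)=n-m_s$. The spectra are $10^{(1)},1^{(20)},(-5)^{(6)}$ for $\overline{S}$, $112^{(1)},2^{(252)},(-28)^{(22)}$ for $M$, and $56^{(1)},2^{(140)},(-16)^{(21)}$ for $L$. These give $\Theta(\overline{S})\le 7$, $\Theta(M)\le 23$ and $\Theta(L)\le 22$ immediately. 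You essentially noted the rational rank $23$ of $A(M)-2I$ yourself, and it already suffices, since $23<34.8$.

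The $L$ case shows why a vague ``analogous rank bound'' is not enough. With $\alpha(L)=21$, $\omega(L)=3$ and $\vartheta(\overline{L})=9/2$, you need $\Theta(L)<\sqrt{10206}/4.5\approx 22.45$. The resulting comparison $22\cdot\tfrac92=99<27\sqrt{14}\approx 101.03$ leaves almost no slack, so the exact rank $22$ matters.

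Two smaller corrections. For $L$, $\Theta$ is not determined exactly on either side, contrary to your ``on one side of each pair''; still, $\Theta(\overline{L})\le\vartheta(\overline{L})=9/2$ is all that is needed. Also, reducing $A(M)-2I$ modulo $p=2$ would fail, because its diagonal vanishes and it no longer fits $M$.

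Once the three ranks are recorded, the comparisons $3\cdot 7=21<9\sqrt{6}$, $23\cdot 5=115<55\sqrt{10}$ and $22\cdot\tfrac92=99<27\sqrt{14}$ finish the proof. Together with the complementation symmetry you already noted, this is exactly the paper's argument.
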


\begin{proof}
Condition~\eqref{eq: sufficient condition for strict supermultiplicativity} is verified 
for each of the three strongly regular graphs listed above.
\begin{enumerate}
\item 
For the Schl\"{a}fli graph (see \cite[Section~10.10]{BrouwerVanMaldeghem2022}),
\[
\lvert V(S)\rvert=27, \qquad \alpha(S)=3, \qquad \omega(S)=6,
\]
and
\[
\Theta(S)=3, \qquad \Theta(\overline{S}) \leq 7,
\]
where the first equality is given in \cite[Example~16]{Sason23} and the latter inequality 
follows from Haemers' bound~\cite{Haemers79}.
Hence
\begin{align*}
\Theta(S) \, \Theta(\overline{S}) \leq 21 
< \sqrt{\lvert V(S)\rvert \, \alpha(S) \, \omega(S)}
= 9 \sqrt{6} \approx 22.045.
\end{align*}

\item 
The McLaughlin graph (see \cite[Section~10.61]{BrouwerVanMaldeghem2022}) has
\[
\lvert V(M)\rvert = 275, \qquad \alpha(M) = 22, \qquad \omega(M) = 5,
\]
and adjacency spectrum
\[
112^{(1)}, \qquad 2^{(252)}, \qquad (-28)^{(22)}.
\]
Let \(A_M\) denote the adjacency matrix of \(M\). The matrix
\( I-\frac{1}{2}A_M \) fits \(M\) over \(\mathbb{R}\): its 
diagonal entries are all equal to \(1\), while, for every pair 
of distinct nonadjacent vertices \(u,v\in V(M)\),
\[
\left(I-\frac{1}{2}A_M\right)_{u,v}
=-\frac{1}{2}(A_M)_{u,v}=0.
\]
Moreover, multiplication of this matrix by the nonzero scalar \(-2\) does not
change the rank. Therefore, Haemers' rank bound~\cite{Haemers79} gives
\[
\Theta(M)\leq\operatorname{rank}_{\mathbb{R}}(A_M-2I).
\]
Since \(2\) is an eigenvalue of \(A_M\) with multiplicity \(252\), the
nullity of \(A_M-2I\) is \(252\). Consequently,
\[
\Theta(M)
\leq\operatorname{rank}_{\mathbb{R}}(A_M-2I)
=275-252
=23.
\]
The complement \(\overline{M}\) is \(162\)-regular and has least adjacency
eigenvalue \(-3\). For strongly regular graphs, the ratio bound for the Lov\'{a}sz
theta function is attained (see \cite[Proposition~1]{Sason23}), which yields 
\[
\Theta(\overline{M}) \leq \vartheta(\overline{M})
=\frac{275\cdot3}{162+3} = 5.
\]
Since also \(\alpha(\overline{M}) = \omega(M) = 5\), the equality  
\( \Theta(\overline{M}) = 5 \) follows, and consequently  
\begin{align*}
\Theta(M) \, \Theta(\overline{M})
& \leq 23 \cdot 5 = 115 \\
& < \sqrt{\lvert V(M)\rvert \, \alpha(M) \, \omega(M)}
=55 \sqrt{10} \approx 173.925.
\end{align*}

\item
Finally, let \(L\) be the second subconstituent of the McLaughlin graph
(see \cite[Section~10.48]{BrouwerVanMaldeghem2022}). It satisfies
\[
\lvert V(L)\rvert=162, \qquad \alpha(L)=21, \qquad \omega(L)=3,
\]
and has adjacency spectrum
\[
56^{(1)}, \qquad 2^{(140)}, \qquad (-16)^{(21)}.
\]
Let \(A_L\) denote its adjacency matrix. The matrix \(A_L-2I\) fits
\(L\) over \(\mathbb{R}\): its diagonal entries are all equal to \(-2\)
and hence are nonzero, whereas its entries corresponding to distinct
nonadjacent vertices of \(L\) are zero. Therefore, Haemers' rank
bound~\cite{Haemers79} gives
\[
\Theta(L) \leq \operatorname{rank}_{\mathbb{R}}(A_L-2I).
\]
Since \(2\) is an eigenvalue of \(A_L\) with multiplicity \(140\), the
nullity of \(A_L-2I\) is \(140\). Consequently,
\[
\Theta(L) \leq \operatorname{rank}_{\mathbb{R}}(A_L-2I) = 162-140 = 22.
\]
The complement \(\overline{L}\) is \(105\)-regular and has least
adjacency eigenvalue \(-3\). 
For strongly regular graphs, the ratio bound for the Lov\'{a}sz
theta function is attained (see \cite[Proposition~1]{Sason23}), 
which yields 
\[
\Theta(\overline{L})
\leq \vartheta(\overline{L})
=\frac{162 \cdot 3}{105+3}
=\frac{9}{2}.
\]
It follows that
\begin{align*}
\Theta(L) \, \Theta(\overline{L})
& \leq 22 \cdot \tfrac{9}{2} = 99 \\
& <\sqrt{\lvert V(L)\rvert \, \alpha(L) \, \omega(L)}
= 27\sqrt{14}
\approx 101.025.
\end{align*}
\end{enumerate}

Condition~\eqref{eq: sufficient condition for strict supermultiplicativity}
is invariant under complementation, since complementation interchanges
\(\alpha(G)\) and \(\omega(G)\) and leaves \(\Theta(G)\,\Theta(\overline G)\) 
unchanged. The complements of the three listed graphs therefore satisfy the 
condition as well.
\end{proof}

\begin{corollary}
\label{corollary: strict supermultiplicativity and superadditivity for explicit graphs}
Let \(G\) be the Schl\"{a}fli graph, the McLaughlin graph, or the second
subconstituent of the McLaughlin graph. Then
\begin{align}
\label{eq: strict lexicographic supermultiplicativity for explicit graphs}
\min\bigl\{
\Theta(G\circ\overline{G}),
\Theta(\overline{G}\circ G)
\bigr\}
>
\Theta(G)\,\Theta(\overline{G}).
\end{align}
Consequently, 
\begin{align}
\label{eq: strict strong product for explicit graphs}
\Theta(G \boxtimes \overline{G}) &> \Theta(G) \, \Theta(\overline{G}),  \\
\label{eq: strict disjoint union for explicit graphs}
\Theta(G+\overline{G}) &> \Theta(G) + \Theta(\overline{G}).
\end{align}
\end{corollary}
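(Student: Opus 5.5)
The first claim, the strict lexicographic supermultiplicativity, follows by combining two earlier results. Proposition~\ref{proposition: explicit graphs satisfying strict graph complement condition} shows that each of the three listed graphs satisfies condition~\eqref{eq: sufficient condition for strict supermultiplicativity}. Feeding this into the last assertion of Proposition~\ref{proposition: independence numbers of squared lexicographic products} gives \eqref{eq: strict lexicographic supermultiplicativity for explicit graphs} directly. Nothing further is needed, since condition~\eqref{eq: sufficient condition for strict supermultiplicativity} is exactly the hypothesis of that proposition.

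For the strong-product inequality \eqref{eq: strict strong product for explicit graphs}, I will use the chain in Theorem~\ref{theorem: capacity bounds of a lexicographic product}, which gives
\[
\Theta(G\boxtimes\overline G)\geq \max\bigl\{\Theta(G\circ\overline G),\Theta(\overline G\circ G)\bigr\}.
\]
This is ultimately Lemma~\ref{lemma: spanning subgraph} together with monotonicity of $\Theta$ under adding edges. Chaining it with the strict inequality just obtained yields $\Theta(G\boxtimes\overline G)>\Theta(G)\Theta(\overline G)$.

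For the disjoint-union inequality \eqref{eq: strict disjoint union for explicit graphs}, I will invoke the standard fact, due to Shannon, that
\[
\Theta(F_1+F_2)\geq \Theta(F_1)+\Theta(F_2)
\]
for all graphs. That alone is not strict, so strictness must come from the product bound. The natural route is to expand $(G+\overline G)^{\boxtimes 2}$ as a disjoint union of $G^{\boxtimes2}$, $\overline G^{\boxtimes 2}$, and two copies of $G\boxtimes\overline G$. Then $(G+\overline G)^{\boxtimes 2k}$ contains, as a disjoint union of induced pieces, the terms $\binom{2k}{j}$ copies of $G^{\boxtimes j}\boxtimes\overline G^{\boxtimes(2k-j)}$. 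Each such piece contains $(G\boxtimes\overline G)^{\boxtimes \min(j,2k-j)}$ strong-multiplied by powers of $G$ or $\overline G$. Using $\alpha$ of these pieces, bounded below via $\Theta(G\boxtimes\overline G)^{m}$-type growth, should give a sum strictly exceeding $(\Theta(G)+\Theta(\overline G))^{2k}$ asymptotically.

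A cleaner way to organize this is the inequality
\[
\Theta(F_1+F_2)\geq \Theta(F_1)+\Theta(F_2)+ \text{gain},
\]
derived from
\[
\alpha\bigl((F_1+F_2)^{\boxtimes 2}\bigr)\geq \alpha(F_1^{\boxtimes 2})+\alpha(F_2^{\boxtimes2})+2\alpha(F_1\boxtimes F_2).
\]
A simpler, self-contained route is available: it is known (also due to Shannon/Alon) that $\Theta(F_1+F_2)\geq$ the capacity obtained by time-sharing. In particular, for every $n$,
\[
\alpha\bigl((F_1+F_2)^{\boxtimes n}\bigr)\ge\sum_j\binom nj\alpha\bigl(F_1^{\boxtimes j}\bigr)\alpha\bigl(F_2^{\boxtimes(n-j)}\bigr).
\]
I would instead use the sharper block decomposition with pieces $F_1^{\boxtimes j}\boxtimes F_2^{\boxtimes(n-j)}$, and take only even $n=2k$, $j=k$ contributions replaced by $\alpha\bigl((F_1\boxtimes F_2)^{\boxtimes k}\bigr)$.

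The main obstacle is making strictness survive the limit. The term $\binom{2k}{k}\Theta(G\boxtimes\overline G)^{k}$ must beat $(\Theta(G)+\Theta(\overline G))^{2k}$. Setting $x=\Theta(G)$, $y=\Theta(\overline G)$ and $P=\Theta(G\boxtimes\overline G)>xy$, I would first try pieces with $j$ near $k$ pairing $\min(j,2k-j)$ copies of $G\boxtimes\overline G$. Summing over $j$ gives a lower bound of order
\[
\bigl(\max_t\, \text{combinatorial exponent}\bigr)^{2k},
\]
which equals $x+y$ only if $P=xy$. A direct check that the optimizer at $t=x/(x+y)$ gets strictly improved when $P>xy$ finishes the argument. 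If this calculus step proves messy, I would fall back on citing the known equivalence between strict strong-product supermultiplicativity of $G$ and $\overline G$ and strict superadditivity of their disjoint union (Alon's construction), applied to \eqref{eq: strict strong product for explicit graphs}.
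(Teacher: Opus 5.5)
Your first two steps match the paper's proof. The strict lexicographic inequality is Proposition~\ref{proposition: independence numbers of squared lexicographic products} applied under Proposition~\ref{proposition: explicit graphs satisfying strict graph complement condition}, and the strong-product inequality follows from Lemma~\ref{lemma: spanning subgraph} and monotonicity.

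For the disjoint union you take a different route. The paper cites the Schrijver / Wigderson--Zuiddam duality $\Theta(F+H)=\Theta(F)+\Theta(H)\Leftrightarrow\Theta(F\boxtimes H)=\Theta(F)\,\Theta(H)$. This is your fallback, though it is due to them rather than to Alon's construction, and only its easy direction is actually used. You instead try to prove that easy direction directly, which buys self-containment.

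As written, your argument is looser and heavier than it needs to be.
\begin{itemize}
\item Working with $\alpha$ forces you into limits, and ``bounded below via $\Theta(G\boxtimes\overline G)^m$-type growth'' is backwards. Since $\alpha(F^{\boxtimes m})\le\Theta(F)^m$, you would need a block-approximation argument.
\item The $j=k$ term alone does not suffice in general. We have $\binom{2k}{k}P^k\approx(4P)^k$, while $(x+y)^2$ may exceed $4P$ when $x\neq y$. For $L$, the bounds $\Theta(L)\le 22$, $\Theta(\overline L)\le 9/2$ and $\Theta(L\boxtimes\overline L)=162$ do not exclude $(x+y)^2=702.25>648=4P$. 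So the useful pieces have $j/(2k)\approx x/(x+y)$, not $j$ near $k$.
\end{itemize}

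Your final optimization over $t$ is correct, but the whole limiting argument is unnecessary. Run your own $n=2$ decomposition at the level of $\Theta$ instead of $\alpha$, using three facts: $\Theta$ is superadditive over disjoint unions, $\Theta(F^{\boxtimes 2})=\Theta(F)^2$, and $(G+\overline G)^{\boxtimes 2}\cong G^{\boxtimes 2}+2\,(G\boxtimes\overline G)+\overline G^{\boxtimes 2}$. Together with \eqref{eq: strict strong product for explicit graphs}, they give
\[
\Theta(G+\overline G)^2\ge\Theta(G)^2+2\,\Theta(G\boxtimes\overline G)+\Theta(\overline G)^2>\bigl(\Theta(G)+\Theta(\overline G)\bigr)^2 .
\]

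This two-line argument removes any need for the duality theorem. The paper's route is shorter on the page but relies on a deep result whose hard direction it never uses.
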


\begin{proof}
The first assertion follows immediately from
Propositions~\ref{proposition: independence numbers of squared lexicographic products}
and~\ref{proposition: explicit graphs satisfying strict graph complement condition}.
More explicitly, the bounds obtained for the three graphs are
\begin{align}
\min \bigl\{ \Theta(S \circ \overline{S}), \, \Theta(\overline{S} \circ S) \bigr\}
&\geq 9 \sqrt{6} > 21 \geq \Theta(S) \, \Theta(\overline{S}),\\
\min\bigl\{ \Theta(M \circ \overline{M}), \, \Theta(\overline{M}\circ M) \bigr\}
&\geq 55 \sqrt{10} > 115 \geq \Theta(M) \, \Theta(\overline{M}),\\
\min\bigl\{ \Theta(L \circ \overline{L}), \, \Theta(\overline{L}\circ L) \bigr\}
&\geq 27 \sqrt{14} > 99 \geq \Theta(L) \, \Theta(\overline{L}).
\end{align}

For every pair of finite simple graphs \(F\) and \(H\), the strong
product \(F\boxtimes H\) is a spanning subgraph of the lexicographic
product \(F\circ H\); see Lemma~\ref{lemma: spanning subgraph}.
Since Shannon capacity is nonincreasing under the addition of edges,
\[
\Theta(F \circ H) \leq \Theta(F \boxtimes H).
\]
Applying this inequality with \(F=G\) and \(H=\overline{G}\), and using
\eqref{eq: strict lexicographic supermultiplicativity for explicit graphs},
gives \eqref{eq: strict strong product for explicit graphs}.

Finally, the duality theorem established in \cite{Schrijver23, WigdersonZ26} 
states that, for all finite simple graphs \(F\) and \(H\),
\begin{equation}
\label{eq:shannon_union_equivalent_product}
\Theta(F+H) = \Theta(F) + \Theta(H)
\quad \Longleftrightarrow \quad
\Theta(F \boxtimes H) = \Theta(F) \, \Theta(H).
\end{equation}
Taking \(F=G\) and \(H=\overline{G}\), inequality
\eqref{eq: strict strong product for explicit graphs} implies that
\[
\Theta(G+\overline{G}) \neq \Theta(G) + \Theta(\overline{G}).
\]
Since the Shannon capacity is superadditive with respect to disjoint
unions, the last inequality proves \eqref{eq: strict disjoint union for explicit graphs}.
\end{proof}

\begin{remark}
{\em
The Schl\"{a}fli graph recovers Alon's example~\cite{Alon98}, whereas
the McLaughlin graph and its second subconstituent provide two
additional explicit graph--complement pairs for which Shannon capacity
is strictly superadditive under disjoint union. Alon's example
disproved Shannon's conjecture~\cite{Shannon56} that the Shannon
capacity of a disjoint union is always equal to the sum of the
capacities of its components.}
\end{remark}

\smallskip 
Building on
Propositions~\ref{proposition: independence numbers of squared lexicographic products}
and~\ref{proposition: explicit graphs satisfying strict graph complement condition},
we construct three countably infinite families of graphs satisfying
\begin{align}
\label{eq1: 23.08.26}
\min\bigl\{\Theta(G \circ \overline{G}), \, \Theta(\overline{G}\circ G)\bigr\}
> \Theta(G) \, \Theta(\overline{G}).
\end{align}
Moreover, each family exhibits an arbitrarily large multiplicative
gap: for every \(C>1\), it contains a graph \(G\) for which the ratio
of the left-hand side of \eqref{eq1: 23.08.26} to its right-hand side
is at least \(C\).

\begin{theorem}
\label{theorem: strict supermultiplicativity of capacity}
Let \(S\), \(M\), and \(L\) denote, respectively, the Schl\"{a}fli
graph, the McLaughlin graph, and the second subconstituent of the
McLaughlin graph. For \( X\in\{S,M,L\} \) and an integer \(m \geq 1\), let
\begin{align}
\label{eq: X_m}
X_m \coloneqq X^{\circ m}
\end{align}
be the \(m\)-fold lexicographic power of \(X\). Define
\begin{align}
& \mathcal{G}_X \coloneqq \{X_m: m \geq 1\},
\qquad
\mathcal{G} \coloneqq
\mathcal{G}_S \cup \mathcal{G}_M \cup \mathcal{G}_L, \\[0.1cm]
\label{eq: R}
& R_X(m) \coloneqq
      \frac{\min\bigl\{\Theta(X_m \circ \overline{X_m}),\,
      \Theta(\overline{X_m} \circ X_m)\bigr\}}
      {\Theta(X_m) \, \Theta(\overline{X_m})}.
\end{align}      
Then, for all \( m \geq 1 \),
\begin{align}
R_S(m) &\geq
\left(\frac{54}{49}\right)^{m/2},
\label{eq: gap for Schlaefli powers}\\[0.1cm]
R_M(m) &\geq
\left(\frac{1210}{529}\right)^{m/2},
\label{eq: gap for McLaughlin powers}\\[0.1cm]
R_L(m) &\geq
\left(\frac{126}{121}\right)^{m/2}.
\label{eq: gap for second subconstituent powers}
\end{align}
The right-hand side of each of these three inequalities
is strictly larger than \(1\) for every
\(m \geq 1\) and tends to infinity as \(m\to\infty\).
Consequently, every graph in \(\mathcal{G}\) satisfies
\eqref{eq1: 23.08.26}, and each of the three countably infinite
families \(\mathcal{G}_S\), \(\mathcal{G}_M\), and
\(\mathcal{G}_L\) exhibits an arbitrarily large multiplicative gap.
\end{theorem}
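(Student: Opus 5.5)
We combine the lower bound of Proposition~\ref{proposition: independence numbers of squared lexicographic products}, applied to \(G=X_m\), with the product formulas for \(\alpha\), \(\omega\) and \(\Theta\) under lexicographic powers. For the upper bounds on the denominator we use Corollary~\ref{corollary: Shannon capacity of the lexicographic power of G} and the complementation identity \eqref{eq: complement of lexicographic product}.

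\emph{Numerator.} Write \(n_X=|V(X)|\). Then \(X_m\) has \(n_X^m\) vertices. By Proposition~\ref{proposition: independence and clique numbers of lexicographic product}, extended by induction through associativity,
\[
\alpha(X_m)=\alpha(X)^m, \qquad \omega(X_m)=\omega(X)^m.
\]
Inequality \eqref{eq: general capacity lower bound for graph complement products} then gives
\[
\min\bigl\{\Theta(X_m\circ\overline{X_m}),\Theta(\overline{X_m}\circ X_m)\bigr\}\geq \bigl(n_X\,\alpha(X)\,\omega(X)\bigr)^{m/2}.
\]

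\emph{Denominator.} By \eqref{eq: complement of lexicographic product}, iterated, \(\overline{X_m}=\overline{X}^{\circ m}\). Corollary~\ref{corollary: Shannon capacity of the lexicographic power of G} then gives \(\Theta(X_m)=\Theta(X)^m\) and \(\Theta(\overline{X_m})=\Theta(\overline X)^m\). Hence the denominator equals \(\bigl(\Theta(X)\Theta(\overline X)\bigr)^m\). The upper bounds from the proof of Proposition~\ref{proposition: explicit graphs satisfying strict graph complement condition} are
\[
\Theta(S)\Theta(\overline S)\le 21,\qquad \Theta(M)\Theta(\overline M)\le 115,\qquad \Theta(L)\Theta(\overline L)\le 99.
\]
Dividing, we get \(R_X(m)\geq \bigl(n_X\alpha(X)\omega(X)/c_X^2\bigr)^{m/2}\), where \(c_X\) is the corresponding bound.

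\emph{Arithmetic.} For \(S\): \(27\cdot 3\cdot 6/441=486/441=54/49\). For \(M\): \(275\cdot22\cdot5/13225=30250/13225=1210/529\). For \(L\): \(162\cdot21\cdot3/9801=10206/9801=126/121\). Each base exceeds \(1\), so each bound is \(>1\) for every \(m\geq1\) and tends to \(\infty\) as \(m\to\infty\). This proves \eqref{eq1: 23.08.26} throughout \(\mathcal G\) and shows that the gap is unbounded in each family.

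\emph{Main obstacle.} The substantive point is the denominator. Only upper bounds on \(\Theta(X)\Theta(\overline X)\) are available, and they must transfer exactly to the \(m\)-fold lexicographic power. This is where Corollary~\ref{corollary: Shannon capacity of the lexicographic power of G} is essential: it gives \(\Theta(G^{\circ m})=\Theta(G)^m\), so the available bounds propagate with the same exponent \(m\). It must be combined with \(\overline{G^{\circ m}}=\overline G^{\circ m}\), which follows by induction on \(m\) from \eqref{eq: complement of lexicographic product} and associativity. Everything else is routine verification.
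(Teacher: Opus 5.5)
Your proposal is correct and is essentially the paper's proof. Both arguments apply Proposition~\ref{proposition: independence numbers of squared lexicographic products} to \(X_m\) using the multiplicativity of \(|V|\), \(\alpha\), \(\omega\), transfer the denominator through \(\overline{X_m}=\overline{X}^{\circ m}\) and Corollary~\ref{corollary: Shannon capacity of the lexicographic power of G}, and insert the bounds \(21\), \(115\), \(99\); your arithmetic is correct. The one point you leave implicit, which the paper states, is that the \(X_m\) are pairwise nonisomorphic because their orders \(|V(X)|^m\) differ, so each family is indeed countably infinite.
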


\begin{proof}
Fix \(X\in\{S,M,L\}\) and an integer \(m \geq 1\). By the
multiplicative formulas for the order, independence number, and clique number
under lexicographic products (see
Proposition~\ref{proposition: independence and clique numbers of lexicographic product}),
\begin{equation}
\label{eq: formulas for X_m}
\begin{aligned}
\lvert V(X_m)\rvert &= \lvert V(X)\rvert^m, \\
\alpha(X_m) &= \alpha(X)^m,  \\
\omega(X_m) &= \omega(X)^m.
\end{aligned}
\end{equation}
Applying Proposition~\ref{proposition: independence numbers of squared lexicographic products}
to \(X_m\) gives
\begin{align}
& \hspace*{-0.4cm} \min\bigl\{ \Theta(X_m\circ\overline{X_m}), \, \Theta(\overline{X_m}\circ X_m) \bigr\} \notag \\
& \geq \sqrt{\lvert V(X_m)\rvert \, \alpha(X_m) \, \omega(X_m)} \notag\\
& = \bigl(\lvert V(X)\rvert \, \alpha(X) \, \omega(X)\bigr)^{m/2}.
\label{eq: general lower bound for powers of explicit graphs}
\end{align}
Furthermore, complementation commutes with the lexicographic product
(see \eqref{eq: complement of lexicographic product}), so
\[
\overline{X_m} = \overline{X^{\circ m}} = (\overline{X})^{\circ m}.
\]
Hence, by Corollary~\ref{corollary: Shannon capacity of the lexicographic power of G},
\[
\Theta(X_m)=\Theta(X)^m, \qquad \Theta(\overline{X_m})=\Theta(\overline{X})^m.
\]
Combining these identities with \eqref{eq: R} and 
\eqref{eq: general lower bound for powers of explicit graphs} yields
\begin{align}
R_X(m) \geq 
\left( \frac{\sqrt{\lvert V(X)\rvert \, \alpha(X) \, \omega(X)}}{\Theta(X) \, \Theta(\overline{X})}\right)^m.
\end{align}
Using the graph parameters and the upper bounds on the relevant
Shannon capacities established in the proof of
Proposition~\ref{proposition: explicit graphs satisfying strict graph complement condition},
we obtain, for all integers \(m \geq 1\),
\begin{align*}
R_S(m) &\geq \left( \frac{\sqrt{27 \cdot 3 \cdot 6}}{3 \cdot 7} \right)^m = \left( \frac{54}{49} \right)^{m/2},\\[0.1cm]
R_M(m) &\geq \left( \frac{\sqrt{275 \cdot 22 \cdot 5}}{23 \cdot 5} \right)^m = \left( \frac{1210}{529} \right)^{m/2},\\[0.1cm]
R_L(m) &\geq \left( \frac{\sqrt{162 \cdot 21 \cdot 3}}{22 \cdot \frac92} \right)^m = \left( \frac{126}{121} \right)^{m/2}.
\end{align*}
These are precisely \eqref{eq: gap for Schlaefli powers}--\eqref{eq: gap for second subconstituent powers}.
Since \( \frac{54}{49}>1, \,  \frac{1210}{529}>1 \), and 
\(\frac{126}{121}>1\), the right-hand side of each inequality in 
\eqref{eq: gap for Schlaefli powers}--\eqref{eq: gap for second subconstituent powers}
is strictly larger than \(1\) for every integer \(m \geq 1\) and tends to infinity as \(m\to\infty\).
Therefore, every graph \(G \in \mathcal{G}\) satisfies \eqref{eq1: 23.08.26}.
More precisely, define 
\[
\rho_S \coloneqq \frac{54}{49} \approx 1.10204, \qquad
\rho_M \coloneqq \frac{1210}{529} \approx 2.28733, \qquad
\rho_L \coloneqq \frac{126}{121} \approx 1.04132.
\]
Given \(C>1\) and \(X\in\{S,M,L\}\), choosing
\[
m = \left\lceil \frac{2\ln C}{\ln\rho_X} \right\rceil
\]
ensures that \( R_X(m) \geq C\). Thus, for each of the families
\(\mathcal{G}_S\), \(\mathcal{G}_M\), and \(\mathcal{G}_L\),
the strict supermultiplicativity inequality
\eqref{eq1: 23.08.26} holds with an arbitrarily large multiplicative
gap. Finally, each family is countably infinite. 
Indeed, for every \(X\in\{S,M,L\}\), the family of graphs \(\mathcal{G}_X\) 
is indexed by the integers \(m \geq 1\), and
\( \lvert V(X_m)\rvert = \lvert V(X)\rvert^m \), 
which implies that \(X_m\) and \(X_{m'}\) have different orders, and
hence are nonisomorphic, whenever \(m\neq m'\).
\end{proof}

\smallskip 
We now return to Theorem~\ref{theorem: capacity bounds of a lexicographic product}
and present some applications that yield exact evaluations of the Shannon capacity
of lexicographic products. The following result determines this capacity for
lexicographic products of Kneser-type graphs (see Section~\ref{subsection: Kneser graphs}).

\begin{theorem}[Lexicographic products of Kneser-type graphs]
\label{theorem: lexicographic products of Kneser-type graphs}
The following statements hold.
\begin{enumerate}
\item
If \(n\geq 2k\) and \(m\geq 2\ell\), then
\begin{align}
\label{eq42: 06.08.26}
\Theta\bigl(\mathrm{KG}(n,k) \circ \mathrm{KG}(m,\ell)\bigr) = \binom{n-1}{k-1}\binom{m-1}{\ell-1}.
\end{align}

\item
If \(n\geq 2k\), \(m\geq 2\ell\), \(k\mid n\), and \(\ell\mid m\), then
\begin{align}
\label{eq43: 06.08.26}
\Theta\bigl( \, \overline{\mathrm{KG}(n,k)} \circ \overline{\mathrm{KG}(m,\ell)} \, \bigr) = \frac{nm}{k\ell}.
\end{align}

\item
If \(q\) is a prime power, \(n\geq 2k\), and \(m\geq 2\ell\), then
\begin{align}
\label{eq44: 06.08.26}
\Theta\bigl(\mathrm{KG}_q(n,k) \circ \mathrm{KG}_q(m,\ell) \bigr)
= \gbinom{n-1}{k-1}{q} \gbinom{m-1}{\ell-1}{q}.
\end{align}
\end{enumerate}
\end{theorem}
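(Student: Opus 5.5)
The plan is to apply the exactness criterion \eqref{eq40: 06.08.26} of Theorem~\ref{theorem: capacity bounds of a lexicographic product} in each of the three cases. That criterion reduces everything to checking \(\Theta(F)=\vartheta(F)\) for each factor \(F\), after which \(\Theta(F\circ F')=\vartheta(F)\,\vartheta(F')\). Alternatively, it suffices to sandwich \(\alpha(F)\leq\Theta(F)\leq\vartheta(F)\) with matching values. Since \(\alpha\) is multiplicative under \(\circ\) (Proposition~\ref{proposition: independence and clique numbers of lexicographic product}) and \(\vartheta\) is multiplicative under \(\circ\) (Theorem~\ref{theorem: Lovasz theta function of lexicographic product of graphs}), we get
\[
\alpha(F)\alpha(F')=\alpha(F\circ F')\leq\Theta(F\circ F')\leq\vartheta(F\circ F')=\vartheta(F)\vartheta(F').
\]
So equality \(\alpha=\vartheta\) for each factor immediately gives the result.

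For item~1, I will use Lov\'asz's result \cite[Theorem~13]{Lovasz79} that \(\vartheta(\mathrm{KG}(n,k))=\binom{n-1}{k-1}\). This, together with the Erd\H{o}s--Ko--Rado value \eqref{eq: alpha of Kneser graphs}, gives \(\alpha=\Theta=\vartheta\) for both factors, and the product formula \eqref{eq42: 06.08.26} follows. Item~3 is identical, using \(\alpha(\mathrm{KG}_q(n,k))=\vartheta(\mathrm{KG}_q(n,k))=\gbinom{n-1}{k-1}{q}\). This is the content behind \cite[Theorem~6.6]{LaviSason2026}, where the capacity is obtained by matching \(\alpha\) with \(\vartheta\) via the ratio bound for this vertex-transitive graph. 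If only \(\Theta=\alpha\) is quoted, I will instead note that the ratio (Hoffman) bound on \(\vartheta\) for the \(q\)-Kneser graph equals the \(q\)-EKR value, so \(\vartheta=\alpha\) again.

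Item~2 is the one needing care. Here I need \(\alpha(\overline{\mathrm{KG}(n,k)})=\omega(\mathrm{KG}(n,k))\). When \(k\mid n\), a partition of \([n]\) into \(n/k\) disjoint \(k\)-sets is a clique in \(\mathrm{KG}(n,k)\), and no clique can be larger because its members are pairwise disjoint. So \(\alpha(\overline{\mathrm{KG}(n,k)})=n/k\). For the upper bound I will use vertex-transitivity of \(\mathrm{KG}(n,k)\), which gives \(\vartheta(G)\vartheta(\overline G)=|V(G)|\). Hence
\[
\vartheta\bigl(\overline{\mathrm{KG}(n,k)}\bigr)=\binom{n}{k}\Big/\binom{n-1}{k-1}=\frac{n}{k}.
\]
Thus \(\alpha=\vartheta=n/k\) for \(\overline{\mathrm{KG}(n,k)}\), likewise \(m/\ell\) for the second factor, and the sandwich above yields \(nm/(k\ell)\). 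The main obstacle is only making sure the cited value \(\vartheta(\mathrm{KG}(n,k))=\binom{n-1}{k-1}\) and the vertex-transitive identity \(\vartheta(G)\vartheta(\overline G)=|V(G)|\) from \cite{Lovasz79} are invoked correctly; the divisibility hypothesis is used exactly to realize the clique of size \(n/k\).
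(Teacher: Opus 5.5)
Your proposal is correct and follows essentially the paper's route: show that each factor satisfies $\Theta=\vartheta$ (indeed $\alpha=\vartheta$), then conclude via the exactness criterion \eqref{eq40: 06.08.26}; your direct sandwich $\alpha(F)\,\alpha(F')=\alpha(F\circ F')\leq\Theta(F\circ F')\leq\vartheta(F)\,\vartheta(F')$ reaches the same conclusion. The only difference is that in item~2 you derive $\alpha(\overline{\mathrm{KG}(n,k)})=\vartheta(\overline{\mathrm{KG}(n,k)})=n/k$ yourself, from the partition clique and Lov\'asz's identity $\vartheta(G)\,\vartheta(\overline{G})=|V(G)|$ for vertex-transitive $G$, whereas the paper quotes \cite[Theorem~2.51]{LaviSason2026}.
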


\begin{proof}
We prove the three statements separately. In each case,
the Shannon capacity and the Lov\'{a}sz theta number coincide for both
factors. Theorem~\ref{theorem: capacity bounds of a lexicographic product}
then determines the Shannon capacity of their lexicographic product.

\begin{enumerate}
\item 
For classical Kneser graphs, \cite[Theorem~13]{Lovasz79} gives 
\begin{align}
\label{eq45: 06.08.26}
\Theta\bigl(\mathrm{KG}(n,k)\bigr)
= \vartheta\bigl(\mathrm{KG}(n,k)\bigr)
= \binom{n-1}{k-1},   \qquad n \geq 2k, 
\end{align}
and, similarly,
\begin{align}
\label{eq1: 25.08.26}
\Theta\bigl(\mathrm{KG}(m,\ell)\bigr)
= \vartheta\bigl(\mathrm{KG}(m,\ell)\bigr)
= \binom{m-1}{\ell-1},   \qquad m \geq 2\ell.
\end{align}
Therefore, Theorem~\ref{theorem: capacity bounds of a lexicographic product} yields \eqref{eq42: 06.08.26}.

\smallskip 
\item 
For complements of Kneser graphs, \cite[Theorem~2.51]{LaviSason2026}
gives
\begin{align}
\label{eq46: 06.08.26}
\alpha\bigl(\overline{\mathrm{KG}(n,k)}\bigr)
= \left\lfloor\frac{n}{k}\right\rfloor,
\qquad
\vartheta\bigl(\overline{\mathrm{KG}(n,k)}\bigr)
= \frac{n}{k}.
\end{align}
Thus, if \(k\mid n\), then
\begin{align}
\label{eq2: 25.08.26}
\alpha\bigl(\overline{\mathrm{KG}(n,k)}\bigr)
= \vartheta\bigl(\overline{\mathrm{KG}(n,k)}\bigr)
= \frac{n}{k}.
\end{align}
Since
\( \alpha(F)\leq\Theta(F)\leq\vartheta(F) \)
for every finite simple graph \(F\), it follows that
\begin{align}
\label{eq3: 25.08.26}
\Theta\bigl(\overline{\mathrm{KG}(n,k)}\bigr)
= \vartheta\bigl(\overline{\mathrm{KG}(n,k)}\bigr)
= \frac{n}{k}.
\end{align}
Analogously, if \(\ell\mid m\), then
\begin{align}
\label{eq4: 25.08.26}
\Theta\bigl(\overline{\mathrm{KG}(m,\ell)}\bigr)
= \vartheta\bigl(\overline{\mathrm{KG}(m,\ell)}\bigr)
= \frac{m}{\ell}.
\end{align}
Applying Theorem~\ref{theorem: capacity bounds of a lexicographic product}
to these two graphs gives \eqref{eq43: 06.08.26}.

\smallskip 
\item 
For \(q\)-Kneser graphs,
\cite[Theorem~6.2]{LaviSason2026} gives
\begin{align}
\label{eq47: 06.08.26}
\Theta\bigl(\mathrm{KG}_q(n,k)\bigr)
= \vartheta\bigl(\mathrm{KG}_q(n,k)\bigr)
= \gbinom{n-1}{k-1}{q},
\end{align}
and, similarly, for \(\mathrm{KG}_q(m,\ell)\). A final application of
Theorem~\ref{theorem: capacity bounds of a lexicographic product}
therefore gives \eqref{eq44: 06.08.26}.
\end{enumerate}
\end{proof}

\begin{theorem}[Repeated joins]
\label{theorem: repeated joins}
For every finite simple graph \(G\) and every integer \(m \geq 1\),
\begin{align}
\label{eq: capacity of K_m[G]}
\Theta(K_m \circ G)=\Theta(G).
\end{align}
Equivalently, for every integer \(m \geq 1\),
\begin{align}
\label{eq: capacity of joins of copies of G}
\Theta\bigl(\underbrace{G\vee G\vee\cdots\vee G}_{m\text{ copies}}\bigr)=\Theta(G).
\end{align}
\end{theorem}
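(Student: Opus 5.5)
The plan is to sandwich \(\Theta(K_m\circ G)\) between \(\Theta(G)\) from both sides, using Theorem~\ref{theorem: capacity bounds of a lexicographic product} for the lower bound and a direct strong-power argument for the upper bound.

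\textbf{Lower bound.} The leftmost inequality in \eqref{eq: capacity bounds of a lexicographic product} gives
\[
\Theta(K_m\circ G)\geq \Theta(K_m)\,\Theta(G)=\Theta(G),
\]
because \(\Theta(K_m)=\alpha(K_m)=\vartheta(K_m)=1\).

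\textbf{Upper bound.} The strong-product bound \(\Theta(K_m\circ G)\le\Theta(K_m\boxtimes G)\) does not help. The graph \(K_m\boxtimes G\) is \(K_m\circ G\) with the edges \((i,h)\sim(j,h')\), \(i\neq j\), \(h\not\sim h'\), \(h\neq h'\) removed, and its capacity is not obviously \(\Theta(G)\). I would therefore argue directly on strong powers. The goal is to show
\[
\alpha\bigl((K_m\circ G)^{\boxtimes n}\bigr)\le \alpha(G^{\boxtimes n}) \qquad \text{for every } n\geq 1.
\]
A vertex of \((K_m\circ G)^{\boxtimes n}\) is a tuple \(\bigl((i_1,h_1),\ldots,(i_n,h_n)\bigr)\). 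Let \(\mathcal{I}\) be an independent set in this graph. Take two distinct elements of \(\mathcal{I}\). They must have some coordinate \(j\) in which \((i_j,h_j)\) and \((i'_j,h'_j)\) are distinct and nonadjacent in \(K_m\circ G\). Since \(K_m\) is complete, this forces \(i_j=i'_j\), \(h_j\neq h'_j\), and \(h_j\not\sim_G h'_j\).

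Now consider the projection \(\pi(\mathcal I)=\{(h_1,\ldots,h_n)\}\). The observation above shows that \(\pi\) is injective on \(\mathcal I\), since distinct elements differ in some \(h\)-coordinate. It also shows that the image is independent in \(G^{\boxtimes n}\), since distinct images are distinct and nonadjacent in some coordinate. Hence \(|\mathcal I|\le\alpha(G^{\boxtimes n})\). Taking \(n\)-th roots and letting \(n\to\infty\) gives \(\Theta(K_m\circ G)\le\Theta(G)\).

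This upper-bound step is the only substantive point. The key realization is that in each coordinate the \(K_m\)-label can only help separate two tuples if the labels agree, and even then the actual separation comes from \(G\).

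As an alternative to the projection argument, one could use the observation that \(K_m\circ G\) admits a graph homomorphism of its complement onto the complement of \(G\). Equivalently, \((i,h)\mapsto h\) maps nonadjacent distinct pairs to nonadjacent distinct pairs, and Shannon capacity is monotone under such maps. The projection argument above is self-contained, so I would present that one.

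\textbf{Join reformulation.} Finally, the equivalence with \eqref{eq: capacity of joins of copies of G} follows from the definitions. By the definition of the lexicographic product, \(K_m\circ G\) consists of \(m\) copies of \(G\) with every pair of vertices in different copies adjacent. That is exactly the \(m\)-fold join \(G\vee\cdots\vee G\).
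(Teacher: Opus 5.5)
Your proof is correct and follows the paper's. The upper bound is the same coordinate projection $\pi$, and your single observation about a separating coordinate gives injectivity and independence of the image at once. The only difference is the lower bound: you get it from supermultiplicativity with $\Theta(K_m)=1$, whereas the paper restricts every coordinate to one fixed copy of $G$ to find an induced copy of $G^{\boxtimes n}$.

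One aside is inaccurate, though it does not affect your argument. Since $K_m\boxtimes G\cong G\circ K_m$ and $\alpha(K_N\boxtimes F)=\alpha(F)$ for every graph $F$, we have $(K_m\boxtimes G)^{\boxtimes n}\cong K_{m^n}\boxtimes G^{\boxtimes n}$. Hence $\Theta(K_m\boxtimes G)=\Theta(G)$, so the strong-product bound you dismissed would in fact give the upper bound directly.
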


\begin{proof}
Let the vertices of \(K_m \circ G\) be written as pairs
$(i,v)\in[m]\times V(G)$.
Two such vertices $(i,v)$ and $(j,w)$ are adjacent if and only if 
$i\neq j$ or $\bigl( i=j\ \text{and}\ v \sim_G w \bigr)$.
For \(n \geq 1\), define the coordinate-wise projection
\begin{align}
\label{eq48: 06.08.26}
\pi \colon V\!\left((K_m \circ G)^{\boxtimes n}\right)
\longrightarrow V\!\left(G^{\boxtimes n}\right)
\end{align}
by
\begin{align}
\label{eq49: 06.08.26}
\pi\bigl((i_1,v_1),\ldots,(i_n,v_n)\bigr) = (v_1,\ldots,v_n).
\end{align}
Let \(\mathcal S\) be an independent set in
\((K_m \circ G)^{\boxtimes n}\).

First, \(\pi\) is injective on \(\mathcal S\). Indeed, suppose that two
distinct words in \(\mathcal S\) have the same image. In each coordinate
their vertices are either equal, when their first coordinates agree, or
adjacent, when their first coordinates differ. The two words would therefore
be adjacent in the strong power, contradicting the independence of
\(\mathcal S\).

Second, \(\pi(\mathcal S)\) is independent in \(G^{\boxtimes n}\). Given two
distinct words in \(\mathcal S\), their nonadjacency in the strong power gives
a coordinate in which the corresponding vertices of \(K_m \circ G\) are distinct
and nonadjacent. Nonadjacency in \(K_m \circ G\) forces their first coordinates to
be equal and their \(G\)-coordinates to be nonadjacent. Hence their projected
words are nonadjacent in \(G^{\boxtimes n}\), and therefore 
\begin{align}
\label{eq50: 06.08.26}
\alpha\!\left((K_m \circ G)^{\boxtimes n}\right)
\leq \alpha\!\left(G^{\boxtimes n}\right).
\end{align}
Conversely, choose one of the \(m\) copies of \(G\) in \(K_m\circ G\), 
and in every factor of the strong product \((K_m \circ G)^{\boxtimes n}\), 
allow only vertices belonging to that fixed copy. The resulting induced 
subgraph is isomorphic to \(G^{\boxtimes n}\). Therefore,
\begin{align}
\label{eq51: 06.08.26}
\alpha\!\left(G^{\boxtimes n}\right)
\leq \alpha\!\left((K_m \circ G)^{\boxtimes n}\right),
\end{align}
which proves the reverse inequality. Consequently, by \eqref{eq50: 06.08.26} and \eqref{eq51: 06.08.26}, 
we obtain that for all integers \( n \geq 1 \)
\begin{align}
\label{eq52: 06.08.26}
\alpha\!\left((K_m \circ G)^{\boxtimes n}\right) = \alpha\!\left(G^{\boxtimes n}\right).
\end{align}
Taking \(n\)-th roots of both sides of \eqref{eq52: 06.08.26} and letting $n \to \infty$ proves \eqref{eq: capacity of K_m[G]}.
\end{proof}

\noindent 
The next result provides another application of Theorem~\ref{theorem: capacity bounds of a lexicographic product}. 
\begin{theorem}
\label{theorem: s.c. v.t. or s.c. srg}
Let $G$ be a finite simple graph on $n$ vertices. 
\begin{enumerate}
\item If $G$ is either vertex-transitive or strongly regular, then 
\begin{align}
\label{eq53: 06.08.26}
\Theta(G \circ \overline{G}) \leq n.
\end{align}
Moreover, equality holds if $G$ is also self-complementary. 
\item If $G$ is self-complementary, then for every integer $m \geq 1$
\begin{align}
\label{eq54: 06.08.26}
\Theta(G^{\, \circ \, m}) \geq n^{\frac{m}{2}}.
\end{align}
\item If $G$ is self-complementary and either vertex-transitive or strongly regular, 
then for every integer $m \geq 1$
\begin{align}
\label{eq55: 06.08.26}
\Theta(G^{\, \circ \, m}) = n^{\frac{m}{2}}.
\end{align}
\end{enumerate}
\end{theorem}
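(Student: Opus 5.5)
For part (1), I will combine the upper bound of Theorem~\ref{theorem: capacity bounds of a lexicographic product} with multiplicativity of $\vartheta$. That theorem gives
\[
\Theta(G\circ\overline{G})\leq \vartheta(G)\,\vartheta(\overline{G}).
\]
It is a classical fact of Lov\'{a}sz \cite{Lovasz79} that $\vartheta(G)\,\vartheta(\overline{G})=n$ whenever $G$ is vertex-transitive. For strongly regular graphs, the ratio-bound formula \cite[Proposition~1]{Sason23} computes $\vartheta$ explicitly. Let the eigenvalues be $k$, $r$, and $s<0$. Then $\vartheta(G)=\frac{-ns}{k-s}$, and $\overline{G}$ is strongly regular with degree $n-1-k$ and least eigenvalue $-1-r$. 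Hence
\[
\vartheta(\overline{G})=\frac{n(1+r)}{n-k+r}.
\]
Substituting the identity $(k-r)(k-s)=n(k+rs)$ should make the product exactly $n$. This short computation is the one step I would check carefully. The degenerate cases are $G=K_n$ or its complement (imprimitive or trivial strongly regular graphs), where one factor is $1$ and the other is $n$. Either way $\vartheta(G)\vartheta(\overline{G})=n$, which proves \eqref{eq53: 06.08.26}.

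For equality in part (1), assume in addition that $G$ is self-complementary. Then $\alpha(G)=\omega(G)$. Proposition~\ref{proposition: independence numbers of squared lexicographic products} gives only $\sqrt{n\alpha(G)\omega(G)}$, which is too weak, so I would instead use the supermultiplicativity lower bound:
\[
\Theta(G\circ\overline{G})\geq\Theta(G)\,\Theta(\overline{G})=\Theta(G)^2.
\]
For self-complementary $G$ with an isomorphism $\sigma\colon G\to\overline{G}$, the set $\{(v,\sigma(v))\}$ is independent in $G\boxtimes G$. Indeed, $u\sim v$ forces $\sigma(u)\not\sim\sigma(v)$ in $G$. This gives $\alpha(G^{\boxtimes 2})\geq n$, hence $\Theta(G)\geq\sqrt{n}$ and $\Theta(G\circ\overline{G})\geq n$, matching the upper bound.

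Part (2) follows from the same diagonal bound $\Theta(G)\geq\sqrt{n}$ and Corollary~\ref{corollary: Shannon capacity of the lexicographic power of G}: $\Theta(G^{\circ m})=\Theta(G)^m\geq n^{m/2}$.

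For part (3), the upper bound is $\Theta(G)\leq\vartheta(G)$. Since $G\cong\overline{G}$, part (1)'s identity gives $\vartheta(G)^2=\vartheta(G)\vartheta(\overline{G})=n$, so $\Theta(G)\leq\sqrt n$. Together with the lower bound, $\Theta(G)=\sqrt n$. The corollary on lexicographic powers then yields \eqref{eq55: 06.08.26}.

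The main obstacle is the strongly regular identity $\vartheta(G)\vartheta(\overline{G})=n$, including its degenerate cases. If the algebra proves messy, I would instead note that strongly regular graphs are edge-regular with a coherent (association-scheme) structure and apply Lov\'{a}sz's eigenvalue-symmetry argument.
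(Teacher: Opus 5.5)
Your proof is correct and follows essentially the paper's argument: the upper bound comes from the chain in Theorem~\ref{theorem: capacity bounds of a lexicographic product}, and the lower bound from the diagonal independent set in $G\boxtimes\overline{G}$ combined with Corollary~\ref{corollary: Shannon capacity of the lexicographic power of G}. The differences are small. In Item~1 you go one link further down that chain, to $\vartheta(G)\,\vartheta(\overline{G})=n$, instead of citing $\Theta(G\boxtimes\overline{G})=n$ from \cite[Theorem~3.26]{Sason24}; your strongly regular algebra checks out, since $(k-s)(n-k+r)=n(k-s)-(k-r)(k-s)=-ns(1+r)$, and the imprimitive cases $mK_a$ and their complements are vertex-transitive anyway, with factors $m$ and $a$ rather than $1$ and $n$. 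In Item~3 you bound $\Theta(G)\le\vartheta(G)=\sqrt{n}$ directly, whereas the paper goes through $\Theta(G)^2=\Theta(G\circ\overline{G})\le n$.
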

\begin{proof}
Let $G$ be a finite simple graph on $n$ vertices.
\begin{enumerate}
\item Combining Theorem~\ref{theorem: capacity bounds of a lexicographic product} of this paper and 
Theorem~3.26 (Item~1) of \cite{Sason24} shows that, if $G$ is vertex-transitive or strongly regular, then 
\begin{align}
\label{eq56: 06.08.26}
\Theta(G \circ \overline{G}) \leq \Theta(G \boxtimes \overline{G}) = n.
\end{align}
The equality assertion for self-complementary graphs will follow from the argument in Item~2.
\item If $G$ is a self-complementary graph on $n$ vertices, then 
\begin{align}
\Theta(G) &\geq \sqrt{\alpha(G \boxtimes G)} \nonumber \\
&= \sqrt{\alpha(G \boxtimes \overline{G})} \nonumber \\
\label{eq57: 06.08.26}
&\geq \sqrt{n},
\end{align}
where the first inequality holds by definition, the equality holds since $G \cong \overline{G}$, 
and the last inequality holds since the diagonal set $\{(1,1), (2,2), \ldots, (n,n)\}$ is an independent 
set in $G \boxtimes \overline{G}$. Hence, for every positive integer $m$,
\begin{align}
\label{eq59: 06.08.26}
\Theta(G^{\, \circ \, m}) = \Theta(G)^m \geq n^{\frac{m}{2}}.
\end{align}
\smallskip
\noindent  
If $G$ is self-complementary and either vertex-transitive or strongly regular on $n$ vertices, then 
\begin{align}
\label{eq60: 06.08.26}
\Theta(G \circ \overline{G}) = \Theta(G \circ G) = \Theta(G)^2 \geq n,
\end{align}
which, by combining with item~1, gives the equality 
\begin{align}
\label{eq61: 06.08.26}
\Theta(G \circ \overline{G}) = n. 
\end{align}
This proves the equality assertion in Item~1.
\item If the graph $G$ is either self-complementary vertex-transitive or self-complementary strongly 
regular, then by Item~1, 
\begin{align}
\label{eq62: 06.08.26}
\Theta(G)^2 = \Theta(G \circ G) = \Theta(G \circ \overline{G}) \leq n,  
\end{align}
so $\Theta(G) \leq \sqrt{n}$. Combining this inequality with Item~2 gives that $\Theta(G) = \sqrt{n}$. Consequently, for every 
integer $m \geq 1$, 
\begin{align}
\label{eq63: 06.08.26}
\Theta(G^{\, \circ \, m}) = \Theta(G)^m = n^{\frac{m}{2}}.
\end{align}
\end{enumerate}
\end{proof}

\begin{example}
\label{example: Paley graphs}
{\em Let $P(n)$ be a Paley graph on $n$ vertices, where $n$ is a prime power satisfying $n \equiv 1 \pmod 4$.
This graph is self-complementary, vertex-transitive, and strongly regular (see Section~\ref{subsection: Paley graphs}). 
Consequently, by Item~3 of Theorem~\ref{theorem: s.c. v.t. or s.c. srg}, 
\begin{align}
\label{eq64: 06.08.26}
\Theta(P(n)^{\, \circ \, m}) = n^{\frac{m}{2}}.
\end{align}
In particular, we have 
\begin{align}
\label{eq65: 06.08.26}
\Theta(C_5^{\, \circ \, m}) = 5^{\frac{m}{2}},
\end{align}
where \(C_5 \cong P(5)\) denotes the pentagon (see Example~\ref{example: Paley graph P5}).}
\end{example}

\smallskip 
Combining Theorems~\ref{theorem: Shannon capacity is upper bounded by the fractional Haemers number} 
and~\ref{theorem: capacity bounds of a lexicographic product}, together with the multiplicativity 
property of the fractional Haemers number under strong products (see \cite[Theorem~3]{BukhC19}),
the following bounds on the Shannon capacity of a lexicographic product of graphs are readily obtained. 
\begin{theorem}[Bounds for the Shannon capacity]
\label{theorem2: capacity bounds of a lexicographic product}
For all finite simple graphs \(G\) and \(H\), and for every finite field $\mathbb{F}$,
\begin{align}
\Theta(G) \, \Theta(H) &\leq \min\{ \Theta(G \circ H), \Theta(H \circ G) \} \nonumber \\
& \leq \max\{ \Theta(G \circ H), \Theta(H \circ G) \} \nonumber \\
& \leq \Theta(G \boxtimes H) \nonumber \\
\label{eq2: capacity bounds of a lexicographic product}
&\leq \mathcal{H}_f(G; \mathbb{F}) \,  \mathcal{H}_f(H; \mathbb{F}).
\end{align}
\end{theorem}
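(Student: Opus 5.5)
The first three inequalities in \eqref{eq2: capacity bounds of a lexicographic product} coincide with those of Theorem~\ref{theorem: capacity bounds of a lexicographic product}, so I would quote them directly. Recall how they arise: the product construction $\Phi_n$ gives $\Theta(G)\,\Theta(H)\leq\Theta(G\circ H)$. The same construction with the roles of $G$ and $H$ exchanged gives $\Theta(G)\,\Theta(H)\leq\Theta(H\circ G)$, which accounts for the minimum. The passage from the minimum to the maximum is trivial. Finally, Lemma~\ref{lemma: spanning subgraph} and the commutativity of the strong product (Remark~\ref{remark: commutativity}) show that $G\boxtimes H$ is a spanning subgraph of both $G\circ H$ and $H\circ G$. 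Since the Shannon capacity is nonincreasing under the addition of edges, both lexicographic capacities are at most $\Theta(G\boxtimes H)$. Only the last inequality, $\Theta(G\boxtimes H)\leq \mathcal{H}_f(G;\mathbb{F})\,\mathcal{H}_f(H;\mathbb{F})$, requires a new argument.

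For that inequality I would apply Theorem~\ref{theorem: Shannon capacity is upper bounded by the fractional Haemers number} to the graph $G\boxtimes H$, which gives
\begin{align*}
\Theta(G\boxtimes H)\leq \mathcal{H}_f(G\boxtimes H;\mathbb{F}).
\end{align*}
I would then invoke the multiplicativity of the fractional Haemers number under strong products \cite[Theorem~3]{BukhC19}, namely $\mathcal{H}_f(G\boxtimes H;\mathbb{F})=\mathcal{H}_f(G;\mathbb{F})\,\mathcal{H}_f(H;\mathbb{F})$, which closes the chain.

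The only point that needs care, and the mild obstacle here, is that the cited multiplicativity holds for the field $\mathbb{F}$ under consideration. Only the inequality $\mathcal{H}_f(G\boxtimes H;\mathbb{F})\leq \mathcal{H}_f(G;\mathbb{F})\,\mathcal{H}_f(H;\mathbb{F})$ is needed, and it is elementary. Take a $d$-representation $M$ of $G$ and an $e$-representation $N$ of $H$, and form the blocks $M_{u,v}\otimes N_{x,y}$. This gives a $de$-representation of $G\boxtimes H$. Indeed, if $(u,x)\neq(v,y)$ are nonadjacent in $G\boxtimes H$, then either $u\neq v$ with $u\not\sim_G v$, or $x\neq y$ with $x\not\sim_H y$, so one of the two factor blocks vanishes. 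Its rank is $\operatorname{rank}_{\mathbb{F}}(M)\operatorname{rank}_{\mathbb{F}}(N)$, exactly as in the upper-bound part of the proof of Theorem~\ref{theorem: Multiplicativity of H_f under the lexicographic product}. Taking infima over $d,e,M,N$ gives the required inequality for every field, so the restriction to finite fields is not actually needed for this direction.
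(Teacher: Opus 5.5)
Your proof is correct and follows the paper's own argument. The first three inequalities are inherited from Theorem~\ref{theorem: capacity bounds of a lexicographic product}, and the last comes from applying Theorem~\ref{theorem: Shannon capacity is upper bounded by the fractional Haemers number} to $G\boxtimes H$ together with the multiplicativity of $\mathcal{H}_f$ under strong products from \cite[Theorem~3]{BukhC19}. Your Kronecker-product verification of $\mathcal{H}_f(G\boxtimes H;\mathbb{F})\leq \mathcal{H}_f(G;\mathbb{F})\,\mathcal{H}_f(H;\mathbb{F})$ makes this step self-contained, since only that direction is needed, and it also correctly shows that the bound holds over every field, not only finite ones.
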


\begin{example}
{\em Let \( G \) be the Schl\"afli graph, and let \(n \geq 1\)
be an integer. By Corollary~\ref{corollary: Shannon capacity of the lexicographic power of G},
since \( \alpha(G) = \Theta(G) = \vartheta(G) =3\), we have 
\[
\Theta(G^{\, \circ n}) = \Theta(G)^n = 3^n.
\]
Thus, the upper bound in
Theorem~\ref{theorem: capacity bounds of a lexicographic product}
is tight in this case. By contrast,
Theorem~\ref{theorem2: capacity bounds of a lexicographic product} gives
\[
\Theta(G^{\, \circ n})
\leq \mathcal{H}_f(G;\mathbb{R})^n
\leq \chi_f(\overline{G})^n
=\left(\frac{9}{2}\right)^n.
\]
Here, we have used the general inequality
\begin{align}
\label{eq68: 06.08.26}
\mathcal{H}_f(F;\mathbb{F})
\leq \chi_f(\overline{F}),
\end{align}
which holds for every finite simple graph \(F\) and every field
\(\mathbb{F}\), where $\chi_f(\cdot)$ denotes the fractional chromatic number 
of the graph. Moreover, since \(\overline{G}\) is vertex-transitive and \(\alpha(\overline{G})=\omega(G)=6\),
\begin{align}
\label{eq69: 06.08.26}
\chi_f(\overline{G})
=\frac{|V(G)|}{\alpha(\overline{G})}
=\frac{27}{6}
=\frac{9}{2}.
\end{align}

\noindent 
For the complement of the Schl\"afli graph \(\overline{G}\), the situation is reversed.
Theorem~\ref{theorem: capacity bounds of a lexicographic product} gives
\begin{align}
\label{eq5: 25.08.26}
\Theta(\overline{G}^{\, \circ n})
\leq \vartheta(\overline{G})^n
=9^n,
\end{align}
whereas Theorem~\ref{theorem2: capacity bounds of a lexicographic product}
gives, for every field \(\mathbb{F}\),
\begin{align}
\label{eq6: 25.08.26}
\Theta(\overline{G}^{\, \circ n})
\leq \mathcal{H}_f(\overline{G};\mathbb{F})^n
\leq 7^n.
\end{align}
Hence, for \(\overline{G}\),
Theorem~\ref{theorem2: capacity bounds of a lexicographic product}
provides a strictly sharper upper bound on the Shannon capacity of its
\(n\)-fold lexicographic power.

Consequently, the upper bounds in
Theorems~\ref{theorem: capacity bounds of a lexicographic product}
and~\ref{theorem2: capacity bounds of a lexicographic product}
are incomparable in general.}
\end{example}

\section{An open problem on graph-product capacities}
\label{section: an open problem}

In light of Theorems~\ref{theorem: capacity bounds of a lexicographic product}
and~\ref{theorem: strict supermultiplicativity of capacity},
it is natural to ask whether there exist finite simple graphs \(G\) and \(H\)
such that
\begin{equation}
\label{eq70: 06.08.26}
\max\bigl\{\Theta(G \circ H), \, \Theta(H \circ G)\bigr\} < \Theta(G \boxtimes H).
\end{equation}
Equivalently, can the Shannon capacity of the strong product of two finite
simple graphs be strictly larger than the capacities of both corresponding
ordered lexicographic products?

Propositions~\ref{proposition: independence numbers of squared lexicographic products}
and~\ref{proposition: explicit graphs satisfying strict graph complement condition}
suggest several natural candidates. Let \(S\), \(M\), and \(L\) denote,
respectively, the Schl\"{a}fli graph, the McLaughlin graph, and the second
subconstituent of the McLaughlin graph. These are vertex-transitive strongly
regular graphs with parameters
\begin{align}
S &= \operatorname{srg}(27,16,10,8),\\
M &= \operatorname{srg}(275,112,30,56),\\
L &= \operatorname{srg}(162,56,10,24).
\end{align}

If \(X\) is a vertex-transitive or strongly regular graph on \(n\) vertices,
then Theorem~3.26 of~\cite{Sason24} gives
\begin{align}
\label{eq1: 26.08.26}
\alpha(X\boxtimes\overline{X})
= \Theta(X\boxtimes\overline{X})
= \vartheta(X\boxtimes\overline{X})
= n.
\end{align}
In particular, \eqref{eq1: 26.08.26} holds for every \(X\in\{S,M,L\}\).
For each of these graphs, define
\begin{align}
b_X \coloneqq \sqrt{\lvert V(X)\rvert \, \alpha(X) \,\omega(X)}.
\end{align}
The relevant parameters and bounds are summarized in 
Table~\ref{table: parameters and capacity bounds for candidate graphs}. Here, \(U_X\) denotes 
the upper bound on \(\Theta(X) \, \Theta(\overline{X})\) established in
Proposition~\ref{proposition: explicit graphs satisfying strict graph complement condition}.
\begin{table}[ht]
\centering
\caption{\centering{Parameters and capacity bounds for the candidate graphs.}}
\label{table: parameters and capacity bounds for candidate graphs}
\renewcommand{\arraystretch}{1.35}
\begin{tabular}{|c|c|c|c|c|c|}
\hline
\(X\) & \(\lvert V(X)\rvert\) & \(\alpha(X)\) & \(\omega(X)\) & \(b_X\) & \(U_X\) \\
\hline \hline \(S\) & \(27\)  & \(3\)  & \(6\) & \(9\sqrt{6}\approx 22.045\) & \(21\)\\
\hline \(M\) & \(275\) & \(22\) & \(5\) & \(55\sqrt{10} \approx 173.925\) & \(115\)\\
\hline \(L\) & \(162\) & \(21\) & \(3\) & \(27\sqrt{14}\approx 101.025\) & \(99\)\\
\hline
\end{tabular}
\end{table}
Thus, for every \(X\in\{S,M,L\}\),
\begin{align}
\label{eq: product capacity bounds for candidates}
\Theta(X) \, \Theta(\overline{X}) \leq U_X < b_X.
\end{align}
On the other hand,
Proposition~\ref{proposition: independence numbers of squared lexicographic products}
and Theorem~\ref{theorem: capacity bounds of a lexicographic product} give
\begin{align}
b_X
&\leq \min \bigl\{\Theta(X\circ\overline{X}), \, 
                  \Theta(\overline{X}\circ X)\bigr\} \notag\\
&\leq \max\bigl\{\Theta(X\circ\overline{X}), \, 
                 \Theta(\overline{X}\circ X)\bigr\} \notag\\
&\leq \Theta(X \boxtimes \overline{X})
= \lvert V(X) \rvert.
\label{eq: candidate lexicographic and strong capacity bounds}
\end{align}
Consequently, the available bounds are
\begin{align}
9 \sqrt{6}
&\leq \min\bigl\{\Theta(S \circ \overline{S}), \, 
                 \Theta(\overline{S} \circ S)\bigr\} \notag\\
&\leq \max\bigl\{\Theta(S \circ \overline{S}), \, 
                 \Theta(\overline{S} \circ S)\bigr\}
\leq 27, 
\label{eq: Schlaefli candidate bounds} \\[0.1cm]
55 \sqrt{10}
&\leq \min\bigl\{\Theta(M \circ \overline{M}), \, 
                 \Theta(\overline{M} \circ M)\bigr\} \notag\\
&\leq \max\bigl\{\Theta(M \circ \overline{M}), \, 
                 \Theta(\overline{M} \circ M)\bigr\}
\leq 275,
\label{eq: McLaughlin candidate bounds}\\[0.1cm]
27 \sqrt{14}
&\leq \min\bigl\{\Theta(L \circ \overline{L}), \, 
                 \Theta(\overline{L} \circ L)\bigr\} \notag\\
&\leq \max\bigl\{\Theta(L \circ \overline{L}), \, 
                 \Theta(\overline{L} \circ L)\bigr\}
\leq 162.
\label{eq: second subconstituent candidate bounds}
\end{align}
It is currently unknown whether, for any \(X\in\{S,M,L\}\),
\[
\max\bigl\{\Theta(X \circ \overline{X}), \, 
           \Theta(\overline{X} \circ X) \bigr\}
<\lvert V(X) \rvert.
\]
A strict inequality in any one of these three cases would resolve 
the open problem posed in~\eqref{eq70: 06.08.26}.

It is also worth comparing these bounds with the one-shot independence
numbers. By the multiplicativity of the independence number under
lexicographic products,
\begin{equation}
\alpha(X \circ \overline{X})
= \alpha(\overline{X} \circ X)
= \alpha(X) \, \alpha(\overline{X})
= \alpha(X) \, \omega(X).
\end{equation}
Since
\( \alpha(X) \, \omega(X) < \lvert V(X)\rvert \)
for each \(X\in\{S,M,L\}\), we have
\[
\alpha(X) \, \omega(X)
< \sqrt{\lvert V(X)\rvert \,\alpha(X) \, \omega(X)} = b_X,
\]
and therefore
\begin{align}
\alpha(X \circ \overline{X})
= \alpha(\overline{X} \circ X)
< b_X
\leq \min\bigl\{\Theta(X \circ \overline{X}), \, 
                \Theta(\overline{X}\circ X)\bigr\}.
\label{eq: independence and capacity comparison for candidates}
\end{align}
Thus, for each of the three candidate graphs, the Shannon capacities of
\(X \circ \overline{X}\) and \(\overline{X} \circ X\) are both strictly
larger than their common independence number. By contrast,
\eqref{eq1: 26.08.26} shows that the Shannon capacity of
\(X \boxtimes \overline{X}\) is attained by its independence number.

More generally,
Theorem~\ref{theorem: strict supermultiplicativity of capacity} provides 
three countably infinite families of candidate graph--complement pairs. 
For \(X \in \{S,M,L\}\) and an integer \(m \geq 1\), let
\(X_m \coloneqq X^{\,\circ m}\), as in
Theorem~\ref{theorem: strict supermultiplicativity of capacity}.
Since lexicographic powers of vertex-transitive graphs are
vertex-transitive, it follows that \( X_m \) is vertex-transitive, and 
\begin{align}
\label{eq: strong capacities of candidate powers}
\Theta(X_m \boxtimes \overline{X_m})
= \lvert V(X_m)\rvert
= \lvert V(X)\rvert^m.
\end{align}
At the same time,
Theorem~\ref{theorem: strict supermultiplicativity of capacity} gives
\begin{align}
\min\bigl\{\Theta(X_m \circ \overline{X_m}), \, 
           \Theta(\overline{X_m} \circ X_m)\bigr\}
> \Theta(X_m) \, \Theta(\overline{X_m}).
\end{align}
It remains open whether, for some graph \(X \in \{S,M,L\}\) and
some integer \(m \geq 1\),
\begin{align}
\max\bigl\{ \Theta(X_m \circ \overline{X_m}), \, 
\Theta(\overline{X_m} \circ X_m) \bigr\}
&< \Theta(X_m \boxtimes \overline{X_m}).
\end{align}
A positive answer would resolve the open problem posed in
\eqref{eq70: 06.08.26}.

\end{document}